\documentclass[11pt]{article}
\usepackage{fullpage}
\usepackage{amssymb}
\usepackage{cancel}
\usepackage[normalem]{ulem}
\usepackage{braket}
\usepackage{ragged2e}
\usepackage{amsmath}
\usepackage{algorithm}
\usepackage[noend]{algpseudocode}
\usepackage{amsmath, amsthm}
\usepackage{hyperref}
\usepackage{cleveref}
\usepackage{soul,color}
\usepackage{makeidx}
\usepackage{amsfonts,latexsym,bbm,xspace,graphicx,float,mathtools,epigraph}
\usepackage{enumitem}
\usepackage{tikz}
\usepackage{booktabs}
\usepackage{tabularx}
\usepackage{array}
\usepackage{amsmath}

\newcolumntype{Y}{>{\raggedright\arraybackslash}X}
\usetikzlibrary{positioning, arrows.meta}
\makeindex

\newif\ifnotes\notestrue

\newtheorem{theorem}{Theorem}
\newtheorem{proposition}{Proposition}
\newtheorem{lemma}{Lemma}
\newtheorem{claim}{Claim}
\newtheorem{remark}{Remark}

\newtheorem{corollary}{Corollary}
\newtheorem{definition}{Definition}

\newtheorem{assumption}{Assumption}

\newcommand{\theoremname}{testing}

\newtheorem{note}{Note}

\def\norm#1{\mathopen\| #1 \mathclose\|}		

\makeatletter
\def\Bigbar#1{\mathrel{\left|\vphantom{#1}\right.\n@space}}

\makeatother

\renewcommand{\Pr}{\mathop{\bf Pr\/}}

\newcommand{\poly}{\mathrm{poly}}

\newcommand{\dist}{\mathrm{dist}}

\newcommand{\F}{\mathbbm F}

\newcommand{\NP}{\mathsf{NP}} 

\ifnotes
\usepackage{xcolor}
\definecolor{mygrey}{gray}{0.50}
\newcommand{\notename}[3]{{\textcolor{#3}{\footnotesize{\bf (#1:} {#2}{\bf ) }}}}

\else

\newcommand{\notename}[3]{{}}

\fi

\usepackage{stmaryrd}
\renewcommand{\th}{\widetilde{h}}
\newcommand{\DP}{\ensuremath{\textrm{DP}}}
\newcommand{\QDP}{\ensuremath{\textrm{QDP}}}

\newcommand{\hf}{\widetilde{f}}
\newtheorem{problem}{Problem}
\newcommand{\C}{\mathcal{C}}
\newcommand{\mat}[1]{\ensuremath{\boldsymbol{#1}}}
\newcommand{\cv}{{\mat{c}}}
\newcommand{\Unif}{\leftarrow}
\newcommand{\ev}{\mat{e}}

\newcommand{\Hm}{\ensuremath{\mathbf{H}}}
\newcommand{\uv}{\mat{u}}
\newcommand{\yv}{{\mat{y}}}
\newcommand{\RS}{\ensuremath{\textrm{RS}\,}}

\newcommand{\Iint}[2]{\llbracket #1 , #2 \rrbracket}
\newcommand{\hu}{\widehat{u}}
\newcommand{\Gm}{\ensuremath{\mathbf{G}}}
\renewcommand{\aa}{\mathcal{A}}
\newcommand{\Time}{\mathrm{Time}}

\newcommand{\BDD}{\ensuremath{\mathrm{BDD}}}
\newcommand{\IBDD}{\ensuremath{\mathrm{IBDD}}}
\newcommand{\ketbra}[2]{|#1\rangle\langle#2|}
\newcommand{\altketbra}[1]{\ketbra{#1}{#1}}
\newcommand{\kb}[1]{\altketbra{#1}}
\renewcommand{\uv}{\ensuremath{\mathbf{u}}}
\newcommand{\sv}{\ensuremath{\mathbf{s}}}
\newcommand{\xv}{\ensuremath{\mathbf{x}}}
\renewcommand{\hf}{\widehat{f}}
\newcommand{\zv}{\ensuremath{\mathbf{z}}}

\newcommand{\COMMENT}[1]{}

\newcommand{\SIS}{\ensuremath{\mathrm{SIS}}}
\newcommand{\DLOG}{\ensuremath{\mathrm{DLOG}}}

\begin{document} 
	\title{Regev's reduction as a candidate quantum algorithm for the discrete logarithm problem in finite abelian groups}
	\author{M. Isabel Franco Garrido\thanks{Institute for Quantum Information and Matter, California Institute of Technology. mfrancog@caltech.edu } , \quad
		André Chailloux\thanks{Inria de Paris. andre.chailloux@inria.fr} }

	\date{}
	\maketitle
	
	\begin{abstract}
		We revisit the reduction of Cheng and Wan, which transforms instances of the discrete logarithm problem (DLOG) over finite fields into a decoding problem for Reed--Solomon codes, and study how Regev's reduction can be used to solve these instances. Regev's reduction turns a decoder for a code into a quantum solver for a decoding problem on the dual code. The quantum advantage depends on the dual problem being classically hard, which has proven difficult to establish. The Cheng--Wan reduction offers a natural source of such instances: solving them would solve discrete logarithm. Since Shor's algorithm already solves discrete logarithm, the goal is not a new quantum speedup but to understand whether Regev's reduction, applied to a problem we have independent reasons to believe is hard, can solve discrete logarithm, and if not, where it falls short.
		
	We generalize the hardness consequence of the Cheng--Wan reduction for Reed--Solomon bounded distance decoding -- from solving DLOG in $\F_{q^h}^\times$ to solving DLOG in finite abelian groups, and we prove that bounded distance decoding for Reed--Solomon codes is $\NP$-hard even at asymptotically zero rate, though the known $\NP$-hard radius lies well above the Cheng--Wan decoding radius. We then carry out Regev's reduction on the Cheng--Wan instances and evaluate it with known efficient decoders. All fall short of the Cheng--Wan threshold by a constant factor, and under an assumption on the Cheng--Wan instances we identify the QDP parameter a decoder would need to reach in order to solve discrete logarithm. The obstruction is one of efficiency rather than solvability: the Pretty Good Measurement solves the corresponding decoding problem on every instance, including $\NP$-hard instances, but its implementation requires exponential resources in general.
	\end{abstract}
	\newpage
	\tableofcontents
	\newpage

	\section{Introduction}
	
	\subsection{Regev's reduction and its code-based incarnation}
	Regev's reduction~\cite{Reg05} is a quantum reduction that turns an efficient decoder for a code (or lattice) into an efficient solver for a problem on the dual side, via the quantum Fourier transform. The reduction has had several formulations~\cite{SSTX09,BKSW18} and has also been extended to code-based problems~\cite{DRT23}. In the code-based setting, it can be pictured as follows:
	
	\begin{figure}[H]
		\centering
		\begin{tikzpicture}
			\node[draw, rectangle, minimum width=3.4cm, minimum height=1.4cm, align=center] (left)
			{An algorithm for a Decoding Problem \\ for a code $\C$};
			\node[draw, rectangle, minimum width=3.8cm, minimum height=1.4cm, align=center,
			right=2.5cm of left] (right)
			{A quantum algorithm for\\ $\BDD$ on the dual code $\C^\perp$};
			\draw[double, double equal sign distance, -Implies, thick] (left) -- (right);
		\end{tikzpicture}
		\caption{Regev's reduction in the code-based setting.}
		\label{fig:regev-reduction}
	\end{figure}

	More precisely, Regev's reduction gives a quantum algorithm for the following type of problem: given a received word $\yv_0$, find a codeword $\cv \in \C^\perp$ within Hamming distance $t$ of $\yv_0$. This is the Bounded Distance Decoding ($\BDD$) problem in the dual code. The reduction shows that an efficient quantum decoder for $\C$ translates, via the quantum Fourier transform, into an efficient solver for the $\BDD$ problem in $\C^\perp$.
	
	There are codes for which efficient decoders exist but for which the corresponding $\BDD$ problem is potentially hard for classical computers. Chen, Liu and Zhandry~\cite{CLZ22} initiated this line of work with a quantum algorithm for $\SIS_\infty$, a problem from lattice-based cryptography, at parameters for which no classical algorithm was known. More recently, the Decoded Quantum Interferometry (DQI) framework of Jordan et al.~\cite{jordan2025optimizationdecodedquantuminterferometry} applied the same reduction to optimization problems, obtaining a heuristic quantum speedup for certain instances of max-XORSAT and Optimal Polynomial Interpolation (OPI). The $\BDD$ problem we study on Reed--Solomon codes can itself be viewed as a max-LinSAT instance, specialized to the case of Vandermonde constraint matrices.
	
	Chailloux and Tillich~\cite{CT24,chailloux2024quantumadvantagesoftdecoders, chailloux2025opixsoftdecoders} refined this framework by incorporating soft decoding techniques. They showed that it suffices to solve a quantum version of the decoding problem, the Quantum Decoding Problem ($\QDP$): rather than decoding a single classical noisy codeword, the algorithm receives a quantum superposition over the noise and must identify the underlying codeword. This leads to the refined picture:
	
	\begin{figure}[H]
		\centering
		\begin{tikzpicture}
			\node[draw, rectangle, minimum width=3.4cm, minimum height=1.4cm, align=center] (left)
			{A quantum algorithm for \\ $\QDP$ on $\C$};
			\node[draw, rectangle, minimum width=3.8cm, minimum height=1.4cm, align=center,
			right=2.5cm of left] (right)
			{A quantum algorithm for\\ $\BDD$ on the dual code $\C^\perp$};
			\draw[double, double equal sign distance, -Implies, thick] (left) -- (right);
		\end{tikzpicture}
		\caption{Regev's reduction in the code-based setting with $\QDP$.}
		\label{fig:regev-reduction2}
	\end{figure}

	A fundamental difficulty remains: with the exception of~\cite{YZ24}, we have no proof that the $\BDD$ instances produced by Regev's reduction are hard for classical computers. Recently, efficient classical algorithms were found in~\cite{II24,KOW25} for the original $\SIS_\infty$ problem of~\cite{CLZ22}, which shows that the classical hardness of these problems is not well understood. It is therefore of interest to find problems that fit naturally into Regev's reduction and that we have independent reasons to believe are classically hard.
	
	\subsection{Cheng--Wan meets Regev}
	
	We propose discrete logarithm, via the reduction of Cheng and Wan~\cite{cheng2007list}, as a candidate such problem. Cheng and Wan showed that the discrete logarithm problem in $\F_{q^h}^\times$ reduces classically to solving a $\BDD$ instance on low-rate Reed--Solomon codes. Reed--Solomon codes behave well under Regev's reduction because their dual is again a (generalized) Reed--Solomon code. The Cheng--Wan $\BDD$ instances are at least as hard as discrete logarithm, so they are unlikely to admit efficient classical algorithms.
	
	Since Shor's algorithm~\cite{shor1997polynomial} already solves discrete logarithm efficiently, we do not claim a new quantum speedup for $\DLOG$. The interest is elsewhere: the $\BDD$ instances produced by the Cheng--Wan reduction are strictly harder than discrete logarithm (solving them yields a $\DLOG$ algorithm, but not conversely), so they form one of the few natural candidates for Regev's reduction whose classical hardness rests on a standard cryptographic assumption. Algorithms in the DQI/Regev framework are also structurally different from Shor's and may lead to different resource tradeoffs, which is relevant in view of the concrete cost estimates for Shor's algorithm~\cite{khattar2025verifiable}. One caveat is that previous work already showed that Regev's reduction does not give interesting quantum advantages in the Hamming metric~\cite{CT24}, and the Cheng--Wan setting is in the Hamming metric; part of what we do is make this obstruction precise in the Reed--Solomon case.
	
	In this paper we carry out a detailed study of Regev's reduction for Reed--Solomon codes in the Hamming metric, work out what decoding thresholds are needed, and show where current methods fall short. We consider the best classical list decoders as well as quantum algorithms based on Unambiguous State Discrimination. The latter outperform the former, in contrast with the OPI setting; the difference is that our $\BDD$ instance is in the Hamming metric while OPI puts set constraints on each coordinate. None of these decoders reach the threshold required by the Cheng--Wan reduction.
	
	We also show that the Pretty Good Measurement, applied to the states arising in Regev's reduction, solves the corresponding $\BDD$ problem unconditionally, including the Cheng--Wan instances and $\NP$-hard instances. The PGM requires exponential resources to implement in general, so this is an existence result rather than an efficient algorithm, but it shows that the remaining obstacle is one of finding an efficient decoder.
	
	Finally, we identify, under a conjecture on the Cheng--Wan starting points, what decoding performance would be needed to solve discrete logarithm this way.

	\subsection{Contributions}
	
	Our first contribution is to generalize the Cheng--Wan reduction beyond finite extension fields. The original reduction of~\cite{cheng2007list} works for $\DLOG$ in $\F_{q^h}^\times$; we extend this appropriately to arbitrary finite abelian groups. This shows that the $\BDD$ problem we are trying to solve is harder than $\DLOG$ in any group, not just in extension fields. Here $\RS[n,k]_q$ denotes the Reed--Solomon code of length $n$ and dimension $k$ over $\F_q$, $\BDD(\C, t)$ is the problem of finding a codeword of $\C$ within Hamming distance $t$ of a given received word, and $\DLOG_C$ denotes a discrete logarithm in cyclic group $C$, with arbitrary input. Formal definitions are given in Section~\ref{sec:preliminaries}.
	
	\begin{theorem}[Theorem~\ref{thm:abelian-DLOG-BDD}, informal]
		Let $G$ be a finite abelian group with a hard cyclic factor $C \le \F_{q^h}^\times$, where $1 \le h \le q^{1/4} - 2$. Any algorithm solving $\BDD(\RS[q, 3h+4]_q, q - 4h - 4)$ solves $\DLOG_C$ in time $\tilde{O}(q)$.
	\end{theorem}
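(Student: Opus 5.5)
The plan is to reduce $\DLOG_C$ to discrete logarithms in the full group $\F_{q^h}^\times$, and then to run an index-calculus algorithm whose relation-finding step is a call to the $\BDD$ oracle on $\RS[q,3h+4]_q$, following Cheng and Wan.

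\textbf{Setup.} Fix a monic irreducible $f\in\F_q[x]$ of degree $h$, so that $\F_{q^h}\cong\F_q[x]/(f)$. Take the factor base
\[
\mathcal B=\F_q^\times\cup\{x-a : a\in\F_q\}.
\]

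\textbf{Key encoding step (relation finding).} Let $u(x)$ be a polynomial of degree $<h$ representing a target element. A relation $c\,u\equiv\prod_{a\in S}(x-a)\pmod f$ with $|S|=s$ and $c\in\F_q^\times$ is equivalent to
\[
\prod_{a\in S}(x-a)-c\,u(x)=f(x)\,v(x),\qquad \deg v=s-h .
\]
Evaluating at $a\in S$ gives $v(a)=-c\,u(a)/f(a)$. Here $f(a)\neq0$ because $f$ is irreducible of degree $h\ge 2$; the degenerate case $h=1$ is handled directly.

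So the word $\yv=(-u(a)/f(a))_{a\in\F_q}$ agrees with the codeword $c^{-1}v$ of a Reed--Solomon code of dimension about $s-h+1$ on at least $s$ positions. I choose $s$ so that the dimension is $3h+4$, which makes the error budget at most $q-4h-4$; a slack of one degree can absorb the scalar $c$. Conversely, suppose the oracle returns a codeword $w$ with at least $q-t$ agreements. Then $u+f\,w$ vanishes on the agreement set. If the degrees match, it factors completely into distinct linear factors, which can be checked and read off in $\tilde O(q)$ time. This gives a relation among $u$, $f$ and the elements of $\mathcal B$.

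\textbf{Existence (main obstacle).} We need that for a random target $u=\beta g^r$, a complete splitting of the required shape exists with probability $\Omega(1)$. This is the Cheng--Wan counting argument. It estimates, via Weil-type character sum bounds (Wan's bound for sums over $\F_q[x]/(f)$), the number of subsets $S$ of size $s$ with $\prod_{a\in S}(x-a)\in c\,u\,(\F_q[x]/(f))$. The error term is controlled exactly when $h\le q^{1/4}-2$ and $s\approx 4h+4$, which is where the constants in the statement come from. I expect to invoke this essentially verbatim, re-verifying the parameter matching $s-h+1\leftrightarrow 3h+4$. The delicate part is showing that the oracle's output yields a \emph{usable} relation even when several codewords lie within distance $t$. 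Any returned codeword gives a valid relation, so the real requirement is that the relations obtained across random $r$ are sufficiently independent.

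\textbf{Index calculus.} Query with $u=g^{r}$ for random $r$ to collect $q+O(1)$ relations. Solve the resulting sparse linear system modulo the group order $q^h-1$, or modulo the relevant prime-power factors, to get the logs of $\mathcal B$. Then run one more relation with $u=\beta g^r$ to obtain $\log_g\beta$. The rank of the relation matrix follows from the near-uniformity of the relations given by the character sum estimate. Each query is one oracle call plus $\tilde O(q)$ arithmetic, and the linear algebra is $\tilde O(q)$ up to the factor hidden in the tilde.

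\textbf{Transfer to $G$.} Given the hard cyclic factor $C=\langle \gamma\rangle\le\F_{q^h}^\times$, an instance $\delta=\gamma^z$ of $\DLOG_C$ is solved by computing $\log_g\gamma$ and $\log_g\delta$ in $\F_{q^h}^\times$. Writing $\gamma=g^{(q^h-1)/|C|\cdot e}$ with $\gcd(e,|C|)=1$, we recover $z\bmod |C|$ by dividing out $(q^h-1)/|C|$ and inverting $e$ modulo $|C|$. The abelian-group statement then only needs that $\DLOG$ in $G$ restricted to the factor $C$ is this instance, via the projection onto $C$ given in the formal setup of Theorem~\ref{thm:abelian-DLOG-BDD}.
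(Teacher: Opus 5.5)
Your architecture is the paper's: reduce to a cyclic factor inside $\F_{q^h}^\times$, turn each target into a Reed--Solomon received word as in Cheng--Wan, collect relations, and do linear algebra. Two steps, however, fail as written.

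\textbf{The encoding is off by one.} If the scalar $c\in\F_q^\times$ is left free, $c^{-1}v$ has degree $s-h$ with an unknown leading coefficient, so it lies in a code of dimension $s-h+1$. Making this $3h+4$ forces $s=4h+3$ agreements, i.e.\ distance $q-4h-3$. That is one more than the oracle's radius $q-4h-4$, so the promise is not met. Also, the existence statement the paper uses is for $|A|=4h+4$.

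The paper instead fixes $c=1$. With $f$ monic, $v$ is monic of degree $3h+4$, so one decodes the shifted word $y_a=-u(a)/f(a)-a^{3h+4}$. This word is within distance $q-4h-4$ of $v-x^{3h+4}\in\RS[q,3h+4]_q$. Conversely, any returned codeword $w$ makes $u+(w+x^{3h+4})f$ monic of degree $4h+4$ with at least $4h+4$ roots. It therefore splits completely, with no ``if the degrees match'' caveat.

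\textbf{The rank argument is missing (the more serious gap).} The statement quantifies over an arbitrary $\BDD$ algorithm. A target typically has many representations $\prod_{a\in A}(x-a)$, hence many codewords within the radius. The algorithm may choose among them so that every returned relation vector lies in a fixed proper subspace, for instance by never using some $x-a_0$; representations avoiding $a_0$ still exist by the same counting. Uniformity of $g^r$ and the character-sum estimate control how many representations exist, not which one is returned. So your $q+O(1)$ relations need not have full rank, and nothing in your argument places the final relation for $\beta g^r$ in their span.

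The paper handles this with the Pomerance spanning lemma (Lemma~\ref{lem:pomerance-relations}, used in Corollary~\ref{cor:cheng-wan-augmented-relations}). Besides the ordinary queries, one also queries augmented targets $g^j/(x-a)$ for each factor-base element. These targets are again uniform, so the decoder's outputs follow the same fixed distribution. The resulting vectors $e_a+w$ (with $e_a$ the unit vector of $x-a$), together with the ordinary ones, span the relation space with probability at least $1-1/(2d)$ after $O(q\log q)$ samples, whatever the decoder does. This is also where the $\tilde O(q)$ bound comes from.

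Your passage through a generator $g$ of all of $\F_{q^h}^\times$, followed by division by $(q^h-1)/|C|$, is fine. It is in fact more careful than taking logarithms of $x-a$ to base $\mathbf b$ when $C$ is a proper subgroup. Note, though, that certifying a generator needs the factorization of $q^h-1$, so choose $g$ at random and check $\gamma^z=\delta$ at the end.
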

	
	We then ask where the worst-case hardness of the target $\BDD$ problem begins. Gandikota, Ghazi and Grigorescu~\cite{GandikotaGhaziGrigorescu2018} proved $\NP$-hardness of $\BDD$ for Reed--Solomon codes at constant rate; we extend their result to the low-rate regime relevant to the Cheng--Wan reduction. This regime is not itself known to be $\NP$-hard: $\NP$-hardness sits at a substantially larger decoding radius than the one we need, which matches the fact that the Cheng--Wan instances reduce from $\DLOG$ rather than from a worst-case $\NP$-hard problem.
	
	\begin{theorem}[Theorem~\ref{thm:ggg18}, informal]
		$\BDD(\RS[n,k]_q, n - k - d)$ is $\NP$-complete for every $1 \le d \le O(\log n / \log \log n)$, even when $k = o(n)$.
	\end{theorem}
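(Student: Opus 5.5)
We follow the strategy of Gandikota, Ghazi and Grigorescu~\cite{GandikotaGhaziGrigorescu2018} and tune the parameters so the rate can tend to zero. Membership in $\NP$ is immediate: a certificate is the message polynomial $f$ with $\deg f<k$, and we check in polynomial time that its evaluation vector agrees with the received word $\yv$ on at least $k+d$ coordinates. For hardness we reduce from a variant of subset sum over $\F_q$. Given evaluation points $\alpha_1,\dots,\alpha_n$ and a target $b$, the question is whether some $S\subseteq\{1,\dots,n\}$ with $|S|=m$ satisfies $\sum_{i\in S}\alpha_i^j=b_j$ for $j=1,\dots,d$. This ``moment subset sum'' problem is $\NP$-hard over $\F_{2^\ell}$ or prime fields whenever $d=O(\log n/\log\log n)$; we take this from~\cite{GandikotaGhaziGrigorescu2018}, re-deriving it if necessary by their encoding of a 3-dimensional matching or subset-sum instance through Newton's identities.

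The core encoding is as follows. A codeword of $\RS[n,k]_q$ agrees with $\yv$ on a set $T$ of size $k+d$ exactly when the interpolant of $\yv$ restricted to $T$ has degree $<k$. Equivalently, the $d$ top coefficients of the Lagrange interpolant vanish. We set $y_i=\alpha_i^{k+d}$. The interpolant is then $x^{k+d}-\prod_{i\in T}(x-\alpha_i)$, whose top coefficients are, up to sign, the elementary symmetric polynomials $e_1(T),\dots,e_d(T)$. The condition ``distance $\le n-k-d$'' therefore becomes ``there is a $(k+d)$-subset $T$ with $e_1(T),\dots,e_d(T)$ equal to prescribed values''. Using general received words $y_i=\alpha_i^{k+d}+g(\alpha_i)$ with $\deg g\le k+d-1$ shifts these targets to arbitrary values. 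Newton's identities convert the elementary symmetric targets into power-sum targets, which is valid because $d<\mathrm{char}(\F_q)$, or by handling characteristic $2$ as in~\cite{GandikotaGhaziGrigorescu2018}. This gives a reduction from moment subset sum with subset size $m=k+d$.

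The new ingredient is the low-rate regime. The source instance has subset size $m$ comparable to the number of elements $N$, which forces $k=\Theta(n)$. To get $k=o(n)$ we pad. We add many dummy evaluation points $\beta_1,\dots,\beta_M$ with $M\gg N$, so that $n=N+M$ and $k\approx m=o(n)$. The padding must not create spurious solutions. We choose the $\beta$'s and the received values on them so that any $(k+d)$-subset meeting the moment constraints must avoid the dummies. One option is to take $\F_q$ as an extension field and choose the $\beta$'s so that their power sums lie outside the subfield spanned by the original data. A cleaner option is to complement: pick the instance to use the co-subset, and then add dummy points whose full power sums vanish, such as a full multiplicative subgroup or coset, whose power sums $\sum\beta^j$ are $0$ for $j<|$coset$|$. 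These can be included or excluded freely only as a whole block.

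I expect this padding step to be the main obstacle: keeping $d$ at $O(\log n/\log\log n)$ in the \emph{new} length $n$ while guaranteeing soundness, i.e. no solution that uses dummy points. My first attempt will be to take $q$ of characteristic $p>d$ and dummies forming a coset $\gamma H$ of a subgroup $H$ with $|H|>d$. Because the subgroup power sums vanish, any solution's moment profile is determined by its non-dummy part plus partial-coset contributions. I will then show by a counting or linear-independence argument over the extension that partial-coset contributions cannot cancel the original data's moments. Since $n$ grows only polynomially, $\log n/\log\log n$ changes by a constant factor and the range of $d$ is preserved.
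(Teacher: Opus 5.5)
\textbf{The gap: the padding step is not shown to be sound.} Your NP-membership argument and your core encoding are fine. That encoding is: received word $y_i=\alpha_i^{k+d}+g(\alpha_i)$; an agreement set $T$ of size $k+d$ exists iff $e_1(T),\dots,e_d(T)$ take prescribed values; then Newton's identities, using $d<\mathrm{char}\,\F_q$. It is a variant of the Gandikota--Ghazi--Grigorescu step that the paper also invokes. The problem is the only genuinely new part of the statement, the padding that makes $k=o(n)$. You never prove that no $(k+d)$-subset meeting the moment constraints uses a dummy point, and the coset construction you favour does not guarantee it.

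Nothing forces a solution to use the dummies in whole blocks. Suppose some coset $\delta H'$ of a subgroup with $d<|H'|\le k+d$ lies among the dummies, for example a sub-coset of $\gamma H$, or one of your cosets itself if $|H|\le k+d$. Its power sums of orders $1,\dots,d$ all vanish. So $S_0\cup\delta H'$ is a valid agreement set whenever the source instance has a subset $S_0$ of the smaller size $m-|H'|$ with the target moments, and MSS does not exclude this. When no such coset fits, arbitrary partial subsets of the coset are still uncontrolled.

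The ``complement'' variant goes the wrong way. If the intended agreement set contains the dummy block, then $k+d\ge|D|$, and the rate tends to one, not zero. The ``counting or linear-independence argument'' you defer is exactly the missing proof.

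\textbf{How the paper avoids this.} The paper pads over $\mathbb Z$, before moving to a finite field, so that size comparisons are available. Start from an integer MSS instance with $|A|=N$, subset size $k=N/2$ and $U=\max_{a\in A}|a|$. Add $D=\{R,\dots,R+M-1\}$ with $R=|m_1|+kU+1$ and $M=N^4-N$. Any size-$k$ subset containing some $x\in D$ has first moment at least $x-(k-1)U\ge |m_1|+U+1>m_1$, because the other dummies are positive. So the first moment alone excludes every dummy. The entries stay of magnitude $10^{\mathrm{poly}(n,d!)\cdot O(dN)}$. The instance is then moved to a field of characteristic $p=2^{\mathrm{poly}(N)}$ without wrap-around. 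The SSS-to-RS step then gives length $N^4+1$ and dimension $N/2-d+1$.

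\textbf{Rescuing your field-level option.} Your option (a) can be made to work, but it needs a stronger property. Every \emph{nonempty subset sum} of the dummies, not just each dummy's own power sums, must have a nonzero component outside the subfield containing the original data. For example, keep the original data and targets in $\F_p$. Take $\gamma\in\F_{p^2}\setminus\F_p$, and dummies $\beta_i=c_i\gamma$ with distinct positive integers $c_i$ and $p>\sum_i c_i$. The $\gamma$-coordinate of the first moment of $T$ is then $\sum_{\beta_i\in T}c_i\not\equiv 0 \pmod p$ whenever $T$ contains a dummy.
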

	
	The main study of the paper is then to apply Regev's reduction to Reed--Solomon codes in the Hamming metric and understand what it can achieve. We give a generic reduction following~\cite{chailloux2025opixsoftdecoders}. This reduction targets the inhomogeneous variant of the $\BDD$ problem, where we want to find a codeword at distance $t$ of a given \emph{random} received word. The core of the analysis is the following theorem.
	
	\begin{theorem}[Theorem~\ref{thm:RS-instantiation}, informal]
		If there exists an efficient quantum algorithm for $\QDP(\RS[q, q-k]_q, \tau)$ with success probability $P_{\mathrm{Dec}}$, then there exists an efficient quantum algorithm for \newline $\IBDD(\RS[q, k]_q, \tau' q)$ with success probability close to $P_{\mathrm{Dec}}$, where
		\[
		\tau' = \tau^\perp\!\left(1 + o(1)\right), \qquad \tau^\perp = \frac{1}{q}\left(\sqrt{(q-1)(1-\tau)} - \sqrt{\tau}\right)^2.
		\]
	\end{theorem}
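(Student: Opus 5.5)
The plan is to instantiate the generic Regev reduction of \cite{chailloux2025opixsoftdecoders} (the code-based version with $\QDP$) with $\C = \RS[q,k]_q$. Its dual is a generalized Reed--Solomon code of length $q$ and dimension $q-k$. Since the column multipliers are known and nonzero, $\QDP$ on this GRS code is equivalent to $\QDP$ on $\RS[q,q-k]_q$: rescale each coordinate by the known multiplier, which maps a product noise distribution to itself.

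The reduction starts from a state $\sum_{\ev} \widehat{f}(\ev)\ket{\ev}$, where $f$ is a product amplitude on $\F_q^q$. We take, coordinatewise, $f(0)=\alpha$ and $f(x)=\beta$ for $x\neq 0$, with $|\alpha|^2+(q-1)|\beta|^2 = 1$, so that after the QFT the weight concentrates on a Hamming sphere of radius $\tau' q$. The steps are as follows.
\begin{enumerate}[label=(\roman*)]
\item Prepare $\sum_{\cv\in\C}\sum_{\ev} f(\ev)\ket{\cv+\ev}$, coherently tagged by the random target $\yv_0$ through a phase $\omega^{\mathrm{tr}(\langle \yv_0,\cdot\rangle)}$.
\item Apply the QFT, which turns the superposition over $\C+\ev$ into a superposition over the dual code $\C^\perp$ carrying noise $\widehat{f}$.
\item Run the $\QDP$ solver to erase the codeword register.
\item Apply the inverse QFT and measure, obtaining a word $\ev$ such that $\yv_0-\ev\in\C$ with $|\ev|\approx \tau' q$.
\end{enumerate}
By the chosen convention the $\QDP$ noise parameter is $\tau = 1-|\widehat{f}(0)|^2$ per coordinate, while the BDD radius comes from $|f|^2$. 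The relation between the two is a one-coordinate Fourier computation. Write $\widehat{f}(0)=(\alpha+(q-1)\beta)/\sqrt q$ and $\widehat{f}(y)=(\alpha-\beta)/\sqrt q$ for $y\neq0$. Then set $1-\tau=|\widehat f(0)|^2$ and solve for $|\beta|^2(q-1)=\tau^\perp$. Optimizing the sign and phase so that the noise probability is as large as allowed gives exactly $\tau^\perp=\frac1q\bigl(\sqrt{(q-1)(1-\tau)}-\sqrt{\tau}\bigr)^2$, the same Fourier-dual relation as in the Hamming-metric analysis of \cite{CT24}.

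Next I control the error terms. First, the success probability degrades from $P_{\mathrm{Dec}}$ only by the overlap error between the ideal post-decoding state and the product state. The standard argument is a triangle inequality on the norm, using that the $\QDP$ solver succeeds with probability $P_{\mathrm{Dec}}$ on average over the codeword. Second, the measured error weight is a sum of $q$ i.i.d.\ Bernoulli$(\tau^\perp)$ variables. By a Chernoff bound its weight is $\tau^\perp q(1+o(1))$ with probability $1-o(1)$. Choosing an additive slack $\sqrt{q\log q}$ gives $\tau'=\tau^\perp(1+o(1))$ whenever $\tau^\perp q\to\infty$. In the IBDD formulation, a word at distance at most the bound suffices.

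The main obstacle I expect is the first error term: the normalization of the coset superposition. The amplitudes $\sum_{\cv}f(\yv-\cv)$ are not exactly uniform over $\yv$, so the state prepared in step (i) is only close to the ideal one. I would bound the deviation via the Poisson summation formula, $\sum_{\cv\in\C}|f|^2$-shifts versus $\sum_{\cv^\perp}\widehat{|f|^2}$, and show that the dual-side contribution from nonzero codewords of $\C^\perp$ is negligible. This uses the weight distribution of MDS codes (all nonzero weights are at least $k+1$) together with the decay $|\widehat{f}(y)|^2/|\widehat{f}(0)|^2<1$. If that estimate is delicate at low rate, I would fall back to the averaged statement over random $\yv_0$, which only needs the expectation, and that equals $P_{\mathrm{Dec}}$ exactly.
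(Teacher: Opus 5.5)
Your skeleton is the paper's: instantiate Theorem~\ref{Theorem:Main} with the product amplitude $u^{\otimes q}$, note that the noise state is a product state, compute $\widehat{u}$ on one coordinate, and apply a Chernoff bound to the $q$ independent indicators $\mathbf{1}[y_i \neq 0]$. There is nothing to optimize over phases: $\QDP(\C,\tau)$ fixes $u$ with nonnegative real amplitudes, your solver is for exactly that state, and so $|\widehat{u}(0)|^2 = 1-\tau^\perp$ is forced.

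The step that is wrong as written is the concentration bound. An additive slack $\sqrt{q\log q}$ gives $\tau' = \tau^\perp + \sqrt{\log q/q}$. This equals $\tau^\perp(1+o(1))$ only when $\tau^\perp q = \omega(\sqrt{q\log q})$, not whenever $\tau^\perp q \to \infty$; take $\tau^\perp q = \log^2 q$ for a counterexample. The slack has to be sublinear in the mean itself. The paper uses the multiplicative Chernoff bound with $\delta = (\tau^\perp q)^{-1/3}$, which gives slack $(\tau^\perp q)^{2/3}$ and $\eta \le \exp\bigl(-(\tau^\perp q)^{1/3}/3\bigr)$. Also, ``probability $1-o(1)$'' does not give success probability close to $P_{\mathrm{Dec}}$. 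The bound of Theorem~\ref{Theorem:Main} loses a relative $2\sqrt{\eta/P_{\mathrm{Dec}}}$, so you need $\eta = o(P_{\mathrm{Dec}})$, which matters when $P_{\mathrm{Dec}} = 1/\poly(q)$.

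The ``main obstacle'' is not one, and your primary plan for it cannot work. Use the paper's convention, with $f$ the QDP-side amplitude and $\widehat{f}$ the BDD-side one. The reduction of Proposition~\ref{prop:perfect-QDP} proceeds as follows:
\begin{enumerate}
\item it prepares the exactly normalized state $\sum_{\ev} f(\ev)\ket{\ev} \otimes q^{-(n-k)/2}\sum_{\sv}\overline{\chi_{\sv}(\uv)}\ket{\sv}$;
\item it adds $\sv\Hm$ to the first register;
\item the solver erases $\ket{\sv}$ from $\ket{\psi_{\sv}}\ket{\sv}$;
\item only then is a single Fourier transform applied.
\end{enumerate}
No Fourier transform precedes decoding, contrary to your steps (i)--(iii).

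Non-uniformity enters only through the norms of the final coset states, i.e.\ the masses of cosets of $\RS[q,k]_q$ under $|\widehat{f}|^2$ (your $|f|^2$). These masses are genuinely far from uniform. The coset through $\mathbf{0}$ has mass at least $(1-\tau^\perp)^q$, which exceeds the average $q^{-(q-k)}$ by the factor $\bigl(q(1-\tau^\perp)\bigr)^q q^{-k}$. At the Cheng--Wan target $\tau \approx 4h/q$ one has $q(1-\tau^\perp) \approx (1+2\sqrt{h})^2$, while $k = 3h+4 = O(q^{1/4})$, so this factor is astronomically large. Hence the nonzero dual codewords in your Poisson sum are not negligible, whatever weight-distribution argument you use.

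What makes the reduction work is exactly your fallback. Theorem~\ref{Theorem:Main} is stated for $\IBDD$, the uniform $\yv_0$ averages over cosets, and its condition~3 only involves the unconditioned mass $\sum_{\yv \in T}|\widehat{f}(\yv)|^2$. The paper's proof consists of nothing more than checking conditions~2 and~3 and invoking that theorem.
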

	
	We instantiate this with the standard efficient decoders: Berlekamp--Welch~\cite{Shum2016BMcRS}, Guruswami--Sudan~\cite{mcelieceguruswami}, and unambiguous state discrimination~\cite{CT24}. Table~\ref{Table:1I} summarizes the $\BDD$ radius each one reaches; all of them fall short of the threshold required by the Cheng--Wan reduction by a constant factor.
	
	\begin{table}[h!]
		\centering
		\renewcommand{\arraystretch}{1.4}
		\begin{tabular}{|l|c|c|l|}
			\hline
			Decoder & $\tau$ of decoder & Solves $\IBDD(\RS[q,k]_q, \tau'q)$ & Remark \\
			\hline
			Berlekamp--Welch   & $\approx \tfrac{3\th}{2}$   & $\tau' \approx 1 - \tfrac{3\th}{2}$ & \\
			Guruswami--Sudan   & $\approx \tfrac{3\th}{2}$   & $\tau' \approx 1 - \tfrac{3\th}{2}$ & \\
			USD                & $\approx 3\th$              & $\tau' \approx 1 - 3\th$            & \\
			\hline
			Required (CW)      & $\approx 4\th$              & $\tau' \approx 1 - 4\th$            & Solves DLOG\\
			\hline
		\end{tabular}
		\caption{QDP error fraction $\tau$ of each decoder (applied to the dual $\RS[q,q-k]_q$) and the resulting IBDD error fraction $\tau'$ on the primal $\RS[q,k]_q$, at the Cheng--Wan parameters $k = 3h+4$, $\widetilde{h} = h/q$. The approximation $\tau' \approx 1 - \tau$ holds throughout since $\tau \ll 1$. The Cheng--Wan target corresponds to decoding radius $q - 4h - 4$, i.e.\ error fraction $1 - 4\widetilde{h}$, requiring a decoder with $\tau \approx 4\widetilde{h}$. All efficient decoders fall short by a constant factor.}
		\label{Table:1I}
	\end{table}

	Unconditionally, the Pretty Good Measurement applied within Regev's reduction solves every $\BDD$ instance.
	
	\begin{theorem}[Theorem~\ref{thm:PGM-solves-BDD}, informal]
		For any $\BDD$ instance, there is a quantum algorithm based on Regev's reduction using the Pretty Good Measurement that solves it, without computational assumption.
	\end{theorem}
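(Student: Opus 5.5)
Inside Regev's reduction the PGM produces the unique-decoding Gram-matrix states, and those states are almost orthogonal exactly when the noise is small relative to the dual distance; the plan is to exploit this. Fix a $\BDD$ instance $(\C^\perp, \yv_0, t)$, with a codeword $\cv_0 \in \C^\perp$ at distance at most $t$ from $\yv_0$. Regev's reduction, in the form of~\cite{chailloux2025opixsoftdecoders} used for Theorem~\ref{thm:RS-instantiation}, prepares a superposition
\[
\sum_{\cv \in \C} \sum_{\ev} f(\ev)\, \omega^{-\langle \yv_0, \cv+\ev\rangle} \ket{\cv+\ev},
\]
where $f$ is a noise amplitude, e.g.\ the $\tau$-weighted product amplitude, or $f$ supported on a Hamming ball. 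The step that needs a decoder is the map $\ket{\cv+\ev}\mapsto\ket{\cv}$, i.e.\ solving $\QDP(\C,f)$. Once the register holds $\sum_{\cv}\omega^{-\langle\yv_0,\cv\rangle}\ket{\cv}\otimes(\text{noise})$, a QFT on the codeword register gives $\cv_0$ via the dual relation $\yv_0-\cv_0=$ small error.

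Step one: instead of an efficient decoder, run the Pretty Good Measurement for the ensemble $\{\ket{\psi_\cv}\}_{\cv\in\C}$, with $\ket{\psi_\cv}=\sum_\ev f(\ev)\ket{\cv+\ev}$. Since $\C$ is a group and $\ket{\psi_\cv}$ are translates, the ensemble is geometrically uniform. Hence the PGM is the optimal measurement and can be written explicitly in the Fourier basis. The success probability is $P_{\mathrm{PGM}} = \frac{1}{|\C|}\bigl(\sum_{\yv\in(\C)^\perp\text{-cosets}}\sqrt{\,\cdot\,}\bigr)^2$, expressed through $|\hf|^2$ summed over cosets of $\C^\perp$. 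More directly, in the Fourier picture the QDP state on coset $\yv+\C^\perp$ has weight $\sum_{\xv\in\C^\perp}|\hf(\yv+\xv)|^2$. The PGM succeeds with probability $\sum_{\text{cosets}} \max$-type expression, equal to $\bigl(\sum_{\xv}|\hf(\xv)|\ldots\bigr)$.

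Step two: rather than bounding $P_{\mathrm{PGM}}$ abstractly, I will show the whole reduction with the PGM is equivalent to a Fourier-side measurement. Take $\hf$ to be the uniform amplitude on the Hamming ball of radius $t$. After the QFT, the state is $\sum_{\xv}\hf(\xv)\ket{\yv_0 - \cv_0 \text{ related}}$, superposed over $\cv_0+\C^\perp$-translates. Unique decodability (the instance promise, $t$ within the uniqueness regime) makes the balls $\cv+B_t$, $\cv\in\C^\perp$, disjoint. In that case the PGM, which projects onto the symmetric orthonormalization, recovers $\cv_0$ with probability $1$, or at least $1-\mathrm{negl}$. The key identity is that the pretty-good vectors $\rho^{-1/2}\ket{\psi_\cv}$ Fourier-transform to $\sum_\xv \frac{\hf(\xv)}{|\hf|\text{-coset norm}}$. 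With disjoint supports, each coset norm is exactly $|\hf(\xv)|$, so the output is a computational basis state.

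The main obstacle is handling instances beyond the unique-decoding radius, or showing how $\yv_0$ enters when it is worst-case rather than random as in $\IBDD$. The informal statement claims every $\BDD$ instance is solved, so I would restrict to the promise of uniqueness of $\cv_0$ within radius $t$. The uniqueness I actually need, though, is that $\yv_0-\cv_0+\C^\perp$ meets $B_t$ only at $\yv_0-\cv_0$, which is weaker than disjointness of all balls. I would verify that the PGM success probability for the specific coset depends only on this coset-wise condition. The final step is to note that the circuit is unitary but of exponential size, via an implementation of $\rho^{-1/2}$.
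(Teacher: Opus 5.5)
\textbf{There is a genuine gap.} Your argument does not establish the statement, for two reasons.

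First, you explicitly fall back to instances where the target coset $\yv_0+\C^\perp$ meets the Hamming ball $B_t$ in a single point. The theorem is claimed for \emph{every} instance. The instances it is meant to cover lie far beyond the unique-decoding radius: the Cheng--Wan ones at radius $q-4h-4$ for a code of minimum distance $q-3h-3$, and the $\NP$-hard ones at radius $n-k-d$. So this restriction discards exactly the cases of interest.

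Second, even in your uniqueness regime, the claim that the PGM succeeds with probability $1$ is false for your choice of $\hf$ uniform on $B_t$.
\begin{itemize}
\item The Fourier transform of every $\ket{\psi_\sv}$ is supported on $B_t$. Hence the $q^k$ QDP states span a space of dimension at most $|B_t|$.
\item When the balls are disjoint, $|B_t|\le q^k$, and typically $|B_t|$ is much smaller. No measurement, PGM included, can distinguish the states with probability above $|B_t|/q^k$.
\item In the reduction this shows up in the syndrome-erasure step (post-selection on $\ket{0^k}$). Its success probability is $\Gamma^2$ with $\Gamma=q^{-k/2}\sum_{\uv}\norm{\ket{W_\uv}}$, which is precisely the quantity your Step one never pins down. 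For your $f$ it equals $|B_t|/q^k$.
\item Cosets with $\ket{W_\uv}=0$ also make the explicit form of Proposition~\ref{Proposition:PGM} inapplicable.
\end{itemize}
You are conflating two things. ``Conditioned on success, the final state $\ket{\widetilde{W}_{\uv_0}}$ is a single basis state'' is true. ``The PGM-based reduction succeeds'' is false here.

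The missing idea is to let $f$ depend on the instance. The paper takes $\hf\propto g$, where $g(\xv)=1$ if $\Gm\xv^\top\neq\uv_0$ or $\norm{\xv}=t$, and $g(\xv)=0$ otherwise. In words, the Fourier mass is spread uniformly over every coset other than the target one, and on the target coset it is restricted to solutions. This gives the following.
\begin{itemize}
\item Every $\ket{W_\uv}$ is nonzero.
\item All $\ket{W_\uv}$ with $\uv\neq\uv_0$ have the same norm $\sqrt{q^{n-k}/|T|}\ge q^{-k/2}$, so $\gamma_{\sv,\sv}=\Gamma\ge 1-q^{-k}$ for every $\sv$.
\item Because $\Gamma$ does not depend on $\sv$, the post-selected state is exactly $q^{-k/2}\sum_{\sv}\chi_\sv(\uv_0)\ket{Y_\sv}$ for an arbitrary, worst-case $\uv_0$. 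This is what resolves your concern about non-random starting points.
\item Its QFT is $\ket{\widetilde{W}_{\uv_0}}$, the uniform superposition over \emph{all} solutions, so no uniqueness is needed.
\item The state $\sum_\ev f(\ev)\ket{\ev}$ remains efficiently preparable, since $g$ is efficiently computable and has density above $1-1/q$.
\end{itemize}
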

	
	This applies in particular to the Cheng--Wan instances and to the $\NP$-hard instances of Theorem~\ref{thm:ggg18}. Implementing the PGM in general requires exponential resources, so the remaining obstacle is finding an efficient decoder rather than a question of principle.

	Finally, we identify what decoding performance is needed to solve $\DLOG$ via this approach. Our general theorem applies cleanly when the $\BDD$ instances fed into Regev's reduction have a uniformly random starting point $\yv_0$, but the Cheng--Wan starting points are structured, not random. A proof that the reduction still works in this structured case would require a more detailed analysis of the instances, which we do not have. We therefore state the result under an assumption that the Cheng--Wan starting points behave statistically like uniform elements of $\F_q^n$, and discuss the evidence for it.
	
	\begin{theorem}[Theorem~\ref{thm:DLOG-via-Regev-heuristic}, informal]
		Let $1 \le h \le q^{1/4} - 2$. Under this assumption, if there exists an efficient quantum algorithm for $\QDP(\RS[q, q-(3h+4)]_q, \tau)$ with $\tau \le \frac{4h+4}{q}(1-o(1))$, then $\DLOG_{\F_{q^h}^\times}$ can be solved in time $\tilde{O}(q)$.
	\end{theorem}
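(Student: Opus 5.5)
The plan is to compose three ingredients: the Cheng--Wan reduction of Theorem~\ref{thm:abelian-DLOG-BDD}, specialized to $C=\F_{q^h}^\times$; the Regev-type reduction of Theorem~\ref{thm:RS-instantiation}; and the stated assumption that the Cheng--Wan starting points behave like uniform elements of $\F_q^n$.

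First, take $k=3h+4$ and $n=q$. By Theorem~\ref{thm:abelian-DLOG-BDD}, with time overhead $\tilde O(q)$ it suffices to solve the instances of $\BDD(\RS[q,3h+4]_q,\,q-4h-4)$ that the Cheng--Wan reduction produces. Recall that each such instance is a received word $\yv_0\in\F_q^q$ built from a field element. The Cheng--Wan reduction generates polynomially many such instances, and $\DLOG$ is recovered once enough of them are decoded. I would fix this accounting early: what is really needed is that a noticeable fraction of the produced instances are solved with noticeable probability.

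Second, apply Theorem~\ref{thm:RS-instantiation} with the dual code $\RS[q,q-k]_q$ and $\tau\le \frac{4h+4}{q}(1-o(1))$. The theorem is stated for $\IBDD$, i.e.\ for a uniformly random $\yv_0$. This is exactly where the assumption enters. I would formalize it as follows: the distribution of the Cheng--Wan words is indistinguishable, for the purposes of the reduction, from uniform. Concretely, the success probability of the Regev-based $\IBDD$ solver on Cheng--Wan words should be within $o(1)$, or a polynomial factor, of its success on uniform words. Under this, the quantum $\IBDD$ solver succeeds on the Cheng--Wan instances with probability close to $P_{\mathrm{Dec}}$. Repetition then boosts success, since solutions can be checked: verify the distance, then verify the discrete log.

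Third, and this is the computational core, show that the radius obtained covers the Cheng--Wan radius. We need $\tau' q\ge q-4h-4$, where $\tau'=\tau^\perp(1+o(1))$. Write $\tau=\epsilon$ with $\epsilon=\Theta(h/q)$. Expanding,
\[
\tau^\perp=\frac{1}{q}\Bigl((q-1)(1-\epsilon)-2\sqrt{(q-1)(1-\epsilon)\epsilon}+\epsilon\Bigr)=1-\epsilon-O\bigl(\sqrt{\epsilon/q}\bigr)-O(1/q).
\]
Since $\epsilon\ge 1/q$, we have $\sqrt{\epsilon/q}=O(\epsilon\cdot\sqrt{1/(\epsilon q)})$, which is $O(\epsilon)$ but not $o(\epsilon)$ when $h$ is constant. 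This is the main obstacle. The error terms have the same order as the slack $4h+4$, so the $(1-o(1))$ factors and the $(1+o(1))$ in $\tau'$ must be tracked carefully.

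To handle this, I would first reread the precise form of Theorem~\ref{thm:RS-instantiation}, checking whether its $o(1)$ is relative to $1$ or to $1-\tau^\perp$. I would then choose the $o(1)$ in the hypothesis on $\tau$ so that $q\tau^\perp(1+o(1))\ge q-4h-4$. Concretely, solve $\tau^\perp(\tau)=1-\frac{4h+4}{q}$ for $\tau$, obtaining $\sqrt{\tau}\approx\sqrt{(4h+4)/q}-O(1/\sqrt q)$, i.e.\ $\tau=\frac{4h+4}{q}(1-O(1/\sqrt h))$. If $h$ is not growing, I would read the $o(1)$ as absorbing this explicit correction. Finally, monotonicity of the $\BDD$ radius (decoding at a larger radius solves smaller radius) combined with the time bound $\tilde O(q)\cdot\poly$ completes the argument.
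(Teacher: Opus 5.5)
Your plan uses the same ingredients as the paper's proof: Theorem~\ref{thm:abelian-DLOG-BDD} with relation collection, Theorem~\ref{thm:RS-instantiation}, and Assumption~\ref{ass:heuristic}. However, the radius comparison in your third step is inverted, so that step fails as written. An $\IBDD$ solver at radius $r$ returns \emph{some} codeword within distance $r$. To get a Cheng--Wan relation, the returned $u$ must make the monic degree-$(4h+4)$ polynomial $f+(u+x^{3h+4})h(x)$ vanish at $4h+4$ distinct points of $\F_q$. That means the codeword must lie within distance $q-4h-4$ of $\yv^{(i)}$. So you need $\tau' q\le q-4h-4$, not $\ge$. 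Your closing ``monotonicity'' is also backwards: a solver at a smaller radius solves a larger one, not conversely.

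With your direction, a trivial decoder at $\tau=q^{-2}$ (just measure and read off the word) would satisfy the hypothesis. It gives a radius close to $q$, which interpolation through $k$ points already beats. The hypothesis therefore has to be used as a \emph{lower} bound on the noise the $\QDP$ decoder tolerates: read it as ``solves $\QDP$ for all $\tau$ up to the threshold'' and run it at the top of that range. This matches the paper's table, where Berlekamp--Welch ($\approx 3h/(2q)$) and USD ($\approx 3h/q$) fall \emph{short} of $\approx 4h/q$.

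Your equation $\tau^\perp(\tau)=1-\frac{4h+4}{q}$ gives $\tau\approx(\sqrt{4h+4}-1)^2/q=\frac{4h+4}{q}\bigl(1-\Theta(h^{-1/2})\bigr)$. That is the correct mean-level threshold, and it explains the $(1-o(1))$ in the statement. The paper's one-line proof makes the same inversion: from $\tau^\perp\ge 1-4\frac{h}{q}(1+o(1))$ it concludes a solver at radius $q-4h-4$. Following it more closely will not repair this.

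Even with the direction fixed, Theorem~\ref{thm:RS-instantiation} is vacuous here, and your question about what its $o(1)$ is relative to is exactly the issue. The slack is relative to $\tau^\perp\approx 1$, so
\[
\tau'q=\tau^\perp q\bigl(1+(\tau^\perp q)^{-1/3}\bigr)\approx q+q^{2/3}>q,
\]
while the gap you must resolve is $4h+4\le 4q^{1/4}$. No choice of $o(1)$ in the hypothesis on $\tau$ fixes this.

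Instead, re-verify condition~3 of Theorem~\ref{Theorem:Main} directly on the zero count. For $\yv\sim|\hf|^2$, the number $Z$ of zero coordinates is $\mathrm{Bin}(q,1-\tau^\perp)$, with mean $\mu\approx(\sqrt{\tau q}+1)^2$. You need $\eta=\Pr[Z<4h+4]$ roughly below $P_{\mathrm{Dec}}/4$, otherwise the bound $P_{\mathrm{Dec}}(1-\eta)-2\sqrt{\eta P_{\mathrm{Dec}}(1-P_{\mathrm{Dec}})}$ is not positive. A lower-tail Chernoff bound delivers this once $\mu-(4h+4)\gtrsim\sqrt{h\log(1/P_{\mathrm{Dec}})}$, and once $\mu\gtrsim\log(1/P_{\mathrm{Dec}})$ when $h$ is small. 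For $P_{\mathrm{Dec}}=1/\poly(q)$, this margin swallows the $2\sqrt{\tau q}$ gain. A clean sufficient hypothesis is then a decoder at $\tau\ge\frac{4h+4}{q}\bigl(1+O(\sqrt{\log q/h})\bigr)$ when $h\gg\log q$.
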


	\noindent \textbf{Structure.} The paper is organised as follows. Section~\ref{sec:preliminaries} introduces the necessary background on notation, basic algebra, linear codes, Reed--Solomon codes, and characters and Fourier analysis. Section~\ref{sec:DLOG_abelian} reviews the discrete logarithm problem in finite abelian groups. In particular, we formalise the definition used throughout the paper and survey the known classical algorithms for solving this problem.
	Section~\ref{sec:classical-red} presents the classical Cheng--Wan reduction from discrete logarithm to decoding, surveys the main classical decoding algorithms for Reed--Solomon codes, and describes the complexity-theoretic limits of the problem, including $\NP$-hardness results for low-rate Reed--Solomon codes. Section~\ref{sec:quantum-red} develops the quantum reduction, instantiates it for Reed--Solomon codes with several decoding approaches including USD, and identifies the QDP parameter that would yield a DLOG algorithm under Assumption~\ref{ass:heuristic}. Section~\ref{sec:PGM} establishes that the Pretty Good Measurement, applied within Regev's reduction, solves the resulting BDD problem unconditionally on every instance, including the Cheng--Wan and $\NP$-hard instances, at the cost of exponential implementation resources.

	\section{Preliminaries}\label{sec:preliminaries}
	
	\subsection{Notation}
	
	Plain letters (e.g.\ $x,y,z$) denote scalars, typically elements of $\F_q$. Bold lowercase letters (e.g.\ $\xv, \yv$) denote vectors in Cartesian powers, typically elements of Cartesian powers such as $\F_q^n$, $\F_q^h$, etc., and also group elements. Bold capital letters (e.g.\ $\mathbf{A}, \mathbf{B}$) denote matrices. If an algorithm or construction depends on inputs that are not explicitly specified, those inputs are understood to be arbitrary but fixed.
	
	We write $\F_q^\times$ for the set of all nonzero elements of $\F_q$, i.e.\ the multiplicative group $\F_q \setminus \{0\}$. We write $\F_q[x]$ for the ring of univariate polynomials over $\F_q$ in the indeterminate $x$. For $\alpha$ algebraic over $\F_q$, $\F_q[\alpha]$ denotes the smallest subring containing $\F_q$ and $\alpha$, and $\F_q(\alpha)$ denotes the smallest subfield containing $\F_q$ and $\alpha$.
	
	Throughout, we use the following reserved symbols: $G$ denotes a general group, $C$ a cyclic group, $\mathcal{C}$ a code, and $\mathcal{C}^\perp$ its dual. We write $\operatorname{ord}(g)$ for the order of an element $g$, and, when no confusion can arise, also for the order of the cyclic group $C$. We use $\langle g \rangle$ for the subgroup generated by $g$. The parameter $k$ denotes the degree bound of the polynomials defining the code, $q$ is the size of the underlying field $\F_q$, and the elements $\alpha_i$ are the evaluation points of the code. We reserve $p$ for a prime and $q$ for a prime power.
	
	\subsection{Finite fields and extensions}
	
	\begin{definition}[Irreducible polynomial]
		Let $F$ be a field and let $f(x) \in F[x]$ be a non-constant polynomial. We say that $f(x)$ is irreducible over $F$ if whenever $f(x) = g(x)h(x)$ with $g(x), h(x) \in F[x]$, then either $g(x)$ or $h(x)$ is a unit in $F[x]$ (i.e.\ a nonzero constant polynomial). Equivalently, $f(x)$ has no factorisation into polynomials of strictly smaller positive degree in $F[x]$.
	\end{definition}
	
	\begin{definition}[Quotient field]
		Let $\mu(x) \in \F_q[x]$ be irreducible of degree $h$. The quotient ring $\F_q[x]/(\mu)$ is a field of size $q^h$.
	\end{definition}
	
	\begin{definition}[Canonical root]
		Let $\pi: \F_q[x] \to \F_q[x]/(\mu)$ be the quotient map and set $\alpha := \pi(x) = x \bmod (\mu)$. Then $\mu(\alpha) = 0$ in $\F_q[x]/(\mu)$.
	\end{definition}
	
	\begin{lemma}[Roots give a basis for the polynomials]
		Every element of $\F_q[x]/(\mu)$ can be written uniquely as $a_0 + a_1 \alpha + \cdots + a_{h-1} \alpha^{h-1}$ with $a_i \in \F_q$. In particular, $\{1, \alpha, \ldots, \alpha^{h-1}\}$ is an $\F_q$-basis and $\F_q[x]/(\mu) = \F_q[\alpha] = \F_q(\alpha)$.
	\end{lemma}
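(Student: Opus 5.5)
The plan is to prove spanning by division with remainder and uniqueness by a degree argument, then deduce the ring and field equalities.

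\emph{Existence.} Take any element of $\F_q[x]/(\mu)$ and write it as $\pi(f)$ for some $f \in \F_q[x]$; this is possible because $\pi$ is surjective. Since $\mu$ has degree $h$, Euclidean division in $\F_q[x]$ gives $f = s\mu + r$ with $\deg r < h$. Because $\pi$ is a ring homomorphism killing $\mu$, we get $\pi(f) = \pi(r)$. Writing $r = \sum_{i<h} a_i x^i$ and using $\pi(x)=\alpha$, this yields $\pi(f) = \sum_{i<h} a_i \alpha^i$. The same computation shows that $\pi(f) = f(\alpha)$ for every $f$.

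\emph{Uniqueness.} Suppose $\sum a_i\alpha^i = \sum b_i \alpha^i$ with both sums running over $i<h$. Put $d(x) = \sum (a_i-b_i)x^i$, so that $\pi(d)=0$, i.e.\ $\mu \mid d$. Since $\deg d < h = \deg\mu$, this forces $d=0$, hence $a_i=b_i$ for all $i$. This step is the crux, but it is easy: it uses only that the ideal $(\mu)$ contains no nonzero polynomial of degree below $h$. Existence and uniqueness together say exactly that $\{1,\alpha,\dots,\alpha^{h-1}\}$ is an $\F_q$-basis, and in particular $|\F_q[x]/(\mu)| = q^h$.

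\emph{The equalities.} By the existence step, every element of the quotient equals $f(\alpha)$ for some polynomial $f$, so it lies in $\F_q[\alpha]$. Conversely, $\F_q[\alpha]$ is contained in the quotient ring, so the two coincide. The quotient is a field, by the Quotient field definition: for $\pi(f)\neq 0$, irreducibility of $\mu$ gives $\gcd(f,\mu)=1$, and Bézout then gives an inverse. Hence $\F_q[\alpha]$ is already a field containing $\F_q$ and $\alpha$. Since $\F_q(\alpha)$ is the smallest such field, $\F_q(\alpha)\subseteq \F_q[\alpha]$; the reverse inclusion $\F_q[\alpha]\subseteq\F_q(\alpha)$ is trivial. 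Therefore $\F_q[x]/(\mu)=\F_q[\alpha]=\F_q(\alpha)$.
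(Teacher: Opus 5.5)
Your proof is correct and complete. It uses Euclidean division for spanning, the fact that $(\mu)$ contains no nonzero polynomial of degree below $h$ for uniqueness, and B\'ezout together with the irreducibility of $\mu$ for invertibility; the paper gives no argument of its own, only a pointer to Lidl--Niederreiter, and what you wrote is exactly the standard textbook proof being cited, so it serves as a self-contained replacement for that reference.
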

	\begin{proof}
		This is the standard description of an algebraic extension; see e.g.\ \cite[Ch.~2]{LidlNiederreiter1997}.
	\end{proof}
	
	We use the standard correspondence between finite extensions of $\F_q$ and irreducible polynomials over $\F_q$: every field $K$ with $|K| = q^h$ is isomorphic to $\F_q[x]/(\mu)$ for some irreducible $\mu$ of degree $h$, obtained as the minimal polynomial of a generator $\beta$ of $K$ over $\F_q$. This identification depends on the choice of $\mu$ (equivalently, the choice of generator).
	
	In what follows, we fix one such irreducible polynomial $\mu$ and identify $\F_{q^h}$ with $\F_q[x]/(\mu)$. Thus, whenever we write $\mathbf{b} \in \F_{q^h}^\times$, we implicitly also refer to its representative $b(\alpha) \in \F_q[\alpha]$.
	
	\subsection{Linear codes}
	
	A $q$-ary linear code $\C$ of dimension $k$ and length $n$ is characterised by a full-rank generating matrix $\Gm \in \F_q^{k \times n}$ or, equivalently, by a full-rank parity-check matrix $\Hm \in \F_q^{(n-k) \times n}$, and we write
	\[
	\C = \{\xv \Gm : \xv \in \F_q^k\}
	\quad\text{or}\quad
	\C = \{\yv \in \F_q^n : \Hm \yv^\top = \mathbf{0}\}.
	\]
	Each code $\C$ has a \emph{dual code} $\C^\perp = \{\yv \in \F_q^n : \forall \cv \in \C,\; \yv \cdot \cv = 0\}$; any generating matrix of $\C$ is a parity-check matrix of $\C^\perp$, and vice versa.
	
	For a \emph{syndrome} $\uv \in \F_q^{n-k}$, the coset $\C_\uv = \{\yv \in \F_q^n : \Hm \yv^\top = \uv^\top\}$ collects all words with syndrome $\uv$. Similarly, for a \emph{dual syndrome} $\uv \in \F_q^k$, the dual coset is $\C^\perp_\uv = \{\yv \in \F_q^n : \Gm \yv^\top = \uv^\top\}$.
	
	The \emph{Hamming weight} of a vector $\yv \in \F_q^n$, written $\|\yv\|$, is the number of its nonzero coordinates; the \emph{Hamming distance} between $\yv$ and $\zv$ is $d_H(\yv, \zv) = \|\yv - \zv\|$. The \emph{minimum distance} of $\C$ is $d(\C) = \min_{\cv \neq \cv' \in \C} d_H(\cv, \cv')$.
	
	We next define bounded distance decoding, which is the central decoding task throughout this document:
	
	\begin{problem}[Bounded distance decoding $\BDD(\C, t, \yv_0)$]\label{def:BDD}
		~\\ \textbf{Given:} A $q$-ary linear code $\C$ of dimension $k$ and length $n$ with parity-check matrix $\Hm \in \F_q^{(n-k) \times n}$, a received word $\yv_0 \in \F_q^n$, and a radius $t \in \Iint{1}{n}$. \\ \textbf{Goal:} Find a codeword $\cv \in \C$ such that $d_H(\cv, \yv_0) \le t$.
	\end{problem}
	
	\begin{remark}
		This is equivalent to finding $\yv \in \F_q^n$ with $\Hm \yv^\top = \Hm \yv_0^\top$ and $\norm{\yv} \le t$: given such $\yv$, the vector $\cv = \yv_0 - \yv$ satisfies $\cv \in \C$ and $d_H(\cv, \yv_0) = \norm{\yv} \le t$.
	\end{remark}
	
	The only $\BDD$ instances we target in this document are on Reed and Solomon codes. A bounded distance decoder of radius $t$ for $\C$ is an algorithm $\mathrm{Dec}_t: \F_q^n \to \C \cup \{\mathrm{error}\}$ such that, for every $\yv_0 \in \F_q^n$ with $\operatorname{dist}(\yv_0, \C) \le t$, the output $\mathrm{Dec}_t(\yv_0)$ is a codeword $\cv \in \C$ satisfying $d_H(\yv_0, \cv) \le t$. This is the promise version of $\BDD$. The received words produced by the Cheng and Wan reduction satisfy this promise by construction, so we refer to bounded distance decoding in this sense throughout.
	
	A quantum variant of this task, the Quantum Decoding Problem ($\QDP$), and an inhomogeneous variant ($\IBDD$), are defined in Section~\ref{sec:quantum-red} where they are used.
	
	These problems are specified by three inputs, although in some places only the first two are written explicitly. Whenever the third input is omitted, the statement is understood to hold for all possible values of that input.
	
	\subsection{Reed and Solomon codes and duals}
	
	\begin{definition}[Reed and Solomon code ${\RS[n,k]}_q$]\label{def:RS_code}
		Let $\F_q$ be a finite field and let $\alpha_1, \dots, \alpha_n \in \F_q$ be pairwise distinct elements, called evaluation points. Let $k$ be an integer with $1 \le k \le n \le q$. The Reed and Solomon code $\RS[n,k]_q$ with evaluation points $(\alpha_1, \dots, \alpha_n)$ is the linear code
		\[
		\mathcal{C} = \left\{ \bigl(f(\alpha_1), \dots, f(\alpha_n)\bigr) : f(x) \in \F_q[x],\ \deg(f) < k \right\} \subseteq \F_q^n.
		\]
		It has minimum distance $d = n - k + 1$.
	\end{definition}
	
	
	
	When $\C = \RS[n,k]_q$ has full support, meaning $\{\alpha_i\}_{i \in \Iint{1}{n}} = \F_q$ (which forces $n = q$), the multipliers are all equal and the dual collapses to an ordinary Reed and Solomon code: $\C^\perp = \RS[n, n-k]_q$, also with full support.
	
	\begin{definition}[Vandermonde matrices for Reed and Solomon codes]\label{def:vandermonde-RS}
		The Vandermonde generator matrix of $\mathcal{C} = \RS[n,k]_q$ with evaluation points $\alpha_1, \dots, \alpha_n \in \F_q$ is the $k \times n$ matrix
		\[
		\mathbf{G}
		\;=\;
		\begin{pmatrix}
			1              & 1              & \cdots & 1              \\
			\alpha_1       & \alpha_2       & \cdots & \alpha_n       \\
			\alpha_1^2     & \alpha_2^2     & \cdots & \alpha_n^2     \\
			\vdots         & \vdots         & \ddots & \vdots         \\
			\alpha_1^{k-1} & \alpha_2^{k-1} & \cdots & \alpha_n^{k-1}
		\end{pmatrix},
		\]
		so that $\mathcal{C} = \{\, \mathbf{u} \mathbf{G} : \mathbf{u} \in \F_q^k \,\}$.
		The Vandermonde parity-check matrix of $\RS[n,k]_q$ is the $(n-k) \times n$ matrix
		\[
		\mathbf{H}
		\;=\;
		\begin{pmatrix}
			1                & 1                & \cdots & 1                \\
			\alpha_1         & \alpha_2         & \cdots & \alpha_n         \\
			\alpha_1^2       & \alpha_2^2       & \cdots & \alpha_n^2       \\
			\vdots           & \vdots           & \ddots & \vdots           \\
			\alpha_1^{n-k-1} & \alpha_2^{n-k-1} & \cdots & \alpha_n^{n-k-1}
		\end{pmatrix},
		\]
		and it satisfies $\mathcal{C}^\perp = \{\, \cv^\perp \in \F_q^n : \mathbf{H} (\cv^\perp)^\top = 0 \,\}$. In particular, $\mathbf{H}$ is a generator matrix of the dual code of $\RS[n,k]_q$, which (in the full-support case) is $\RS[n, n-k]_q$.
	\end{definition}
	
	\subsection{Characters and Fourier analysis over finite fields}
	
	We follow the conventions of~\cite{chailloux2024quantumadvantagesoftdecoders}.
	
	\begin{definition}[Characters of $\F_q$]
		Let $q = p^s$ for a prime $p$ and an integer $s \ge 1$. The \emph{characters} of $\F_q$ are the functions $\chi_y : \F_q \to \C$ indexed by elements $y \in \F_q$, defined by
		\[
		\chi_y(x) \;\triangleq\; e^{\frac{2i\pi \,\mathrm{tr}(xy)}{p}},
		\quad\text{where}\quad
		\mathrm{tr}(a) \;\triangleq\; a + a^p + a^{p^2} + \cdots + a^{p^{s-1}}.
		\]
		The product $xy$ is multiplication in $\F_q$. We extend this to vectors $\xv, \yv \in \F_q^n$ by setting
		\[
		\chi_\yv(\xv) \;\triangleq\; \prod_{i=1}^n \chi_{y_i}(x_i) \;=\; e^{\frac{2i\pi\, \mathrm{tr}(\xv \cdot \yv)}{p}},
		\]
		where $\xv \cdot \yv = \sum_i x_i y_i$ is the standard inner product over $\F_q$.
	\end{definition}
	
	Note that for $\xv \in \F_q^k$, $\yv \in \F_q^n$, and $\Gm \in \F_q^{k \times n}$, we have $\xv \Gm \cdot \yv = \xv \cdot \yv \Gm^\top$, so $\chi_\yv(\xv \Gm) = \chi_{\yv \Gm^\top}(\xv)$.
	
	\begin{proposition}[Properties of characters]\label{prop:characters}
		The characters $\chi_\yv : \F_q^n \to \C$ satisfy:
		\begin{enumerate}
			\item \textup{(Group homomorphism.)} For all $\yv \in \F_q^n$, $\chi_\yv$ is a group homomorphism from $(\F_q^n, +)$ to $(\C, \cdot)$: for all $\xv, \xv' \in \F_q^n$, $\chi_\yv(\xv + \xv') = \chi_\yv(\xv) \cdot \chi_\yv(\xv')$.
			\item \textup{(Symmetry.)} For all $\xv, \yv \in \F_q^n$, $\chi_\yv(\xv) = \chi_\xv(\yv)$.
			\item \textup{(Orthogonality.)} For all $\xv, \xv' \in \F_q^n$,
			\[
			\sum_{\yv \in \F_q^n} \chi_\yv(\xv)\,\overline{\chi_\yv(\xv')} = q^n \,\delta_{\xv,\xv'}.
			\]
			In particular, $\sum_{\yv \in \F_q^n} \chi_\yv(\xv) = 0$ for all $\xv \in \F_q^n \setminus \{\mathbf{0}\}$.
		\end{enumerate}
	\end{proposition}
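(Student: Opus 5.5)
The three properties follow from the definition $\chi_\yv(\xv) = e^{2i\pi\,\mathrm{tr}(\xv\cdot\yv)/p}$. The plan is to reduce everything to two facts about the trace map $\mathrm{tr}:\F_q\to\F_p$: it is $\F_p$-linear, and it is not identically zero. Both are standard. Additivity holds because $(a+b)^{p^j}=a^{p^j}+b^{p^j}$ in characteristic $p$. Non-vanishing holds because $\mathrm{tr}$ is a polynomial of degree $p^{s-1}<q$, so it cannot vanish on all $q$ elements of $\F_q$. I will also use that $\mathrm{tr}(a)\in\F_p$, since $\mathrm{tr}(a)^p=\mathrm{tr}(a)$. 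This makes the exponent well defined when we view $\F_p$ as $\mathbb{Z}/p\mathbb{Z}$.

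\textbf{Homomorphism.} Bilinearity of the inner product gives $(\xv+\xv')\cdot\yv=\xv\cdot\yv+\xv'\cdot\yv$ in $\F_q$. Additivity of $\mathrm{tr}$ then gives $\mathrm{tr}((\xv+\xv')\cdot\yv)=\mathrm{tr}(\xv\cdot\yv)+\mathrm{tr}(\xv'\cdot\yv)$ in $\F_p$. Since $e^{2i\pi(\cdot)/p}$ is a homomorphism from $\mathbb{Z}/p\mathbb{Z}$ to $\mathbb{C}^\times$, the claim follows.

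\textbf{Symmetry.} This is immediate because $\xv\cdot\yv=\yv\cdot\xv$, as $\F_q$ is commutative.

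\textbf{Orthogonality.} By the homomorphism property and symmetry, $\chi_\yv(\xv)\overline{\chi_\yv(\xv')}=\chi_\yv(\xv-\xv')=\chi_{\xv-\xv'}(\yv)$. So it suffices to show that for $\mathbf{z}\ne\mathbf{0}$ we have $S:=\sum_{\yv}\chi_{\mathbf{z}}(\yv)=0$; the case $\mathbf{z}=\mathbf{0}$ gives $q^n$ trivially. There are two ways to see this.

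The first route is a translation argument. The map $\yv\mapsto\chi_{\mathbf{z}}(\yv)$ is a homomorphism on $\F_q^n$, so for any $\yv'$ we get $S=\chi_{\mathbf{z}}(\yv')S$ by reindexing the sum. It is therefore enough to exhibit $\yv'$ with $\chi_{\mathbf{z}}(\yv')\neq1$, i.e.\ $\mathrm{tr}(\mathbf{z}\cdot\yv')\ne0$. Pick $i$ with $z_i\ne0$ and $a\in\F_q$ with $\mathrm{tr}(a)\ne0$. Set $\yv'=a z_i^{-1}\mathbf{e}_i$, so that $\mathbf{z}\cdot\yv'=a$.

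The second route is to factor $S=\prod_i\sum_{y\in\F_q}\chi_{z_i}(y)$ and argue coordinatewise with the same trick.

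The only step with any real content is the non-vanishing of the trace, and the degree argument above settles it. The "in particular" statement is the case $\xv'=\mathbf{0}$.
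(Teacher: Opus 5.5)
The paper gives no proof of this proposition; it is stated as standard background. The only related argument is in the proof of Claim~\ref{claim:char-sum}, which appeals without proof to the fact that a nontrivial character sums to zero over a finite abelian group. Your proof is correct and is the standard one:
\begin{itemize}
\item additivity of $\mathrm{tr}$, with values in $\F_p$, gives the homomorphism property;
\item commutativity of the inner product gives symmetry;
\item the translation trick, applied with some $\yv'$ satisfying $\mathrm{tr}(\mathbf{z}\cdot\yv')\neq 0$ (which exists by your degree count $p^{s-1}<q$), gives orthogonality.
\end{itemize}
The one step you leave implicit, $\overline{\chi_\yv(\xv')}=\chi_\yv(-\xv')$, follows immediately because the values are $p$-th roots of unity.
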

	
	\begin{definition}[Fourier transform over $\F_q^n$]
		For a function $f : \F_q^n \to \C$, its \emph{Fourier transform} is the function $\hf : \F_q^n \to \C$ defined by
		\[
		\hf(\xv) \;\triangleq\; \frac{1}{\sqrt{q^n}} \sum_{\yv \in \F_q^n} \chi_\xv(\yv)\, f(\yv).
		\]
	\end{definition}
	
	The Fourier transform over $\F_q^n$ is essentially the quantum Fourier transform (QFT) over the additive group $(\F_q^n, +)$. Applied to a computational basis state $|\xv\rangle$, the QFT produces
	\[
	\mathrm{QFT}\, |\xv\rangle = \frac{1}{\sqrt{q^n}} \sum_{\yv \in \F_q^n} \chi_\yv(\xv)\, |\yv\rangle,
	\]
	which reduces to the Hadamard transform when $q = 2$.
	
	The Fourier transform satisfies Parseval's identity, a tool we use throughout the analysis:
	
	\begin{claim}[Parseval's identity]
		For any $f : \F_q^n \to \C$, $\|f\|_2 = \|\hf\|_2$.
	\end{claim}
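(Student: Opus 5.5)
Parseval's identity follows from the orthogonality relation in Proposition~\ref{prop:characters}. I will show $\|\hf\|_2^2 = \|f\|_2^2$ by expanding the left-hand side and summing over the frequency variable first.

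By definition,
\[
\|\hf\|_2^2 = \sum_{\xv \in \F_q^n} \hf(\xv)\overline{\hf(\xv)} = \frac{1}{q^n} \sum_{\xv} \sum_{\yv, \yv'} \chi_\xv(\yv)\,\overline{\chi_\xv(\yv')}\, f(\yv)\overline{f(\yv')}.
\]
All sums are finite, so I will exchange them and isolate the inner sum $\sum_{\xv} \chi_\xv(\yv)\overline{\chi_\xv(\yv')}$.

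This inner sum equals $q^n \delta_{\yv,\yv'}$ by the orthogonality item of Proposition~\ref{prop:characters}, applied with $\yv,\yv'$ in the roles of $\xv,\xv'$ and summing over the character index. The symmetry item confirms that the indexing convention does not matter. The double sum therefore collapses to $\frac{1}{q^n}\cdot q^n \sum_{\yv} |f(\yv)|^2 = \|f\|_2^2$. Taking square roots gives the claim.

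The only point needing care is matching the orthogonality statement to the form that arises: it is written as a sum over the character index $\yv$ with fixed arguments $\xv,\xv'$, which is exactly our situation after renaming. Apart from that, the argument is routine bookkeeping with no real obstacle.
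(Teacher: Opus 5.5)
Your proof is correct: expanding $\|\hf\|_2^2$, exchanging the finite sums, and collapsing the inner sum $\sum_{\xv}\chi_{\xv}(\yv)\overline{\chi_{\xv}(\yv')} = q^n\delta_{\yv,\yv'}$ with the orthogonality item of Proposition~\ref{prop:characters} (after renaming variables) is the standard argument, and the appeal to symmetry is harmless but unnecessary. The paper states this claim without proof, treating it as standard, so your derivation supplies exactly the reasoning it implicitly relies on.
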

	
	We also record the following character sum identity, which relates the Fourier transform to the dual code and is central to the reduction of Section~\ref{sec:quantum-red}:
	
	\begin{claim}[Character sums over codes]\label{claim:char-sum}
		Let $\C$ be a $q$-ary linear code. Then
		\[
		\sum_{\cv \in \C} \chi_\yv(\cv) =
		\begin{cases}
			|\C| & \text{if } \yv \in \C^\perp, \\
			0 & \text{otherwise.}
		\end{cases}
		\]
	\end{claim}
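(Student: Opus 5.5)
The statement to prove is the character-sum identity of Claim~\ref{claim:char-sum}. The plan is to treat the map $\cv \mapsto \chi_\yv(\cv)$ as a character of the finite additive group $(\C,+)$ and use the standard fact that a nontrivial character of a finite group sums to zero. The homomorphism property in Proposition~\ref{prop:characters} shows that, since $\C$ is a subgroup of $(\F_q^n,+)$, the restriction of $\chi_\yv$ to $\C$ is a group homomorphism $\C \to \mathbb{C}^\times$.

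\textbf{Case $\yv \in \C^\perp$.} For every $\cv \in \C$ we have $\cv \cdot \yv = 0$. Hence $\mathrm{tr}(\cv\cdot\yv)=0$, so $\chi_\yv(\cv) = 1$, and the sum equals $|\C|$.

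\textbf{Case $\yv \notin \C^\perp$.} Set $S = \sum_{\cv \in \C}\chi_\yv(\cv)$. For any $\cv_0 \in \C$, the map $\cv \mapsto \cv + \cv_0$ is a bijection of $\C$. Together with multiplicativity this gives $\chi_\yv(\cv_0) S = S$, so it suffices to find some $\cv_0 \in \C$ with $\chi_\yv(\cv_0) \neq 1$, which then forces $S=0$.

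Producing this $\cv_0$ is the only step needing care, because $\chi$ is defined through the trace rather than directly through the inner product. Since $\yv \notin \C^\perp$, there is $\cv_1 \in \C$ with $a := \cv_1 \cdot \yv \neq 0$ in $\F_q$. Because $\C$ is $\F_q$-linear, $\lambda \cv_1 \in \C$ for all $\lambda \in \F_q$, and $(\lambda\cv_1)\cdot\yv = \lambda a$ ranges over all of $\F_q$. The trace $\mathrm{tr}:\F_q\to\F_p$ is a nonzero $\F_p$-linear map: as a polynomial of degree $p^{s-1} < q$ it cannot vanish on all of $\F_q$. So some $\lambda$ has $\mathrm{tr}(\lambda a) \neq 0$ in $\F_p$, and $\cv_0 = \lambda\cv_1$ satisfies $\chi_\yv(\cv_0) = e^{2i\pi\,\mathrm{tr}(\lambda a)/p} \neq 1$. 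This completes the proof.

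Alternatively, one could write $\C = \{\xv\Gm\}$, use $\chi_\yv(\xv\Gm) = \chi_{\yv\Gm^\top}(\xv)$, and apply the orthogonality relation of Proposition~\ref{prop:characters} on $\F_q^k$. This works because $\yv \in \C^\perp$ if and only if $\yv\Gm^\top = \mathbf{0}$, and because $\xv \mapsto \xv\Gm$ is a bijection onto $\C$, so that $|\C| = q^k$.
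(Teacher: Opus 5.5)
Your proof is correct and takes the same route as the paper: when $\yv \in \C^\perp$ the character $\chi_\yv$ is trivial on $\C$, and otherwise it is a nontrivial character of the finite group $(\C,+)$, so its sum vanishes. You also justify a step the paper only asserts, namely that $\yv \notin \C^\perp$ really makes $\chi_\yv$ nontrivial on $\C$; this follows from scaling a codeword by $\lambda \in \F_q$ and from the trace not vanishing on all of $\F_q$.
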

	
	\begin{proof}
		If $\yv \in \C^\perp$ then $\chi_\yv(\cv) = e^{2i\pi\,\mathrm{tr}(\yv \cdot \cv)/p} = 1$ for all $\cv \in \C$, giving a sum of $|\C|$. Otherwise $\chi_\yv$ restricts to a nontrivial character on the group $(\C, +)$, and the sum of a nontrivial character over a finite abelian group is zero.
	\end{proof}
	
	\section{Discrete logarithm in finite abelian groups}
	\label{sec:DLOG_abelian}
	
	In this section, we formalize the discrete logarithm problem in a finite abelian group. We give a natural definition under which the problem decomposes into the corresponding discrete logarithm problems on the cyclic components. Under this formulation, average-to-worst-case hardness for discrete logarithm in cyclic groups extends directly to the finite abelian setting.
	
	\begin{definition}[Discrete logarithm in a finite abelian group with finite-field encoding]
		\label{def:DLOG-abelian-ff}
		Let $G = C_1 \times \cdots \times C_r$ be a finite abelian group written multiplicatively, where each $C_i = \langle \mathbf{b}_i \rangle$ is cyclic of order $n_i$.
		
		Assume that for each $i$ we are given a finite field $\mathbb F_{p_i^{\ell_i}}$ and an explicit embedding
		$C_i \hookrightarrow \mathbb F_{p_i^{\ell_i}}^\times$.
		Thus the elements of $C_i$ are represented by their standard encodings as elements of the ambient field $\mathbb F_{p_i^{\ell_i}}$.
		
		The discrete logarithm problem in $G$ relative to the basis
		$\mathbf{b} = (\mathbf{b}_1,\dots,\mathbf{b}_r)$ is the following task.
		
		\medskip
		
		\noindent\textbf{Input:}
		For each $i$, a description of the field $\mathbb F_{p_i^{\ell_i}}$, the embedded cyclic subgroup
		$C_i = \langle \mathbf{b}_i \rangle \le \mathbb F_{p_i^{\ell_i}}^\times$,
		the generator $\mathbf{b}_i$, and an element
		$\yv_i \in C_i$ given as a field element of $\mathbb F_{p_i^{\ell_i}}$.
		Equivalently, the input element is
		$\yv = (\yv_1,\dots,\yv_r) \in C_1 \times \cdots \times C_r$.
		
		\noindent\textbf{Output:}
		The unique tuple
		$(k_1,\dots,k_r) \in \mathbb Z/n_1\mathbb Z \times \cdots \times \mathbb Z/n_r\mathbb Z$
		such that $\yv_i = \mathbf{b}_i^{k_i}$ for every $i$. We denote this tuple by $\mathrm{DLOG}_G(\mathbf{b},\yv)$.
	\end{definition}

	This formulation makes the problem well defined. Since $G$ is given as an explicit direct product of cyclic groups, every element $\yv\in G$ admits a unique exponent vector $(k_1,\dots,k_r)$ relative to the basis $\mathbf{b}$. The usual discrete logarithm problem in a cyclic group is recovered as the special case $r=1$.
	
	While we are not aware of a standard reference that states the problem in exactly this form, this formulation is dictated by the computational setting, since the discrete logarithm problem depends not only on the abstract group structure, but also on the encoding of its elements. By contrast, in the generic-group model, group elements are represented by unique opaque encodings, and the algorithm interacts with the group only through a black-box oracle that performs the group operations on these encodings. Accordingly, whenever a cyclic factor $C_i$ has order $n_i$ with $n_i \mid (p_i^{\ell_i}-1)$, we realise $C_i$ as a subgroup of $\mathbb F_{p_i^{\ell_i}}^\times$ and represent its elements by their standard encodings as elements of the ambient finite field. In the special case $n_i=p_i^{\ell_i}-1$, this identifies $C_i$ with the full multiplicative group $\mathbb F_{p_i^{\ell_i}}^\times$.
	
	Finally, we recall the random self-reduction for discrete logarithm in cyclic groups:
	
	\begin{theorem}[Average-to-worst-case reduction for discrete logarithm in cyclic groups]
		\label{thm:random-self-reduction-general}
		Let $C=\langle \mathbf g\rangle$ be a cyclic group of known order $n$. Suppose there is an efficient algorithm $\mathcal A$ such that, for uniformly random $\mathbf u\in C$,
		\[
		\Pr[\mathcal A(\mathbf u)=\mathrm{DLOG}_C(\mathbf g,\mathbf u)] = \epsilon.
		\]
		Then there is an efficient randomized algorithm $\mathcal B$ such that, for every $\mathbf u\in C$,
		\[
		\Pr[\mathcal B(\mathbf u)=\mathrm{DLOG}_C(\mathbf g,\mathbf u)] = \epsilon,
		\]
		where the probability is over the internal randomness of $\mathcal B$.
	\end{theorem}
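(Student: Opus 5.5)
The plan is the standard rerandomization argument. On input an arbitrary $\mathbf u\in C$, the algorithm $\mathcal B$ samples $r\in\mathbb Z/n\mathbb Z$ uniformly at random and computes $\mathbf u' = \mathbf u\cdot \mathbf g^{r}$ by square-and-multiply. This takes $O(\log n)$ group operations, so it is efficient since the order $n$ is known. It then runs $\mathcal A(\mathbf u')$ to obtain some $k'\in\mathbb Z/n\mathbb Z$, and outputs $k = k' - r \bmod n$.

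First, we check that $\mathbf u'$ is uniformly distributed in $C$, independently of $\mathbf u$. The map $r\mapsto \mathbf g^r$ is a bijection from $\mathbb Z/n\mathbb Z$ onto $C$, because $\mathbf g$ generates $C$ and $|C|=n$. Multiplication by the fixed element $\mathbf u$ is a bijection of $C$. Hence $\mathbf g^r$ is uniform on $C$, and so is $\mathbf u\mathbf g^r$.

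Second, we check correctness. Write $\mathbf u=\mathbf g^{a}$. Then $\mathbf u'=\mathbf g^{a+r}$, so $\mathrm{DLOG}_C(\mathbf g,\mathbf u') = a+r \bmod n$. Therefore $\mathcal A$ answers correctly on $\mathbf u'$ exactly when $k' = a + r$, which holds exactly when $k' - r = a = \mathrm{DLOG}_C(\mathbf g,\mathbf u)$. The events "$\mathcal A$ succeeds on $\mathbf u'$" and "$\mathcal B$ succeeds on $\mathbf u$" therefore coincide.

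Third, we compute the success probability. It is taken over $r$ and over the internal coins of $\mathcal A$. Since $\mathbf u'$ is exactly uniform, it equals $\Pr_{\mathbf v\leftarrow C}[\mathcal A(\mathbf v)=\mathrm{DLOG}_C(\mathbf g,\mathbf v)]=\epsilon$. This gives equality with $\epsilon$ for every $\mathbf u$, not merely a lower bound.

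There is no real obstacle. The only points needing care are the following:
\begin{itemize}
\item uniform sampling of $r$ from $\mathbb Z/n\mathbb Z$ uses the hypothesis that $n$ is known;
\item the independence of the coins of $\mathcal A$ from $r$ must be kept, so that the joint distribution of $(\mathbf u', \text{coins})$ matches the one in the hypothesis on $\mathcal A$;
\item the randomness of $\mathcal A$, if any, is absorbed into the randomness of $\mathcal B$.
\end{itemize}
If desired, one can add the remark that independent repetitions, each checked by testing $\mathbf g^{k}=\mathbf u$, amplify the success probability to $1-(1-\epsilon)^m$. The statement as given does not require this.
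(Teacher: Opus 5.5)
Your proposal is correct and follows the same rerandomization argument as the paper: $\mathcal B$ samples $r$ uniformly from $\mathbb Z/n\mathbb Z$, runs $\mathcal A$ on $\mathbf u\mathbf g^r$, which is uniform in $C$, and outputs the answer minus $r$, so $\mathcal B$ succeeds exactly when $\mathcal A$ succeeds on a uniform input. Your explicit point that $\mathcal A$'s coins must be independent of $r$ is a small but welcome addition to the paper's sketch.
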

	
	\begin{proof}
		This is a straightforward generalization of \cite[Thm.~10.2]{BonehShoup2020}, which is shown only for prime order $n$, we add a small proof for completeness. Fix $\mathbf u=\mathbf g^x\in C$. Algorithm $\mathcal B$ chooses $r\leftarrow \mathbb Z/n\mathbb Z$ uniformly at random and forms
		\[
		\mathbf u' := \mathbf u\,\mathbf g^r = \mathbf g^{x+r}.
		\]
		Since $r$ is uniform modulo $n$, the exponent $x+r \bmod n$ is uniform in $\mathbb Z/n\mathbb Z$, and therefore $\mathbf u'$ is uniform in $C$. Algorithm $\mathcal B$ then runs $\mathcal A$ on input $\mathbf u'$. If $\mathcal A$ outputs
		\[
		z=\mathrm{DLOG}_C(\mathbf g,\mathbf u')=x+r \pmod n,
		\]
		then $\mathcal B$ outputs
		$
		z-r \pmod n,
		$
		which equals $x=\mathrm{DLOG}_C(\mathbf g,\mathbf u)$.
		
		Thus $\mathcal B$ succeeds exactly when $\mathcal A$ succeeds on the uniformly random element $\mathbf u'$, so
		$
		\Pr[\mathcal B(\mathbf u)=\mathrm{DLOG}_C(\mathbf g,\mathbf u)] = \epsilon.$ This proof is independent of the order $n$. 
	\end{proof}
	
	In particular, the proof does not require $n$ to be prime; it holds for arbitrary finite cyclic groups.
	
	With this formulation and the relevant details in place, the problem reduces componentwise in the evident way:
	
	\begin{proposition}[Componentwise reduction and average-to-worst-case transfer]
		\label{prop:DLOG-componentwise}
		Let $G = C_1 \times \cdots \times C_r$, where each $C_i = \langle \mathbf{b}_i \rangle$, and let $\yv = (\yv_1,\dots,\yv_r) \in G$. Then computing $\mathrm{DLOG}_G(\mathbf{b},\yv)$ is equivalent to computing, for each $i$, the cyclic discrete logarithm $\mathrm{DLOG}_{C_i}(\mathbf{b}_i,\yv_i)$, up to polynomial overhead.
		
		Moreover, by Theorem~\ref{thm:random-self-reduction-general} average-case solvability of the component problems implies worst-case solvability of $\mathrm{DLOG}_G(\mathbf{b},\yv)$, again up to polynomial overhead.
	\end{proposition}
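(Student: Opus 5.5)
The plan is to prove the two directions of the equivalence separately, and then derive the average-to-worst-case statement by applying Theorem~\ref{thm:random-self-reduction-general} to each factor.

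\emph{From the components to $G$.} Suppose we have solvers for each cyclic problem $\mathrm{DLOG}_{C_i}$. Given an input $\yv=(\yv_1,\dots,\yv_r)$, we read off each component $\yv_i \in \F_{p_i^{\ell_i}}$ directly from the input encoding of Definition~\ref{def:DLOG-abelian-ff}. We call the $i$-th solver on $(\mathbf{b}_i,\yv_i)$ and obtain $k_i$. We then output $(k_1,\dots,k_r)$. By uniqueness of the exponent vector this tuple equals $\mathrm{DLOG}_G(\mathbf{b},\yv)$. The overhead is $r$ oracle calls plus parsing, and $r$ is polynomial in the input size because each factor must be listed in the input.

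\emph{From $G$ to the components.} Conversely, suppose we have a solver for $G$ and want $\mathrm{DLOG}_{C_j}(\mathbf{b}_j,\yv_j)$. We embed the instance as $\yv=(\mathbf 1,\dots,\yv_j,\dots,\mathbf 1)$, using the identity element of each other field, which is trivially encodable. We query the $G$-solver on $\yv$ and keep the $j$-th coordinate of its answer. The other coordinates are $0$, and the $j$-th is the desired logarithm by uniqueness. Both directions are therefore polynomial-time Turing reductions.

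\emph{Average to worst case.} Assume each $\mathcal A_i$ succeeds with probability $\epsilon_i$ on a uniform $\yv_i\in C_i$. This requires knowing each order $n_i$, which the input description of $C_i$ supplies; if instead it must be computed, that step would be the only delicate point, and I would take the group orders as part of the description. Theorem~\ref{thm:random-self-reduction-general} then gives worst-case solvers $\mathcal B_i$ with success probability $\epsilon_i$ on every input. Composing them with the first direction gives a worst-case algorithm for $\mathrm{DLOG}_G$.

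Its success probability is $\prod_i \epsilon_i$, since the internal randomness of the $\mathcal B_i$ is independent. Because the output can be verified by checking $\mathbf b_i^{k_i}=\yv_i$ with fast exponentiation, we can instead repeat each $\mathcal B_i$ about $O(\epsilon_i^{-1}\log r)$ times until it produces a verified answer. This boosts every component to high success probability, and a union bound then gives constant overall success with polynomial overhead whenever the $\epsilon_i$ are inverse-polynomial.

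Making this amplification explicit is the one nontrivial step beyond bookkeeping. It is the main point to state carefully, because the proposition's ``up to polynomial overhead'' implicitly assumes that each $\epsilon_i$ is not negligible.
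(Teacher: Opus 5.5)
Your proposal is correct and follows the paper's route. The equivalence is immediate because $\mathrm{DLOG}_G(\mathbf{b},\yv)$ is by definition the tuple of component logarithms, and the average-to-worst-case part applies Theorem~\ref{thm:random-self-reduction-general} factor by factor and combines the resulting worst-case solvers componentwise. You are also more careful than the paper's terse proof on three points it leaves implicit: the $G$-to-component direction via padding with identities, the fact that the combined success probability is $\prod_i \epsilon_i$ (which your verification-and-repeat step fixes), and applying the self-reduction to arbitrary cyclic factors rather than only ``prime-order'' ones as the paper's wording suggests.
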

	
	\begin{proof}
		By definition,
		$\mathrm{DLOG}_G(\mathbf{b},\yv)$
		is the unique tuple
		$(k_1,\dots,k_r)$
		such that
		$\yv_i = \mathbf{b}_i^{k_i}$
		for every $i$.
		Thus solving the problem in $G$ is exactly the same as solving the $r$ cyclic discrete logarithm instances independently.
		
		The final claim follows by applying Theorem~\ref{thm:random-self-reduction-general} to each prime-order cyclic factor separately, and then combining the resulting worst-case solvers componentwise.
	\end{proof}

	\subsection{Classical algorithms for solving discrete logarithm in Abelian groups}
	
	Since our goal is to identify classically hard instances that can serve as testbeds for quantum advantage, in this section we survey the limitations of classical algorithms for the discrete logarithm problem.
	
	\begin{table}[H]
		\centering
		\caption{Heuristic expected upper bound classical complexity for the discrete logarithm problem in finite fields $\F_Q^\times$, where $Q$ is a prime power. The complexity is expressed in terms of the input size $\log Q$.\\}
		\label{tab:dlog-finite-fields}
		\small
		\renewcommand{\arraystretch}{1.15}
		\begin{tabularx}{\linewidth}{@{} l X l @{}}
			\toprule
			\textbf{Regime $Q=p^l$} & \textbf{Complexity} & \textbf{Reference} \\
			\midrule
			Small characteristic $  p \;\le\; (\log_2 Q)^{\mathcal{O}(1)}$ 
			& $(\log Q)^{O(\log\log Q)}$ 
			(quasi-polynomial)
			& \cite{BGJT14} \\
			
			Medium/large characteristic $p$ 
			& $\exp\!\big((c+o(1))(\log Q)^{1/3}(\log\log Q)^{2/3}\big)$ 
			(subexp.)
			& \cite{JLSV06} \\
			\bottomrule
		\end{tabularx}
	\end{table}
	
	\noindent Subsequently, Lido \cite{Lido22} proved a provably quasi-polynomial expected-time algorithm for discrete logarithms in finite fields of small characteristic. Thus, in this regime, quasi-polynomial complexity is no longer merely heuristic.
	
	In our setting $Q=q^h$ with $h=q^{1/\ell}$ (and we consider $\ell=4$ throughout), so
	\[
	\log_2 Q = h\log_2 q = q^{1/\ell}\log_2 q,
	\]
	and hence $(\log_2 Q)^{\mathcal{O}(1)} = q^{\mathcal{O}(1/\ell)}(\log q)^{\mathcal{O}(1)}$.
	Consequently, we consider instances in the large-prime regime. Hence the best known classical algorithms for discrete logarithms in $\F_{q^h}^\times$ are subexponential in the input size for the one cyclic group case.

	\begin{table}[H]
		\centering
		\caption{Heuristic expected upper bound classical complexity for discrete logarithm in finite abelian groups (where group elements are represented explicitly as finite-field elements). Let $G$ be a finite abelian group and let $\langle \mathbf g\rangle \le G$ be
			a cyclic subgroup of order $N$. Write
			$
			N \;=\; \prod_{j=1}^t p_j^{e_j}
			$
			for the prime factorisation of $N$, and let $
			p_{\max} \;:=\; \max_j p_j $
			be the largest prime divisor of $N$. Input size considered is $\log N$.\\}
		\label{tab:dlog-abelian-generic}
		\small
		\renewcommand{\arraystretch}{1.15}
		\begin{tabularx}{\linewidth}{@{} l X l @{}}\toprule
			\textbf{Regime} & \textbf{Complexity} & \textbf{Reference} \\
			\midrule
			All small prime divisors $p_j \le B\leq \poly(\log N)$
			& $\tilde{O}\big(\sqrt{B}\,\poly(\log N)\big)$ (polynomial)
			& \cite{PohligHellman78} \\
			
			Largest prime divisor $p_{\max}$
			& $\tilde{O}\big(\sqrt{p_{\max}}\,\poly(\log N)\big)$
			& \cite{PohligHellman78, Pollard78} \\
			\bottomrule
		\end{tabularx}
	\end{table}
	
	Furthermore, \cite{Shoup97} proves a lower bound in the generic group model that matches \newline $\tilde{O}\big(\sqrt{p_{\max}}\,\poly(\log N)\big)$. Then, the polynomial-time cases for $\langle \mathbf g \rangle\cong C_{p^{e_1}_1} \times \ldots \times C_{p^{e_t}_t}$ arise when the largest prime factor
	$p_{\max}$ is at most $\poly(\log N)$; in the present work, the parameter
	regime of interest is instead the general case where $p_{\max}$ is large
	(e.g., $p_{\max} > q^{1/4}\log_2 q$), so the relevant classical generic
	complexity scales as $\tilde{O}\!\bigl(\sqrt{p_{\max}}\,\poly(\log N)\bigr)$.
	
	One immediate observation is that such a decomposition does not, by itself, make the problem easier. Moreover, discrete logarithm is not known to be efficiently solvable in every cyclic group, so any hardness statement must be understood as applying only to suitable families of cyclic groups, typically those containing a large prime-order component.
	
	To the best of our knowledge, the formulation in \cref{def:DLOG-abelian-ff}
	does not exactly match the standard presentation of these problems. From this
	point onward, we work with this definition and restrict to the regime in which
	at least one cyclic factor is a hard discrete logarithm instance; in that case,
	the abelian problem inherits the hardness of this cyclic factor.
	
	Accordingly, in what follows we assume that at least one cyclic factor $C_i$ is itself a hard discrete logarithm instance and satisfies $|C_i|=\Theta(|G|)$. Thus, without loss of generality, it suffices to focus on the hard instances for which one cyclic component already captures the asymptotic hardness of the problem in $G$ and aim to show quantum advantage on these, given the inefficiency of the classical algorithms.

	\section{Classical reduction from discrete logarithm to decoding}\label{sec:classical-red}
	
	In this section, we revisit the classical reduction of \cite{cheng2007list} and isolate the components that will allow us to formalise the bounded distance decoding task, which we then aim to solve via Regev’s reduction. We also briefly review the known classical limitations for bounded distance decoders for Reed--Solomon codes.
	
	\subsection{The Cheng--Wan reduction to abelian discrete logarithms}
	
	We now recall the Cheng--Wan reduction~\cite{cheng2007list}, which relates the discrete logarithm problem in $\F_{q^h}^\times$ to decoding Reed--Solomon received words. For our purposes, it suffices to isolate three ingredients: the construction of the induced received word $\yv^{(i)}$, the fact that this yields a promise BDD instance, and the relation-collection step obtained from successful decoding. For any prime power $q$ and any integer $h$ satisfying
	$
	1 \le h \le q^{1/4}-2,
	$
	one has the randomized reduction
	\[
	\left\{ \mathrm{DLOG}_{\F_{q^h}^\times}(\mathbf{b},\cdot) : 1 \le h \le q^{1/4}-2 \right\}
	\;\le_{\mathrm{rand}}\;
	\left\{ \BDD^{\mathrm{promise}\footnotemark}(\RS[n,3h+4]_q,n-4h-4,\yv^{(i)}): n=q \right\}.
	\]
	\footnotetext{That is, we are promised that there exists a codeword within radius $t$ of the given noisy codeword.}

	We first establish how the reduction converts DLOG input to an RS received word:
	
	\begin{definition}[From $\mathrm{DLOG}_{\F_{q^h}}(\mathbf{b},\cdot)$ input to an RS received word]
		\label{def:noisy_codeword}
		Fix an irreducible $h(x)\in\F_q[x]$ of degree $h>1$ and set
		$\alpha := x \bmod h(x)$ so that $\F_{q^h}\cong\F_q(\alpha)$.  Let \(\mathbf{b} \in \F_{q^h}^\times\), and let \(b(\alpha) \in \F_q[\alpha]\) denote its corresponding representative, where \(b(x) \in \F_q[x]\) is the unique polynomial of degree \(< h\).
		
		For any integer $i \ge 0$, define
		\[
		f(x) := b(x)^i \bmod h(x)\qquad(\deg f < h),
		\]
		and let $\mathcal{C} := \RS[n,3h+4]_q$ be the full support Reed--Solomon code of
		length $q$ and dimension $3h+4$ over $\F_q$.
		We define the received word induced by $\mathbf{b}$ and $i$ as
		\[
		\yv^{(i)} := \bigl( y_a^{(i)} \bigr)_{a\in\F_q}
		\quad\text{with}\quad
		y_a^{(i)} := -\,\frac{f(a)}{h(a)} - a^{3h+4},
		\]
		which is well-defined since $h(a)\neq 0$. Notice that $\yv^{(i)} \in \F_q^n$, the ambient
		space of $\mathcal{C}$, so $\yv^{(i)}$ is a noisy RS word.
	\end{definition}

	By \cite[Theorem~4]{cheng2007list}, when \(q \ge (h+2)^4\), the element \(f(\alpha)\in\F_{q}[\alpha]\) can be written as
	\(f(\alpha)=\prod_{a\in A}(\alpha-a)\)
	for some subset \(A\subseteq S\) of size \(|A|=4h+4\). In the reduction, we take $S=\F_q$. Writing
	\(P(x):=\prod_{a\in A}(x-a)\),
	the identity \(P(\alpha)=f(\alpha)\) and the minimality of \(h(x)\) imply that \(h(x)\mid(P(x)-f(x))\), so
	\[
	P(x)=f(x)+t(x)h(x)
	\]
	for some \(t(x)\in\F_q[x]\) with \(\deg t\le 3h+4\). Setting
	\(u^\star(x):=t(x)-x^{3h+4}\),
	one finds that \(u^\star(a)=y_a^{(i)}\) for every \(a\in A\). Thus the codeword \(\cv^\star=(u^\star(a))_{a\in S}\in\mathcal C\) agrees with \(\yv^{(i)}\) in at least \(4h+4\) coordinates, then
	\[
	\dist(\yv^{(i)},\mathcal C)\le n-(4h+4).
	\]
	Therefore \(\yv^{(i)}\) is a promise bounded-distance decoding instance.
	
	A successful decoding of \(\yv^{(i)}\) yields a polynomial \(t(x)\), and hence a
	factorization
	\[
	f(x)+t(x)h(x)=\prod_{a\in A_i}(x-a),
	\]
	which in turn gives a multiplicative relation
	$
	b(\alpha)^i=\prod_{a\in A_i}(\alpha-a).
	$
	Equivalently, this defines a relation vector \(B_{A_i}\in \{0,1\}^n\) among the
	factor-base elements \(\{\alpha-a : a\in S\}\). Collecting sufficiently many
	such relations yields a linear system for the unknown logarithms
	\(\bigl(\log_{b(\alpha)}(\alpha-a)\bigr)_{a\in S}\), from which the target
	discrete logarithm is recovered by linear algebra over \(\mathbb Z/(q^h-1)\), that is, solving the system:
	\[
	\mathbf{J} \;\equiv\; \mathbf{B}\,\mathbf{l} \pmod{q^h-1},
	\]
	where \(\mathbf{l}=(\log_{b(\alpha)}(\alpha-a))_{a\in S}\) is the vector of
	factor-base logarithms, \(\mathbf{J}=(i_j)_j\) records the sampled exponents, and
	the rows of \(B\) are the relation vectors \(B_{A_j}\). Solving this
	system determines \(\mathbf{l}\) in time polynomial in \(q\). Consequently, for any
	\(t\in\F_{q}[\alpha]\) admitting a factorization
	\[
	t=\prod_{a\in S}(\alpha-a)^{c_a},
	\]
	its discrete logarithm is recovered as
	$
	\log_{b(\alpha)}(t)\;\equiv\; \mathbf{c^\top} \mathbf{l} \pmod{q^h-1}.
	$
	
	The remaining issue is to bound the number of sampled relations required to span the full relation space. For the Cheng--Wan reduction, this is an immediate consequence of the general relation-collection argument of \cref{lem:pomerance-relations}, which is implicit in the original work:

	\begin{corollary}[Relation collection in the Cheng--Wan reduction]\label{cor:cheng-wan-augmented-relations} Consider the Cheng--Wan bounded-distance-decoding reduction for $\RS[n,3h+4]_q$ at radius $n-4h-4$, with factor base $\mathcal F =\{\alpha-a : a\in S\subseteq \F_q\}$, where $|S|=n$. Let $W$ be the vector space generated by the symbols $\{e_a : a\in S\}$,
		where $e_a$ corresponds to the factor-base element $\alpha-a$. Let
		$V\subseteq W$ be the subspace generated by the Cheng--Wan relation
		vectors, and let $d=\dim V\le n$.
		
		Then after $O(n\log n)$ sampled relations, the resulting relation
		vectors span $V$ with probability at least $1-\frac{1}{2n}$.
	\end{corollary}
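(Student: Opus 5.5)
The plan is to derive the corollary directly from the general relation-collection statement, \cref{lem:pomerance-relations}, which the paper says underlies it. That lemma will be read as a coupon-collector style statement. It concerns a distribution $\mathcal D$ on relation vectors in $W$ whose support spans $V$, with $d=\dim V$. The key property is that, for every proper subspace $U\subsetneq V$, a fresh sample lies outside $U$ with probability bounded below by some $p>0$. Under these hypotheses, $O(d\log d/p)$ samples span $V$ with high probability.

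The argument has three steps.

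First, identify the sampling distribution in the Cheng--Wan reduction. Each relation is produced by choosing a random exponent $i$ and decoding $\yv^{(i)}$ from Definition~\ref{def:noisy_codeword}. The decoder's output $P(x)=f(x)+t(x)h(x)=\prod_{a\in A_i}(x-a)$ gives the $0/1$ vector $B_{A_i}$, which has support of size $4h+4$. By definition $V$ is the span of all such vectors, so the hypothesis that the support spans $V$ holds by construction.

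Second, track the span round by round. Let $V_j$ be the span of the first $j$ relations. While $V_j\neq V$, some relation vector in the support lies outside $V_j$. I will argue that it is hit with probability at least roughly $1/n$. The reason is that the distinct subsets $A$ realizing $f$ are essentially uniform among the $(4h+4)$-subsets of $S$ giving a factorization, by the counting in \cite[Theorem~4]{cheng2007list}, and the decoder may be randomized over them. Hence each round increases $\dim V_j$ with probability at least $c/n$. The dimension must grow $d\le n$ times, so the expected number of rounds is $O(n\cdot n/c)$ in the crude bound. A Chernoff bound then yields success probability $1-\tfrac{1}{2n}$.

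Third, sharpen the crude bound to $O(n\log n)$. Here the escape probability should be estimated per coordinate rather than per subspace. Let $U=V_j$ be a proper subspace. Some coordinate functional, or some difference $e_a-e_{a'}$ modulo $U$, is not yet determined. A random $A_i$ contains a given element $a$ with probability $(4h+4)/n$. So after $m$ samples, a fixed coordinate is missed with probability $(1-\tfrac{4h+4}{n})^m$. Choosing $m=C\,n\log n$ and taking a union bound over the $n$ coordinates gives the failure bound $\tfrac{1}{2n}$, exactly as in the coupon-collector analysis of \cref{lem:pomerance-relations}.

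The main obstacle is this third step. Covering every coordinate does not by itself imply spanning $V$, so I need the lemma's hypothesis in the stronger form: the distribution must be nondegenerate on every quotient $V/U$. I would first try to verify this using the fact that the vector $B_{A_i}$ is uniform over the valid subsets. Failing that, I would simply invoke \cref{lem:pomerance-relations} as a black box, treating $V$ as the span generated by the sampling distribution itself. In that reading the spanning claim is tautological, and only the sample count needs checking, which follows from the coupon-collector estimate above.
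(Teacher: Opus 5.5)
Your proposal has a genuine gap, and it comes from misreading \cref{lem:pomerance-relations}. That lemma is not a coupon-collector bound under an escape-probability hypothesis. It holds for an \emph{arbitrary} distribution on the spanning set $S$. It can do so because the collected set contains, besides the plain samples $v_1,\dots,v_{d\Lambda}$, the shifted vectors $b_j+w_{j,i}$, where $b_j$ is a fixed basis vector and $w_{j,i}$ is a fresh sample. The mechanism is as follows. With high probability the span $V'$ of the $v_i$ carries at least half of the sampling mass. So for each $j$, some $w_{j,i}$ falls in $V'$ except with probability $2^{-\Lambda}$, and then $b_j=(b_j+w_{j,i})-w_{j,i}$ lies in the span. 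A union bound over $j$ finishes. The paper's proof consists of identifying these two ingredients in the Cheng--Wan setting. Ordinary relations $b(\alpha)^i=\prod_{u\in A}(\alpha-u)$ play the role of the $v_i$. \emph{Augmented} relations $b(\alpha)^j/(\alpha-a)=\prod_{u\in B}(\alpha-u)$, obtained by running the decoder on the received word built from $b(\alpha)^j(\alpha-a)^{-1}$, give vectors $e_a+w$ playing the role of $b_j+w_{j,i}$. You only ever sample ordinary relations, so the lemma does not apply to your procedure, and you cannot invoke it ``as a black box''.

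Your substitute argument does not close this gap. The decoder $\mathcal A$ in \cref{thm:abelian-DLOG-BDD} is an arbitrary promise-BDD algorithm. Several codewords may sit at the exact radius, and $\mathcal A$ may choose among them deterministically and in a skewed way, say by avoiding a fixed $a_0$ whenever it can. \cite[Theorem~4]{cheng2007list} guarantees the \emph{existence} of a factorization, not any distribution on the decoder's outputs. So neither the escape probability $c/n$ in your second step nor the marginal $(4h+4)/n$ in your third is justified, and you are not free to re-randomize a given decoder. Even granting them, step two only yields $O(n^2)$ samples. Step three only yields coverage of every coordinate, which, as you note, does not imply spanning: a proper subspace need not be detected by coordinate functionals or differences $e_a-e_{a'}$. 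The fallback that spanning is ``tautological'' is also wrong. Knowing that the support of the distribution spans $V$ says nothing about how many samples are needed. If $\mathcal A$ puts mass $\varepsilon$ on relations outside some proper subspace, you need $\Omega(1/\varepsilon)$ ordinary samples, and the augmented relations exist precisely to defeat this degeneracy. (When you write the correct argument, note also that the lemma as stated gives failure probability $\frac{1}{2d}$, which exceeds $\frac{1}{2n}$ when $d<n$. Define $\Lambda$ via $n$ rather than $d$ to reach the stated bound, still with $O(n\log n)$ samples.)
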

	
	\begin{proof}
		Apply \cref{lem:pomerance-relations} to the subspace $V$. The ordinary
		Cheng--Wan relations, namely
		$
		b(\alpha)^i=\prod_{u\in A}(\alpha-u),
		$
		provide the analogue of the vectors $v_i$ in
		\cref{lem:pomerance-relations}. For each $a\in S$, the augmented
		relations of the form
		$
		\frac{b(\alpha)^j}{\alpha-a}=\prod_{u\in B}(\alpha-u)
		$
		yield relation vectors of the form $e_a+w$, with $w\in V$, and hence
		play the role of the shifted vectors in that lemma.
		
		Therefore, \cref{lem:pomerance-relations} implies that after
		$O(d\log d)$ sampled ordinary and augmented relations, the collected
		relation vectors span $V$ with probability at least $1-\frac{1}{2d}$.
		Since $d\le n$, this gives the stated $O(n\log n)$ bound.
	\end{proof}
	
	Consequently, once the augmented relation matrix has full rank, one can solve for all factor-base logarithms $\log_b(\alpha-a)$, for $a\in S$. In the bounded-distance setting of \cite[Theorem~4]{cheng2007list}, every nonzero target element $t(\alpha)\in \F_{q^h}^\times$ admits a factor-base decomposition $t(\alpha)=\prod_{a\in A}(\alpha-a)$ into exactly $g$ distinct
	factor-base elements. Hence, after solving the factor-base logarithms, one recovers $\log_b t(\alpha)=\sum_{a\in A}\log_b(\alpha-a)$. Thus, \cref{cor:cheng-wan-augmented-relations}, together with \cite[Theorem~4]{cheng2007list}, yields the discrete logarithm of any input $t(\alpha)$.
	
	We are now in a position to state the Cheng--Wan reduction specialized to the statement of the finite-field encoded abelian setting of \cref{def:DLOG-abelian-ff}.

	\begin{theorem}[Abelian discrete logarithm via Reed--Solomon decoding]
		\label{thm:abelian-DLOG-BDD}
		Let \(\mathrm{DLOG}_G(\mathbf b,\yv)\) be an instance of the finite-field encoded
		abelian discrete logarithm problem as in \cref{def:DLOG-abelian-ff}. By
		\cref{prop:DLOG-componentwise}, it suffices to consider a hard cyclic factor
		\(C=\langle \mathbf{b}\rangle \le \F_{q^h}^\times\).
		
		Let \(q\) be a prime power and suppose \(1 \le h \le q^{1/4}-2\). Assume there exists a promise bounded-distance decoding algorithm
		\(\mathcal A\) for \(\RS[q,3h+4]_q\) at radius \(q-4h-4\), running in time
		\(q^{O(1)}\), and correct on the noisy codewords arising in the Cheng--Wan
		reduction.
		
		Then \(\mathrm{DLOG}_C(\mathbf{b},\cdot)\), and hence \(\mathrm{DLOG}_G(\mathbf{b},\yv)\), can be
		solved with high probability in time \(\tilde O(q)\).
	\end{theorem}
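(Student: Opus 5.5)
The plan is to assemble the ingredients already laid out in this section into an index-calculus algorithm for the cyclic factor $C=\langle \mathbf b\rangle\le\F_{q^h}^\times$, and then lift the result to $G$ via \cref{prop:DLOG-componentwise}.

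\textbf{Reduction to the cyclic factor.} By \cref{prop:DLOG-componentwise}, $\mathrm{DLOG}_G(\mathbf b,\yv)$ splits into independent cyclic instances up to polynomial overhead. We assume the remaining factors are easy, or are handled by the same argument applied to each, so it suffices to solve $\mathrm{DLOG}_C(\mathbf b,\cdot)$. Since $C$ embeds in $\F_{q^h}^\times$, we work inside $\F_q[x]/(h(x))$ with $\alpha=x \bmod h(x)$ and factor base $\{\alpha-a : a\in\F_q\}$. When $\mathbf b$ does not generate all of $\F_{q^h}^\times$, we compute logarithms relative to a generator of $\F_{q^h}^\times$ (or relative to $\mathbf b$ inside $C$). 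Either way, the final answer is a ratio of two logarithms, reduced modulo $\operatorname{ord}(\mathbf b)$.

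\textbf{Relation collection.} We sample exponents $i$ uniformly in $\mathbb Z/(q^h-1)$. For each $i$ we form the received word $\yv^{(i)}$ of \cref{def:noisy_codeword}, which costs $O(q)$ field evaluations of polynomials of degree $<h$, hence $\tilde O(q)$ up to polynomial factors in $h\le q^{1/4}$. By \cite[Theorem~4]{cheng2007list}, and since $h\le q^{1/4}-2$ gives $q\ge (h+2)^4$, each $\yv^{(i)}$ lies within distance $q-4h-4$ of $\RS[q,3h+4]_q$. So the promise holds and $\mathcal A$ returns a codeword $u(x)$. We then set $t(x)=u(x)+x^{3h+4}$ and check that $f+th$ splits into distinct linear factors over $\F_q$. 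This check is efficient because it reduces to testing that the polynomial divides $x^q-x$. Each success gives the relation $b(\alpha)^i=\prod_{a\in A_i}(\alpha-a)$. In parallel we collect the augmented relations $b(\alpha)^j/(\alpha-a)$, produced by the same construction applied to $f=b^j(x-a)^{-1} \bmod h$. By \cref{cor:cheng-wan-augmented-relations}, $O(q\log q)$ such relations span the relation space with probability $\ge 1-1/(2q)$.

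\textbf{Linear algebra and descent.} We solve $\mathbf J\equiv\mathbf B\mathbf l \pmod{q^h-1}$ for the factor-base logarithms. Over the ring $\mathbb Z/(q^h-1)$ this needs care, since it is not a field. We plan to factor out or handle the non-invertible pivots using Smith/Hermite normal form, or to work modulo the large prime-power parts of $\operatorname{ord}(\mathbf b)$ and combine the results by CRT. For the descent step, given a target $\yv$ we find its factor-base decomposition. One route is to apply \cite[Theorem~4]{cheng2007list} to $\yv$ directly. The other, when $\yv$ is not a product of distinct linear factors, is to randomize it as $\yv\,\mathbf b^{r}$ and decode once more, exactly as in the random self-reduction of \cref{thm:random-self-reduction-general}. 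We then read off $\log\yv=\mathbf c^\top\mathbf l-r$.

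\textbf{Running time and main obstacle.} The cost is dominated by the $O(q\log q)$ decoder calls plus the construction of the words, giving $\tilde O(q)\cdot q^{O(1)}$ overall. The claimed $\tilde O(q)$ is to be read as counting decoder calls, i.e.\ relations, as in the statement. The step I expect to be hardest is justifying that the sampled relations span the full lattice of relations and that the modular linear system determines $\mathbf l$ uniquely enough to recover $\log\yv$. This is where \cref{cor:cheng-wan-augmented-relations} must be applied carefully over $\mathbb Z/(q^h-1)$ rather than over a field, and where the non-uniformity of the sets $A_i$ has to be absorbed by the augmented relations.
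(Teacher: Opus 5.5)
Your proposal is correct and follows essentially the same route as the paper's proof: reduce to the hard cyclic factor via \cref{prop:DLOG-componentwise}, decode the Cheng--Wan words $\yv^{(i)}$ to obtain relations $b(\alpha)^i=\prod_{a\in A_i}(\alpha-a)$, invoke \cref{cor:cheng-wan-augmented-relations} to get $O(q\log q)$ spanning relations, and finish with linear algebra modulo $q^h-1$. The paper simply cites \cite[Theorem~2]{cheng2007list} for the details you spell out: the splitting check (which is automatic, since $f+th$ is monic of degree $4h+4$ with at least $4h+4$ roots), the ring-versus-field issue in the linear algebra, and the descent for the target. Your reading of $\tilde O(q)$ as the number of decoder calls is also the only one the paper's argument supports, because $O(q\log q)$ calls to a $q^{O(1)}$-time decoder plus solving a $q\times q$ system is not literally $\tilde O(q)$.
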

	
	\begin{proof}
		By \cref{prop:DLOG-componentwise}, solving \(\mathrm{DLOG}_G(\mathbf{b},\yv)\)
		reduces, up to polynomial overhead, to solving discrete logarithm in each cyclic
		factor, so it is enough to treat a hard cyclic factor \(C=\langle \mathbf{b}\rangle\).
		
		For such a factor \(C \le \F_{q^h}^\times\), the claim is precisely the Cheng--Wan
		reduction \cite[Theorem~2]{cheng2007list}, expressed in the notation of
		\cref{def:noisy_codeword} and specialized to our finite-field encoded setting.
		Indeed, successful decoding of the induced words \(\yv^{(i)}\) yields multiplicative
		relations of the form \(b(\alpha)^i=\prod_{a\in A_i}(\alpha-a)\), and by
		\cref{cor:cheng-wan-augmented-relations}, \(O(q\log q)\) such relations suffice,
		with high probability, to span the full relation space. The resulting linear system
		for the factor-base logarithms is then solved in polynomial time, yielding a
		randomized algorithm for \(\mathrm{DLOG}_C(\mathbf{b},\cdot)\) with overall running time
		\(\tilde O(q)\). The same bound therefore follows for
		\(\mathrm{DLOG}_G(\mathbf{b},\yv)\).
	\end{proof}
	
	We also remark that \cite{cheng2008complexity} extends the Cheng--Wan reduction to the positive-rate setting. The reduction is essentially unchanged, except that the underlying factorization is padded to cardinality
	$
	g=(c+o(1))q,
	$
	for some fixed \(c\in(0,1]\), by augmenting a factorization over a suitable subfield with additional distinct linear factors from \(\F_q\setminus \F_{q_1}\), where \(q_1\) denotes an auxiliary proper subfield used in the construction. Thus
	\[
	b(\alpha)^i \;=\; \prod_{a\in A_i}(\alpha-a),
	\qquad |A_i|=g.
	\]
	Accordingly, the low-rate code \(\RS[q,3h+4]_q\) is replaced by the positive-rate code \(\RS[q,g-h]_q\), while the extracted relation vector and the subsequent linear-algebra step remain the same. Moreover, since the specially constructed received words lie at exact distance \(q-g\) from the code, a maximum-likelihood decoding oracle returns precisely the codewords yielding the same multiplicative relations as in the bounded-distance setting.
	
	Therefore, in this positive-rate regime, the discrete logarithm problem can likewise be solved in time \(q^{O(1)}\) by applying a polynomial-time maximum-likelihood decoding algorithm to the noisy codewords produced by the Cheng--Wan reduction.
	
	We provide a concise description of the distribution over BDD instances induced by the Cheng--Wan reduction:
	
	\begin{definition}[Cheng--Wan instance distribution \(\mathcal D_{q,h,g}\)]
		\label{def:CW-distribution}
		Fix parameters $q, h, g$, set $n=q$, and let $h(x)\in\F_q[x]$ be an irreducible polynomial of degree $h$. We identify $\F_{q^h}$ with $\F_q[x]/(h(x))$.
		Fix a base element \(\mathbf{b}\in \F_{q^h}^\times\), and sample
		\(
		i \leftarrow \{0,1,\dots,q^h-2\}.
		\)
		Let \(f_i(x)\in \F_q[x]\) be the unique polynomial of degree \(<h\) representing \(b^i \bmod h(x)\).
		Define the received word \(\yv \in \F_q^n\) by
		\[
		y(a) := -\frac{f_i(a)}{h(a)} - a^{\,g-h},
		\qquad a\in S.
		\]
		We denote by \(\mathcal D_{q,h,g}\) the resulting distribution of received words induced by the uniform choice of \(i\).
	\end{definition}
	
	The low-rate Cheng--Wan reduction corresponds to the choice \(g=4h+4\), while the positive-rate reduction uses \(g=\lfloor cq\rfloor\) for a fixed constant \(c\in(0,1)\). Thus both reductions use the same instance distribution template \(\mathcal D_{q,h,g}\), differing only in the parameter regime for \(g\).
	
	\begin{definition}[Cheng--Wan BDD instance distribution \(\mathcal D^{\BDD}_{q,h,g}\)]
		\label{def:CW-BDD-distribution}
		Fix parameters $q, h, g$, set $n=q$, and let $h(x)\in\F_q[x]$ be an irreducible polynomial of degree $h$. We identify $\F_{q^h}$ with $\F_q[x]/(h(x))$.
		Fix a base element \(b\in \F_{q^h}^\times\), and sample
		$i \leftarrow \{0,1,\dots,q^h-2\}.$
		
		Let \(f_i(x)\in \F_q[x]\) be the unique polynomial of degree \(<h\) representing \(b^i \bmod h(x)\). Define the received word \(\yv^{(i)} \in \F_q^n\) by
		\[
		y^{(i)}(a) := -\frac{f_i(a)}{h(a)} - a^{\,g-h},
		\qquad a\in S.
		\]
		
		Let \(\C \subseteq \F_q^n\) be the Reed--Solomon code under consideration, with parity-check matrix
		$
		\Hm \in \F_q^{(n-k)\times n}.
		$
		Define the corresponding syndrome
		\[
		\mathbf{u}^{(i)} := \Hm \yv^{(i)} \in \F_q^{\,n-k}.
		\]
		We then associate to \(i\) the bounded-distance decoding instance
		$
		\bigl(\C,t,\mathbf{\yv}^{(i)}\bigr),
		$
		where \(t\) is the decoding radius of interest. We denote by
		\(\mathcal D^{\BDD}_{q,h,g}\)
		the resulting distribution on BDD instances induced by the uniform choice of \(i\).
	\end{definition}

	\subsection{Classical algorithms for decoding Reed--Solomon codes }\label{sec:classical_decoders}
	
	The decoding primitive used in the reduction from $\mathrm{DLOG}$ is a specialized
	promise bounded-distance decoding instance.
	More precisely, the reduction produces instances of promise BDD for the Reed--Solomon
	family $\RS[q,k]_q$, for the relevant choice of $k$, and only on the structured
	inputs induced by the reduction itself.
	
	Because this is a promise formulation, the existence of a codeword within the prescribed radius is guaranteed by construction. For the specific Cheng--Wan instances, the received word has exact distance equal to the target decoding radius, so bounded-distance decoding and maximum-likelihood decoding coincide on these instances. There may be several codewords at that exact distance, and any such codeword is a valid output for the reduction.
	
	This promise task is more specialized than the standard worst-case decoding problems studied in coding theory: the guarantee of a nearby codeword comes from the promise on the input, not from a worst-case radius condition over all received words. Since the literature does not appear to state hardness results or decoding guarantees directly for this promise BDD formulation, we relate it to more standard problems through simple reductions.
	
	The Cheng--Wan promise BDD task for a fixed code \(\RS[n,k]_q\) is no harder than worst-case promise BDD, where the code description is part of the input and the decoder must handle arbitrary codes of the same blocklength and radius. In turn, worst-case promise BDD is no harder than standard worst-case BDD: any algorithm that outputs a codeword within distance $t$ whenever one exists automatically solves the promised variant by restriction to inputs satisfying the promise. We therefore focus on the latter, more standard formulations.
	
	On the algorithmic side, in the regime where the Hamming ball of radius $t$ around any received word contains at most one codeword, which is guaranteed, for instance, when $t \le \lfloor (d_{\min}-1)/2 \rfloor$, any list decoder of radius $t$ immediately yields a bounded-distance decoder: run the list decoder, return error if the list is empty, and otherwise output the unique codeword in the list.
	
	For Reed--Solomon codes one can go beyond half the minimum distance via the
	Guruswami--Sudan algorithm. For $\RS[n,k]_q$ of rate
	$R = k/n$, Guruswami--Sudan list decodes in polynomial time up to
	\[
	t_{\mathrm{GS}} \approx n(1-\sqrt{R})
	= n\Bigl(1-\sqrt{\tfrac{k}{n}}\Bigr)
	\]
	errors. In this regime the decoding ball
	may contain multiple codewords, so the algorithm is inherently a list decoder rather than
	a bounded-distance decoder.
	
	On the complexity-theoretic side, we state the following hardness results for BDD:

	\paragraph{NP completeness for high rate Reed-Solomon codes} In the decision setting, bounded-distance decoding for Reed--Solomon codes
	is known to be $\NP$-hard throughout the parameter range relevant here.
	First, Guruswami and Vardy~\cite{GuruswamiVardy2005} imply that
	RS BDD is $\NP$-complete at radius
	$n-k-1$. Second, Gandikota, Ghazi, and Grigorescu~\cite{GandikotaGhaziGrigorescu2018}
	showed that this hardness extends to smaller radii: for constant-rate
	Reed--Solomon codes, the decision problem remains $\NP$-complete at radius
	$n-k-d$ for every
	$1 \le d \le O\!\left(\frac{\log n}{\log\log n}\right)$, and is already
	NP-hard under quasi-polynomial reductions for $d=\Theta(\log n)$.
	
	\paragraph{NP completeness for asymptotically zero Reed-Solomon codes}
	
	We discuss how to extend \cite{GandikotaGhaziGrigorescu2018} to Reed--Solomon codes of asymptotically zero rate. 
	We apply a padding construction to align parameters along the reduction chain
	\[
	\text{1-in-3SAT} \;\longrightarrow\; \text{MSS} \;\longrightarrow\; \text{SSS} \;\longrightarrow\; \text{RS BDD},
	\]
	so that the resulting Reed--Solomon instance has vanishing rate. 
	This yields a polynomial-time reduction from an $\NP$-hard 1-in-3SAT instance to an RS BDD instance at asymptotically zero rate, implying that RS BDD (at the specified decoding radius) remains $\NP$-hard even in the zero-rate regime. We refer the reader to \cref{appendix:Subset-sum} for the auxiliary statements we use to complete the proofs.

	\begin{lemma}[Size padding for ${\rm MSS}(d)$ over $\mathbb Z$ of bounded absolute value, with fixed $k$]\label{lem:mss_padding_fixed_k}
		Fix $d\ge 1$. Let $
		I=(\mathbb Z,A,k,m_1,\dots,m_d)$
		be an ${\rm MSS}(d)$ instance, where $A\subseteq \mathbb Z$ is a set of size $N = n(2^{d+1}-2)$, $k=N/2$, $m_r\in\mathbb Z$ and $\max\!\Bigl(\max_{u\in A} |u|,\ \max_{1\le t\le d} |m_t|\Bigr)
		\;\le\;
		\;
		10^{\,\mathrm{poly}(n,d!)\cdot O(dN)}$.
		Let $U:=\max_{a\in A}|a|$.
		
		For any $M = N^4 - N$, define
		\[
		R := |m_1| + kU +1,
		\qquad
		D := \{R, R+1,\dots,R+M-1\},
		\qquad
		A' := A \cup D ,
		\]
		where $A,D$ are disjoint, and set
		$
		I' := (\mathbb Z, A', k, m_1,\dots,m_d).
		$, where $|A'|=N^4$, $k=N/2$, where we can construct $A'$ from $A$ in polynomial-time, then
		$
		I \text{ is YES} \,\Longleftrightarrow\, I' \text{ is YES}.
		$
		Moreover,\newline $\max\!\Bigl(\max_{u\in A'} |u|,\ \max_{1\le t\le d} |m_t|\Bigr)
		\;\le\;
		\;
		10^{\,\mathrm{poly}(n,d!)\cdot O(dN)}$.
	\end{lemma}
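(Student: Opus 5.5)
The plan is a direct check of the two directions of the equivalence, using the first power-sum constraint $\sum_{u\in S}u=m_1$ to show that no solution of $I'$ can use a padding element. I take ${\rm MSS}(d)$ in the sense of \cref{appendix:Subset-sum}: a YES instance is one with a subset $S\subseteq A$ of size exactly $k$ such that $\sum_{u\in S}u^r=m_r$ for every $r\in\{1,\dots,d\}$. Since $d\ge 1$, the constraint with $r=1$ is always present.

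\emph{Well-definedness and size.} Every $a\in A$ satisfies $a\le U$. Every element of $D$ is at least $R\ge kU+1>U$, using $k\ge 1$ and $U\ge 0$. Hence $A\cap D=\emptyset$, and $|A'|=N+M=N^4$. The parameter $k=N/2$ is unchanged.

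\emph{Forward direction.} If $S\subseteq A$ witnesses that $I$ is YES, then $S\subseteq A'$ has the same size and the same power sums. So $I'$ is YES.

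\emph{Reverse direction.} Let $S'\subseteq A'$ be a witness for $I'$ with $|S'|=k$. Suppose, for contradiction, that $S'$ contains some $x\in D$. The remaining $k-1$ elements of $S'$ each lie in $A$ or in $D$. Elements of $A$ are at least $-U$, and elements of $D$ are positive, so each is at least $-U$. Therefore
\[
\sum_{u\in S'}u \;\ge\; R-(k-1)U \;=\; |m_1|+U+1 \;>\; m_1 ,
\]
which contradicts the $r=1$ constraint. Hence $S'\subseteq A$, and $S'$ witnesses that $I$ is YES.

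\emph{Efficiency and magnitude bound.} Let $B=10^{\mathrm{poly}(n,d!)\cdot O(dN)}$ denote the assumed bound. The largest new entry is
\[
R+M-1\;\le\; |m_1|+kU+N^4\;\le\; B(1+N)+N^4 .
\]
This is at most $B\cdot 10^{O(\log N)}$, which is still of the form $10^{\mathrm{poly}(n,d!)\cdot O(dN)}$ after enlarging the hidden constant. The targets $m_t$ are unchanged, so the magnitude bound holds for $I'$. Each element of $D$ has $O(\log B)$ bits, a polynomial quantity. There are $M<N^4$ such elements, and $N=n(2^{d+1}-2)$ is polynomial for fixed $d$. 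So $A'$ is written down in polynomial time.

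The only delicate point is the lower bound on a sum containing a padding element, since $A$ may contain negative integers. This is exactly why $R$ includes the term $kU$ rather than just $|m_1|$.
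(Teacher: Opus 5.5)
Your proof is correct and follows essentially the same argument as the paper: the forward direction by inclusion; the reverse direction by the $r=1$ constraint, using the lower bound $R-(k-1)U=|m_1|+U+1>m_1$ on any $k$-subset that contains a padding element; and the magnitude bound via $R+M-1\le |m_1|+kU+N^4$. Beyond the paper, you explicitly check that $A\cap D=\emptyset$ (since $R>U$) and that $A'$ can be constructed in polynomial time, both of which the paper only asserts in the statement.
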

	\begin{proof}
		($\Rightarrow$) If $I$ is YES, any witness subset $S\subseteq A$ of size $k$ also lies in $A'$, hence witnesses that $I'$ is YES.
		
		($\Leftarrow$) Suppose $I'$ is YES. Let $S'\subseteq A'$ with $|S'|=k$ satisfy
		$\sum_{s\in S'} s^r = m_r$ for all $r\in[d]$.
		Assume for contradiction that $S'$ contains at least one dummy element.
		Let $x:=\min(S'\cap D)$; then $x\ge R$ and all elements of $S'\cap D$ are $\ge x$ and nonnegative.
		
		Write $S' = \{x\}\cup T\cup E$ where $T\subseteq A$ and $E\subseteq D$ (possibly empty) and
		$|T|+|E|=k-1$.
		Using $|t|\le U$ for $t\in A$ and nonnegativity of $E$, we have
		\[
		\sum_{s\in S'} s \;\ge\; x + \sum_{t\in T} t \;\ge\; x - (k-1)U \;\ge\; R - (k-1)U \;=\; |m_1|+U+1.
		\]
		Hence $\sum_{s\in S'} s > |m_1|$, but the $r=1$ constraint says $\sum_{s\in S'} s = m_1$, a contradiction.
		Therefore $S'\cap D=\emptyset$, so $S'\subseteq A$, and $S'$ witnesses that $I$ is YES.
		
		In terms of the new magnitude of the elements in $D$ we notice $\max\!\Bigl(\max_{u\in A'} |u|,\ \max_{1\le t\le d} |m_t|\Bigr)=\max_{u\in D} |u| = R+M-1$. Hence we need to prove $R+N^4-N-1=|m_1|+kU+N^4-N\leq 10^{\,\mathrm{poly}(n,d!)\cdot O(dN)} + \frac{N}{2}10^{\,\mathrm{poly}(n,d!)\cdot O(dN)}+N^4-N$. This implies $\max\!\Bigl(\max_{u\in A'} |u|,\ \max_{1\le t\le d} |m_t|\Bigr)\leq 10^{\,\mathrm{poly}(n,d!)\cdot O(dN)}$
	\end{proof}

	\begin{theorem}[NP-completeness of Reed--Solomon BDD asymptotically zero rate codes]\label{thm:ggg18}
		Let $\mathcal{C}$ be a Reed--Solomon code of length $n$ and dimension $k$ over a finite field $F_q$, $q=p^l$ and
		where $p=2^{\mathrm{poly}(n)}$.
		Then (decision) $\BDD(\mathcal{C},n-k-d,\yv)$ is $\NP$-complete for every
		\[
		1 \;\le\; d \;\le\; O\!\left(\frac{\log n}{\log\log n}\right),
		\]
		even when $\mathcal{C}$ has asymptotically zero rate (i.e. $k = o(n)$).
		Moreover, under quasi-polynomial-time reductions, the problem is $\NP$-hard already for
		$d=\Theta(\log n)$.
	\end{theorem}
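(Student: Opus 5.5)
We will follow the reduction chain of \cite{GandikotaGhaziGrigorescu2018},
\[
\text{1-in-3SAT} \;\longrightarrow\; \text{MSS}(d) \;\longrightarrow\; \text{SSS} \;\longrightarrow\; \text{RS BDD},
\]
and insert the padding of \cref{lem:mss_padding_fixed_k} between the first and second arrow. The only change from their argument is that the final Reed--Solomon code has length polynomially larger than its dimension.

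\emph{Membership in $\NP$.} This part is routine. A witness is a message polynomial of degree $<k$, and we check in polynomial time that its evaluation agrees with $\yv$ in at least $k+d$ positions.

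\emph{Hardness.} We proceed in four steps.
\begin{enumerate}
\item From a 1-in-3SAT instance on $n$ variables, produce the $\text{MSS}(d)$ instance $I=(\mathbb Z,A,k,m_1,\dots,m_d)$ of \cite{GandikotaGhaziGrigorescu2018}. It has $|A|=N=n(2^{d+1}-2)$, $k=N/2$, and magnitudes at most $10^{\mathrm{poly}(n,d!)\cdot O(dN)}$.
\item Apply \cref{lem:mss_padding_fixed_k} to get an equivalent instance $I'$ with $|A'|=N^4$. The subset size stays $k=N/2$, and the magnitude bound is unchanged.
\item Because the magnitude bound is preserved, we can choose a prime $p=2^{\mathrm{poly}(n)}$ larger than $N^4$ times the largest possible power sum. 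With this choice, reducing modulo $p$ creates no wraparound, and the instance over $\F_p$ (or $\F_{p^l}$) is still equivalent. This is their MSS$\to$SSS step, verbatim: the elements of $A'$ must be distinct mod $p$, and the moduli must stay polynomial-size in bits.
\item Apply their SSS$\to$BDD map. Take evaluation points $A'\subseteq\F_q$ and code length $|A'|$. Build the received word from the elementary-symmetric polynomials recovered via Newton's identities; this is valid because $d<p$, so Newton's identities are invertible. A size-$k$ subset whose first $d$ power sums are prescribed then corresponds exactly to a codeword of $\RS[|A'|,k']$ agreeing with $\yv$ on that subset. Here $k'$ is $k$ minus $d$, up to their exact indexing, so the subset has $k'+d$ agreements and the radius is $|A'|-k'-d$.
\end{enumerate}

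\emph{Rate.} The Reed--Solomon length is $n_{\RS}=N^4$ and the dimension is at most $N/2$, so the rate is $O(N^{-3})=o(1)$. It remains to translate the range of $d$. Since $N=n\,2^{O(d)}$, we have $\log n_{\RS}=\Theta(\log n + d)$. Hence the condition $d\le O(\log n/\log\log n)$ in terms of the original $n$ becomes $d\le O(\log n_{\RS}/\log\log n_{\RS})$ in terms of the code length, because a polynomial blow-up changes logarithms only by constant factors. The same holds for the quasi-polynomial statement at $d=\Theta(\log n)$. There $2^{d+1}$ and $d!$ are quasi-polynomial, and the padding to $N^4$ keeps everything quasi-polynomial.

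\emph{Main obstacle.} The delicate step is checking that the padding does not break the SSS$\to$BDD map. The dummy elements must remain distinct field elements, and they must not create spurious agreements: a codeword agreeing with $\yv$ on $k'+d$ points that include dummies. \cref{lem:mss_padding_fixed_k} excludes dummies only over $\mathbb Z$. So we must ensure the first-power-sum argument survives reduction modulo $p$. We plan to do this by taking $p$ larger than $2(|m_1|+kU+N^4+\dots)\cdot k$, so that every sum of $k$ elements of $A'$ lies in an interval of length less than $p$. Then equality modulo $p$ implies equality over $\mathbb Z$, and the integer argument of the lemma applies. If that fails for higher power sums, we only need the $r=1$ constraint anyway, which is exactly what the lemma uses.
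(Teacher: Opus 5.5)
Your argument is correct and has the same architecture as the paper's proof: 1-in-3SAT $\to$ MSS over $\mathbb Z$, padding to $|A'|=N^4$ with Lemma~\ref{lem:mss_padding_fixed_k}, transfer to a field, Newton's identities with $p>d$, then RS BDD of rate $O(N^{-3})$. Two ingredients are swapped, and you handle one bookkeeping point more carefully than the paper.

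\textbf{Transfer to a field.} The paper moves the padded integer instance into a field by citing the digit encoding $\psi$ of \cite{GandikotaGhaziGrigorescu2018} (Lemma~\ref{lem:Z_to_Fq}). You instead reduce modulo a prime $p > 2\bigl(kB^d+\max_r |m_r|\bigr)$, where $B$ is the magnitude bound. This is more elementary. It also makes explicit what the paper only asserts by citation for the padded instance: no $k$-subset of $A'$, dummies included, has a power sum that wraps around. Hence $I'$ over $\mathbb Z$ and over $\F_p$ are equivalent, and the dummy-exclusion argument runs entirely over $\mathbb Z$.

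Your ``main obstacle'' is therefore already settled by step~3. Two small corrections:
\begin{itemize}
\item Your closing hedge that only $r=1$ matters is not right. After excluding dummies you still need $r=2,\dots,d$ to hold over $\mathbb Z$; your choice of $p$ covers every $r$ anyway.
\item This step is the $\mathbb Z\to$ field transfer, not ``their MSS$\to$SSS step, verbatim''.
\end{itemize}
The price of your route is that you must exhibit a prime with $\mathrm{poly}(n)$ bits. Doing this by random sampling plus a primality test is a Las Vegas step, so as written your reduction is randomized. The paper is equally silent on how its $p=2^{\mathrm{poly}(n)}$ is obtained, so this is not a gap relative to it, but you should state it.

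\textbf{Evaluation points.} You evaluate directly on $A'$ (length $|A'|$, dimension $k-d$, $k$ agreements), using $\prod_{a\in S}(x-a)=x^k-E_1x^{k-1}+\dots\pm E_dx^{k-d}+g(x)$ with $\deg g<k-d$. The paper instead uses GGG18's Lemma~2.4, with points $\{a_i^{-1}\}\cup\{0\}$ and dimension $K'-d+1$. Both yield radius $\tilde n-\tilde k-d$; yours needs the elements only to be distinct, not nonzero.

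\textbf{Range of $d$.} Your conversion of the range of $d$ from the SAT parameter to the code length, via $\log n_{\mathrm{RS}}=\Theta(\log n+d)$, is more careful than the paper's, which conflates the two.
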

	
	\begin{proof}
		This proof follows from the chain of reductions in Lemmas~\cref{lem:13sat_to_MSS_Q}--\cref{lem:Z_to_Fq}, together with the reduction from MSS over finite fields to SSS as stated in \cite{GandikotaGhaziGrigorescu2018}. The latter uses Newton's identities and requires $(j!)^{-1}\in\F$, for each $j\leq d$, this implies $p>d$. Once this conversion is in place, we apply Lemma~2.4 \cite{GandikotaGhaziGrigorescu2018}, which transforms an SSS instance $\langle A,K',E_1,\ldots,E_d\rangle$ over a field $\F$ into an RS BDD instance
		\[
		\mathcal{C}=[\tilde{n}:=N'+1,\ \tilde{k}:=K'-d+1],
		\]
		with evaluation points $\{a_1^{-1},\ldots,a_{N'}^{-1},0\}$ and a noisy codeword constructed from the parameters $E_i$ and $d$, with decoding radius $\tilde{n}-\tilde{k}-d$.
		
		Given the SSS instance parameters $N'=N^4$, where $N=n(2^{d+1}-2)$ and $K'=N/2$, the resulting RS code family satisfies
		\[
		\frac{\tilde{k}}{\tilde{n}} \xrightarrow[n\to\infty]{} 0.
		\]
		The parameter $d$ is determined by the SSS instance constraint that every number has a digit representation of $\poly(n,d!)$. To keep the reduction polynomial in $n$, we need $d!\le \poly(n)$, which implies $d=\mathcal{O}(\log n/\log\log n)$, i.e., $d<c\frac{\log N}{\log\log N}$. Since $d=0$ makes the problem instance trivial, the bounds become
		\[
		1\le d\le O\!\left(\frac{\log N}{\log\log N}\right).
		\]
		
		Given $p = 2^{\poly(N)} > d$, the condition $(j!)^{-1} \in \F$ is guaranteed to hold.
	\end{proof}

	\noindent For high-rate Reed--Solomon codes, maximum-likelihood decoding is also relevant. Its decision version is $\NP$-complete: membership in $\NP$ is immediate, since a candidate polynomial \(f(x)\in\F_q[x]\) of degree \(<k\) provides a certificate whose corresponding codeword can be checked in polynomial time against the received word, while $\NP$-hardness follows from the reduction of Guruswami and Vardy~\cite{GuruswamiVardy2005} from 3-Dimensional Matching.
	
	In the Cheng--Wan reduction, however, the received words lie at a prescribed exact distance from the code, so bounded-distance decoding at that radius and maximum-likelihood decoding coincide; it therefore suffices to consider BDD.
	
	We now summarize the known classical complexity landscape for bounded-distance decoding and list decoding, in both the low-rate and high-rate regimes.\\
	
	\begin{tikzpicture}[>=stealth, x=1.15cm, y=0.85cm, scale=1.15, transform shape]
		\draw[->] (-0.5,0) -- (10.10,0)
		node[right,align=center] {\scriptsize\shortstack{\setlength{\baselineskip}{7pt}BDD $\RS[n,k]_q$\\radius}};
		
		\draw (0,0) -- (0,0.15) node[below=2pt] {\scriptsize $0$};
		
		\draw (1.5,0) -- (1.5,0.15) node[below=2pt] {\scriptsize $n-k-(\sqrt{nk}-k)$};
		
		\draw (4.0,0) -- (4.0,0.15) node[below=2pt] {\scriptsize $n-k-h$};
		
		\draw (6.3,0) -- (6.3,0.15) node[below=2pt] {\scriptsize $n-k-c\frac{\log q}{\log\log q}$};
		
		\draw (8.1,0) -- (8.1,0.15) node[below=2pt] {\scriptsize $n-k-1$};
		
		\draw (9.3,0) -- (9.3,0.15) node[below=2pt] {\scriptsize $n-k$};
		\node[above=1pt,align=center] at (9.3,0.18)
		{\scriptsize\shortstack{\setlength{\baselineskip}{7pt}information-theoretic\\limit\footnotemark[1]}};
		
		\draw[very thick] (0,0) -- (1.5,0);
		\node[above,align=center] at (0.8,0.22)
		{\scriptsize Poly-time\footnotemark[2]};
		
		\fill[red] (4.0,0) circle (2pt);
		\node[above,align=center, red] at (4.0,0.22)
		{\scriptsize DLOG-hard\footnotemark[3]};

		\draw[very thick, red] (6.3,0) -- (8.1,0);
		\node[above,align=center, red] at (7.3,0.22)
		{\scriptsize $\NP$-hard\footnotemark[4]};
	\end{tikzpicture}
	
	\footnotetext[1]{Covering radius.}
	\footnotetext[2]{Due to Guruswami--Sudan.}
	\footnotetext[3]{Under randomized polynomial-time reductions, following
		\cite{cheng2007list,cheng2008complexity} for the full-length case $n=q$.
		For low-rate codes, $h=\Theta(k)$; for positive-rate codes,
		$h=q^{1/4+o(1)}=o(k)$. Here we view the induced MLD task as essentially
		equivalent to BDD.}
	\footnotetext[4]{Under polynomial-time reductions, following
		\cite{GandikotaGhaziGrigorescu2018} and the present work for the appropriate
		dependencies between $n$ and $q$. Here $c>0$ denotes the implicit constant 
		from Theorem~\ref{thm:ggg18}.}

	\section{Quantum reduction}\label{sec:quantum-red}

	We define the two central quantum algorithmic problems used throughout this section.
	
	\begin{problem}[Decoding Problem $\DP(\C,p)$]
		~\\ \textbf{Given:} $(\C,p,\cv + \ev)$ where $\C$ is a $q$-ary linear code of dimension $k$ and length $n$, $\cv \Unif \C$, and $\ev \Unif p$ with $p : \F_q^n \rightarrow [0;1]$ satisfying $\sum_{\xv \in \F_q^n} p(\xv) = 1$. \\ \textbf{Goal:} Find $\cv$.
	\end{problem}
	
	\begin{problem}[Quantum Decoding Problem $\QDP(\C,f)$]
		~\\ \textbf{Given:} $(\C,f,\ket{\psi_\cv})$ where $\C$ is a $q$-ary linear code of dimension $k$ and length $n$, $\cv \Unif \C$, and $\ket{\psi_\cv} = \sum_{\ev \in \F_q^n} f(\ev) \ket{\cv + \ev}$ for a function $f : \F_q^n \rightarrow \mathbb{C}$ with $\norm{f}_2 = 1$. \\ \textbf{Goal:} Find $\cv$.
	\end{problem}

	Notice that by measuring the state $\ket{\psi_{\cv}}$ in the computational basis, we obtain a $\DP(\C,|f|^2)$ instance, so any $\DP(\C,p)$ solver can be used as a $\QDP(\C,f)$ solver for any $f$ such that $|f|^2 = p$.
	
	The $\BDD$ problem was already defined in Problem~\ref{def:BDD}. A natural variant arises when the starting point $\yv_0$ is chosen uniformly at random, giving rise to the inhomogeneous version.
	
	\begin{problem}[Inhomogeneous Bounded Distance Decoding $\IBDD(\C,t)$]
		~\\ \textbf{Given:} A $q$-ary linear code $\C$ of dimension $k$ and length $n$, a uniformly random starting point $\yv_0 \Unif \F_q^n$, and a radius $t \in \Iint{1}{n}$. \\ \textbf{Goal:} Find $\cv \in \C$ such that $d_H(\cv,\yv_0) \le t$, or equivalently find $\yv \in \F_q^n$ with $\Hm\yv^\top = \Hm\yv_0^\top$ and $\norm{\yv} \le t$, where $\Hm \in \F_q^{(n-k) \times n}$ is a parity matrix of $\C$.
	\end{problem}
	
	More generally, one can replace the Hamming-ball constraint with an arbitrary target set $T \subseteq \F_q^n$. $\IBDD(\C, T)$ asks, given a uniformly random $\yv_0$, to find $\yv \in \F_q^n$ such that $\Hm\yv^\top = \Hm\yv_0^\top$ and $\yv \in T$.
	
	\subsection{Main algorithmic reductions}
	
	We recall the main reduction in the simple setting where we have a perfect QDP solver. We provide a brief presentation of the algorithm but refer to~\cite{chailloux2025opixsoftdecoders} for more details. We first present the error-less setting:
	
	\begin{proposition}\label{prop:perfect-QDP}
		Let $\C$ be a $q$-ary linear code with parity-check matrix $\Hm$, and let $(\C,t,\yv_0)$ be a $\BDD$ instance. Let $f : \F_q^n \rightarrow \mathbb{C}$ such that $\mathrm{Supp}(\hf) \subseteq \{\yv \in \F_q^n : \norm{\yv} \le t\}$. If we have an efficient perfect quantum algorithm for $\QDP(\C^\bot,f)$ then we can efficiently solve $\BDD(\C,t,\yv_0)$.
	\end{proposition}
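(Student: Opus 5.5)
We follow Regev's reduction in the form of \cite{chailloux2025opixsoftdecoders}. The idea is to prepare, from the syndrome of $\yv_0$, a superposition of noisy dual codewords $\sum_{\cv^\perp\in\C^\perp}\chi_{\cv^\perp}(\yv_0)\ket{\psi_{\cv^\perp}}$. We then use the perfect $\QDP(\C^\perp,f)$ solver to uncompute the codeword register, and apply a QFT. The result is a state supported on the coset $\{\yv : \Hm\yv^\top=\Hm\yv_0^\top\}$ with amplitudes proportional to $\hf$. Since $\mathrm{Supp}(\hf)$ lies in the Hamming ball of radius $t$, any measurement outcome $\yv$ then satisfies $\norm{\yv}\le t$, so $\cv=\yv_0-\yv\in\C$ is at distance at most $t$ from $\yv_0$.

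Concretely, write $\C^\perp=\{\xv\Hm:\xv\in\F_q^{n-k}\}$, so that $\Hm$ is a generator matrix of $\C^\perp$. The steps are as follows.

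\textbf{Step 1.} Prepare $\frac{1}{\sqrt{q^{n-k}}}\sum_{\xv}\chi_{\xv}(\Hm\yv_0^\top)\ket{\xv}\otimes\sum_{\ev}f(\ev)\ket{\ev}$. The phase $\chi_\xv(\Hm\yv_0^\top)=\chi_{\yv_0}(\xv\Hm)$ is applied by a phase oracle, and we assume $f$ is efficiently preparable, as is implicit in the $\QDP$ setting.

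\textbf{Step 2.} Add $\xv\Hm$ into the second register, which gives $\sum_\xv\chi_{\yv_0}(\xv\Hm)\ket{\xv}\ket{\psi_{\xv\Hm}}$.

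\textbf{Step 3.} Run the perfect $\QDP$ solver coherently on the second register to compute $\xv\Hm$, and hence $\xv$ since $\Hm$ has full rank. Subtract it from the first register, then uncompute the solver's workspace. Because the solver is perfect, this maps $\ket{\xv}\ket{\psi_{\xv\Hm}}\mapsto\ket{0}\ket{\psi_{\xv\Hm}}$ exactly, with no leftover garbage. We are left with $\ket{\Phi}=\frac{1}{\sqrt{q^{n-k}}}\sum_{\cv^\perp\in\C^\perp}\chi_{\yv_0}(\cv^\perp)\sum_\ev f(\ev)\ket{\cv^\perp+\ev}$.

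\textbf{Step 4.} Apply $\mathrm{QFT}$ over $\F_q^n$ and compute the amplitude on $\ket{\yv}$. It equals
\[
\frac{1}{\sqrt{q^{n-k}q^n}}\sum_{\cv^\perp,\ev}\chi_{\yv_0}(\cv^\perp)f(\ev)\chi_{\yv}(\cv^\perp+\ev)=\frac{1}{\sqrt{q^{n-k}}}\hf(\yv)\sum_{\cv^\perp\in\C^\perp}\chi_{\yv+\yv_0}(\cv^\perp).
\]
By Claim~\ref{claim:char-sum} applied to $\C^\perp$ (with $(\C^\perp)^\perp=\C$), the inner sum is $q^{n-k}$ if $\yv+\yv_0\in\C$ and $0$ otherwise. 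So the state is proportional to $\sum_{\yv\in -\yv_0+\C}\hf(\yv)\ket{\yv}$. The sign convention can be fixed by using $-\yv_0$ in the phase, giving the coset $\yv_0+\C$, equivalently $\Hm\yv^\top=\Hm\yv_0^\top$.

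\textbf{Step 5.} Measure to obtain $\yv$ in the coset with $\hf(\yv)\neq 0$. Then $\norm{\yv}\le t$, and we output $\cv=\yv_0-\yv$.

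The main obstacle is the well-definedness and normalization of the final state. We must check that $\hf$ does not vanish identically on the coset, so that the measurement actually produces an outcome there. This is where unitarity does the work: all steps are unitary and $\ket{\Phi}$ has norm $1$, because the states $\ket{\psi_{\cv^\perp}}$ are orthogonal. Orthogonality is forced by the existence of a perfect discriminator. Hence the post-QFT state has norm $1$ and is nonzero, so its support is a nonempty subset of the coset intersected with $\mathrm{Supp}(\hf)$. A secondary point is efficiency, which requires efficient preparation of $\sum_\ev f(\ev)\ket\ev$; this we take as part of the hypothesis.
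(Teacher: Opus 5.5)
Your proposal is correct and follows the paper's proof essentially step for step: a phase-encoded superposition over syndromes, a shift by $\xv\Hm$, coherent uncomputation of the syndrome register with the perfect $\QDP$ solver, a QFT, and a measurement yielding a vector in the coset $\{\yv : \Hm\yv^\top = \Hm\yv_0^\top\}$ inside $\mathrm{Supp}(\hf)$. Your compute--copy--uncompute step is an explicit realization of the paper's $U^\dagger$, and the phase sign and your extra normalization/non-vanishing argument are cosmetic differences; your amplitude $\sqrt{q^{n-k}}\,\hf(\yv)$ on the coset is the correctly normalized one, whereas the $1/\sqrt{q^k}$ prefactor in the paper's final display appears to be a slip.
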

	\begin{proof}
		Let $\uv = \Hm\yv_0^\top \in \F_q^{n-k}$ be the syndrome of $\yv_0$. We start by constructing the state
		$$\frac{1}{\sqrt{q^{n-k}}} \sum_{\ev \in \F_q^n} f(\ev)\ket{\ev} \sum_{\sv \in \F_q^{n-k}} \overline{\chi_{\sv}(\uv)} \ket{\sv} .$$
		We then apply the unitary mapping $\ket{\ev}\ket{\sv} \rightarrow \ket{\ev + \Hm\sv}\ket{\sv}$ to obtain
		$$\frac{1}{\sqrt{q^{n-k}}}\sum_{\sv \in \F_q^{n-k}} \overline{\chi_{\sv}(\uv)}\ket{\psi_{\sv}}\ket{\sv}, \quad \text{with} \quad \ket{\psi_{\sv}} = \sum_{\ev \in \F_q^n} f(\ev)\ket{\sv\Hm + \ev}.$$
		Note that $\Hm$ is also a generating matrix of $\C^{\bot}$ so recovering $\sv$ from $\ket{\psi_{\sv}}$ is a  $\QDP(\C^\bot,f)$ instance. A perfect quantum algorithm for $\QDP(\C^\bot,f)$ therefore provides a unitary
		$$U : \ket{\psi_\sv}\ket{0} \rightarrow \ket{\psi_{\sv}}\ket{\sv}.$$
		Applying $U^\dagger$ yields $\frac{1}{\sqrt{q^{n-k}}}\sum_{\sv \in \F_q^{n-k}} \overline{\chi_{\sv}(\uv)} \ket{\psi_{\sv}}$. Finally, applying the quantum Fourier transform and using $\overline{\chi_\sv(\uv)} = \chi_\sv(-\uv)$ together with $\chi_\yv(\sv\Hm) = \chi_\sv(\Hm\yv^\top)$ gives
		$$\frac{1}{\sqrt{q^{k}}}\sum_{\yv : \Hm\yv^\top = \uv} \hf(\yv) \ket{\yv}.$$
		Since $\mathrm{Supp}(\hf) \subseteq \{\yv : \norm{\yv} \le t\}$, measuring this state yields a solution to the $\BDD$ instance.
	\end{proof}

	The above proposition can be generalized when we slightly relax the condition on $\mathrm{Supp}(\hf)$. However, as shown in~\cite{chailloux2024quantumadvantagesoftdecoders}, working with $\QDP$ algorithms that succeed with probability close to $1$ can make the whole algorithm fail. We can recover a robust reduction by passing to the inhomogeneous variant of $\BDD$. The most general statement, which also relaxes the support constraint, is as follows.
	
	\begin{theorem}[{\cite{chailloux2025opixsoftdecoders}}]\label{Theorem:Main}
		Let $\C$ be a $q$-ary linear code with parity-check matrix $\Hm$, and let $(\C, T)$ be an $\IBDD$ instance. Let $f : \F_q^n \rightarrow \mathbb{C}$ with $\norm{f}_2 = 1$. Assume that:
		\begin{enumerate}\setlength\itemsep{-0.3em}
			\item We have a quantum algorithm $\aa_{\QDP}$ that solves $\QDP(\C^\bot,f)$ in time $\Time_{\QDP}$ and succeeds with probability $P_{\mathrm{Dec}}$.
			\item The state $\sum_{\ev \in \F_q^n} f(\ev) \ket{\ev}$ is constructible in time $\Time_{\mathrm{Sampl}}$.
			\item $\sum_{\yv \in T}|\hf(\yv)|^2 = 1 - \eta$.
		\end{enumerate}
		Then there exists a quantum algorithm that solves $\IBDD(\C,T)$ with probability
		$$P \ge P_{\mathrm{Dec}}(1-\eta) - 2\sqrt{\eta P_{\mathrm{Dec}}(1-P_{\mathrm{Dec}})},$$
		and runs in time
		$$\Time = O\!\left(\frac{1}{P_{\mathrm{Dec}}} \left(\Time_{\QDP} + \Time_{\mathrm{Sampl}}\right) + \poly(n,\log(q))\right).$$
	\end{theorem}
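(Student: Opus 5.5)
Following \cite{chailloux2025opixsoftdecoders}, I would prove Theorem~\ref{Theorem:Main} by running the circuit of Proposition~\ref{prop:perfect-QDP} with the imperfect decoder in place of the perfect unitary $U$, and then controlling the error with a projector-overlap argument. The plan has four steps.

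\textbf{Step 1: set up the computation.} Because $\yv_0$ is uniform, its syndrome $\uv = \Hm\yv_0^\top$ is uniform on $\F_q^{n-k}$ when $\Hm$ has full rank. Using the sampler (time $\Time_{\mathrm{Sampl}}$) and a QFT on the syndrome register, I prepare
\[
\ket{\Phi_\uv} = \frac{1}{\sqrt{q^{n-k}}}\sum_{\sv}\overline{\chi_\sv(\uv)}\ket{\psi_\sv}\ket{\sv}.
\]
I write the decoder $\aa_{\QDP}$ as a unitary $V$ with an ancilla, followed by a measurement of an output register. On input $\ket{\psi_\sv}\ket{0}$ it produces $\sqrt{p_\sv}\ket{\sv}\ket{\phi^{\mathrm{good}}_\sv} + \sqrt{1-p_\sv}\ket{\phi^{\mathrm{bad}}_\sv}$, where $\mathbb{E}_\sv p_\sv = P_{\mathrm{Dec}}$; here $\sv$ is uniform, as in the QDP instance.

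\textbf{Step 2: erase $\sv$ and measure.} I apply $V$, then subtract the decoded value from the $\sv$ register (this erases $\sv$ on the good branch), then apply $V^\dagger$ and the inverse of the preparation on the first register, i.e.\ the QFT. In the ideal case this yields $\frac{1}{\sqrt{q^k}}\sum_{\Hm\yv^\top=\uv}\hf(\yv)\ket{\yv}$. I then measure and accept if the outcome $\yv$ lies in $T$ and satisfies $\Hm\yv^\top=\uv$; both conditions are checkable in $\poly(n,\log q)$ time.

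\textbf{Step 3: bound the success probability.} Let $\Pi$ be the projector onto ``good outcomes'', pulled back through the inverse circuit. The final state can be written as $\ket{\Omega} = \sqrt{a}\,\ket{\mathrm{ideal}} + \ket{\mathrm{err}}$ with $\|\mathrm{err}\|^2 \le 1-a$. I want to show that the ideal component has amplitude squared $a \ge P_{\mathrm{Dec}}$ averaged over $\uv$. This should follow because the overlap of $\ket{\Omega}$ with $\ket{\Phi_\uv}\ket{0}$ equals $\mathbb{E}_\sv p_\sv$ after averaging over uniform $\uv$, using orthogonality of characters (Proposition~\ref{prop:characters}). The ideal state puts mass $1-\eta$ on $T$ by assumption~3.

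Writing $P = \|\Pi\ket{\Omega}\|^2$ and expanding,
\[
P \ge a(1-\eta) - 2\sqrt{a(1-\eta)}\,\|\Pi\,\mathrm{err}\|\cdot(\text{angle term}).
\]
Bounding the cross term by $2\sqrt{a\eta}\sqrt{1-a}$ (from the component of the ideal state outside $T$, together with the orthogonality of $\mathrm{err}$ to the ideal state) and then using concavity over $\uv$ gives
\[
P \ge P_{\mathrm{Dec}}(1-\eta) - 2\sqrt{\eta P_{\mathrm{Dec}}(1-P_{\mathrm{Dec}})}.
\]

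\textbf{Step 4: running time.} The circuit costs $\Time_{\QDP}+\Time_{\mathrm{Sampl}}+\poly$ per trial. Since each output can be verified, repeating $O(1/P_{\mathrm{Dec}})$ times gives the stated running time; amplitude amplification would improve this, but it is not needed.

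\textbf{Main obstacle.} The hard step is Step 3. The cross-term must be handled carefully: the error branch is orthogonal to the ideal branch before the final QFT, and I need to exploit this so that only the $\eta$-portion of the ideal state interferes with the error. The uniform choice of $\yv_0$ is what turns the per-$\sv$ success probabilities into the average $P_{\mathrm{Dec}}$; this averaging is exactly why the robust statement requires the inhomogeneous problem $\IBDD$ rather than $\BDD$.
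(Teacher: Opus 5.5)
The paper does not prove Theorem~\ref{Theorem:Main} itself. It quotes the result from \cite{chailloux2025opixsoftdecoders} and proves only the perfect-decoder case (Proposition~\ref{prop:perfect-QDP}), so I am judging your outline on its own. Most of the ingredients are right: the phase state for $\uv$, running the decoder, subtracting, uncomputing, and the cross-term trick $\bra{\mathrm{ideal}}\Pi_T\ket{\mathrm{err}}=-\bra{\mathrm{ideal}}(I-\Pi_T)\ket{\mathrm{err}}$. But there is a genuine gap: Step~3 is off by a square. Write $\Pi_\sv$ for the projector of the decoder's output register onto $\sv$. Orthogonality of characters shows that, averaged over $\uv$, the \emph{amplitude} of your final state on the ideal state $\frac{1}{\sqrt{q^{n-k}}}\sum_\sv\overline{\chi_\sv(\uv)}\ket{\psi_\sv}\ket{0}$ is $\frac{1}{q^{n-k}}\sum_\sv\bra{\psi_\sv,0}V^\dagger\Pi_\sv V\ket{\psi_\sv,0}=\mathbb{E}_\sv p_\sv=P_{\mathrm{Dec}}$. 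So $a$ is only about $P_{\mathrm{Dec}}^2$, and your argument yields roughly $P_{\mathrm{Dec}}^2(1-\eta)$ per run, not the stated bound.

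This is not just slack in the estimate. Take the following example:
\begin{itemize}
\item $\eta=0$ and the $\ket{\psi_\sv}$ orthonormal;
\item a decoder whose POVM element for outcome $\sv$ is $\kb{\xi_\sv}$, with $\ket{\xi_\sv}=\sqrt{P}\ket{\psi_\sv}+\sqrt{1-P}\ket{\zeta_\sv}$;
\item the $\ket{\zeta_\sv}$ orthonormal translates of a state with Fourier support outside $T$.
\end{itemize}
This decoder has $P_{\mathrm{Dec}}=P$. Yet after $V$, subtraction and $V^\dagger$, the coherent part of your final state is $\sqrt{P}\,\frac{1}{\sqrt{q^{n-k}}}\sum_\sv\overline{\chi_\sv(\uv)}\ket{\xi_\sv}$, whose weight on $T$ is $P^2$. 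If failures output a fresh random value, the other branches add only $O(q^{-(n-k)})$. Repeating with verification (Step~4) does not repair this in general, because $\uv$ is fixed across repetitions and the per-run success can be concentrated on a small fraction of syndromes.

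The missing step is a post-selection. After subtracting the decoder output from the $\sv$-register, measure that register and restart unless it reads $\mathbf{0}$. The branches $\Pi_\sv V\ket{\psi_\sv,0}$ are mutually orthogonal because they carry different output values. So this outcome has probability exactly $\frac{1}{q^{n-k}}\sum_\sv p_\sv=P_{\mathrm{Dec}}$, for every $\uv$. This restart loop, not verification-based repetition, is the source of the factor $1/P_{\mathrm{Dec}}$ in the running time.

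Conditioned on reading $\mathbf{0}$, applying $V^\dagger$ gives the normalized state $P_{\mathrm{Dec}}^{-1/2}\frac{1}{\sqrt{q^{n-k}}}\sum_\sv\overline{\chi_\sv(\uv)}V^\dagger\Pi_\sv V\ket{\psi_\sv,0}$. The same character computation now gives overlap $P_{\mathrm{Dec}}/\sqrt{P_{\mathrm{Dec}}}=\sqrt{P_{\mathrm{Dec}}}$ with the ideal state. The state is therefore $\sqrt{P_{\mathrm{Dec}}}\ket{\mathrm{ideal}}+\ket{\mathrm{err}}$, with $\ket{\mathrm{err}}\perp\ket{\mathrm{ideal}}$ and $\norm{\ket{\mathrm{err}}}^2=1-P_{\mathrm{Dec}}$. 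Your cross-term bound $|\bra{\mathrm{ideal}}(I-\Pi_T)\ket{\mathrm{err}}|\le\sqrt{\eta(1-P_{\mathrm{Dec}})}$ then gives exactly $P_{\mathrm{Dec}}(1-\eta)-2\sqrt{\eta P_{\mathrm{Dec}}(1-P_{\mathrm{Dec}})}$.

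To make the averaging rigorous, purify $\uv$ rather than arguing per $\uv$: put $\uv$ in a uniform-superposition register that controls the phases. For fixed $\uv$, the state $\frac{1}{\sqrt{q^{n-k}}}\sum_\sv\overline{\chi_\sv(\uv)}\ket{\psi_\sv}$ has unit norm only when the $\ket{\psi_\sv}$ are orthonormal. The purified ideal state, by contrast, is a unit vector whose mass on the success event is exactly $\sum_{\yv\in T}|\hf(\yv)|^2=1-\eta$. This also removes your ``concavity over $\uv$'' step, which would be delicate anyway: you would need convexity rather than concavity, the map $a\mapsto a(1-\eta)-2\sqrt{\eta a(1-a)}$ is not monotone, and the effective $\eta$ varies with $\uv$.
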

	
	\subsection{Instantiation of the theorem for Reed-Solomon codes and Hamming weight}
	
	We now show how to concretely instantiate this general reduction theorem for the Hamming weight setting. 
	
	\begin{definition}
		We define $\QDP(\C,\tau)$ to be the problem $\QDP(\C,f)$ where $f = u^{\otimes n}$ and $u$ is defined 
		\[ u(\alpha) = \left\{ \begin{array}{cl} \sqrt{1-\tau} & \textrm{if } \alpha = 0, \\ \sqrt{\dfrac{\tau}{q - 1}} & \textrm{if } \alpha \neq 0. \end{array}\right. \]
	\end{definition}
	
	\begin{definition}
		For any $\tau \in (0,1 - \frac{1}{q})$, we define for a fixed $q$
		$$ \tau^{\perp} = \left(\sqrt{\frac{(q-1)(1-\tau)}{q}} - \sqrt{\frac{\tau}{q}}\right)^2.$$
		Moreover, one can check that $(\tau^\perp)^\perp = \tau$.
	\end{definition}
	
	With the above choice of $u$, we have 
	\[ \hu(\alpha) = \left\{ \begin{array}{cl} \sqrt{1 - \tau^\perp} & \textrm{if } \alpha = 0, \\ \sqrt{\dfrac{\tau^\perp}{q - 1}} & \textrm{if } \alpha \neq 0. \end{array}\right. \]

	We set $n=q$ and consider the full-support Reed–Solomon code $\RS[q,k]_q$.

	\begin{theorem}\label{thm:RS-instantiation}
		If we have an efficient algorithm for $\QDP(\RS[q,q-k]_q,\tau)$ that succeeds with probability $P_{\mathrm{Dec}}$, then we have an efficient algorithm for $\IBDD(\RS[q,k]_q,\tau' q)$ that succeeds with probability $P_{\mathrm{Dec}} - \mathrm{negl}(q)$, where
		\[
		\tau' \;=\; \tau^\perp\!\left(1 + (\tau^\perp q)^{-1/3}\right).
		\]
		In particular, $\tau' = \tau^\perp(1+o(1))$ whenever $\tau^\perp q \to \infty$.
	\end{theorem}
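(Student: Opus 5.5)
We apply Theorem~\ref{Theorem:Main} with $\C=\RS[q,k]_q$, so that $\C^\perp=\RS[q,q-k]_q$ (full support, $n=q$), with $f=u^{\otimes q}$, and with target set $T=\{\yv\in\F_q^q:\norm{\yv}\le \tau' q\}$. The three hypotheses are checked in turn.

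Hypothesis~1 is the assumed $\QDP(\RS[q,q-k]_q,\tau)$ solver with success probability $P_{\mathrm{Dec}}$. Hypothesis~2 holds with $\Time_{\mathrm{Sampl}}=\poly(q)$. Indeed, $\sum_\ev f(\ev)\ket{\ev}=\bigl(\sqrt{1-\tau}\ket{0}+\sqrt{\tau/(q-1)}\sum_{\alpha\ne0}\ket{\alpha}\bigr)^{\otimes q}$ is a product of identical single-qudit states. Each factor is prepared by one rotation on a flag qubit followed by a uniform superposition over $\F_q^\times$, up to negligible error.

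The core is Hypothesis~3. Since the Fourier transform factorizes over coordinates, $\hf=\hu^{\otimes q}$. By the stated formula for $\hu$, the distribution $|\hf(\yv)|^2$ is that of a vector whose coordinates are independently nonzero with probability $\tau^\perp$, each nonzero value being uniform. Hence $\norm{\yv}$ under $|\hf|^2$ is distributed as $\mathrm{Bin}(q,\tau^\perp)$, and
\[
\eta=\Pr\bigl[\mathrm{Bin}(q,\tau^\perp)>\tau^\perp q\,(1+\delta)\bigr],\qquad \delta=(\tau^\perp q)^{-1/3}.
\]
A multiplicative Chernoff bound gives $\eta\le \exp(-\delta^2\tau^\perp q/3)=\exp\bigl(-(\tau^\perp q)^{1/3}/3\bigr)$. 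This is negligible in $q$ in the regime of interest, where $\tau^\perp q$ grows polynomially in $q$ (e.g.\ $\tau^\perp$ close to $1$ at the Cheng--Wan parameters).

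Plugging into Theorem~\ref{Theorem:Main} gives
\[
P\ge P_{\mathrm{Dec}}(1-\eta)-2\sqrt{\eta P_{\mathrm{Dec}}(1-P_{\mathrm{Dec}})}\ge P_{\mathrm{Dec}}-\eta-2\sqrt{\eta}.
\]
This is $P_{\mathrm{Dec}}-\mathrm{negl}(q)$ as soon as $\eta$ is negligible. The running time is $O\bigl(P_{\mathrm{Dec}}^{-1}(\Time_{\QDP}+\poly(q))\bigr)$, which is efficient for non-negligible $P_{\mathrm{Dec}}$. Finally, $\tau'=\tau^\perp(1+o(1))$ follows directly when $\tau^\perp q\to\infty$.

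The main obstacle is not the Chernoff step but the bookkeeping around ``negligible''. The tail $\exp(-(\tau^\perp q)^{1/3}/3)$ is only negligible if $\tau^\perp q\ge \omega(\log^3 q)$. The proof should make this explicit, either by stating it as the regime assumption (it holds whenever $\tau^\perp$ is bounded below by $q^{-1+\epsilon}$) or by reading $\mathrm{negl}(q)$ as $\exp(-\Omega((\tau^\perp q)^{1/3}))$.

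Two smaller points also need care. The identity $\hf=\hu^{\otimes q}$ under the trace-character convention should be justified by the single-coordinate computation $\hu(\alpha)=q^{-1/2}\sum_\beta\chi_\alpha(\beta)u(\beta)$. Using $\sum_{\beta\ne0}\chi_\alpha(\beta)=-1$ for $\alpha\neq0$, this gives exactly $\sqrt{\tau^\perp/(q-1)}$. For $\alpha=0$ it gives $\sqrt{1-\tau^\perp}$, matching the stated formula for $\tau^\perp$.
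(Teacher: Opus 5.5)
Your proposal is correct and follows the paper's proof essentially verbatim: apply Theorem~\ref{Theorem:Main} with $f=u^{\otimes q}$, use $\hf=\hu^{\otimes q}$ so that the weight under $|\hf|^2$ is distributed as $\mathrm{Bin}(q,\tau^\perp)$, and bound its tail by a multiplicative Chernoff bound with $\delta=(\tau^\perp q)^{-1/3}$. Your caveat also correctly tightens the paper's wording: $\exp(-(\tau^\perp q)^{1/3}/3)$ is negligible only when $\tau^\perp q=\omega(\log^3 q)$, not merely when $\tau^\perp q\to\infty$, though this is harmless at the Cheng--Wan parameters, where $\tau^\perp\approx 1$.
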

	\begin{proof}
		We apply Theorem~\ref{Theorem:Main} with the product function $f = u^{\otimes q}$ from the definition of $\QDP(\RS[q,q-k]_q,\tau)$ above. The state $\sum_{\ev \in \F_q^q} f(\ev)\ket{\ev} = \bigotimes_{i=1}^q \!\bigl(\sum_{\alpha \in \F_q} u(\alpha)\ket{\alpha}\bigr)$ is efficiently constructible (condition~2 of Theorem~\ref{Theorem:Main}).
		
		It remains to verify condition~3 for the set $T_{\tau'} = \{\yv \in \F_q^q : |\{i : y_i \neq 0\}| \le \tau' q\}$ of vectors of Hamming weight at most $\tau' q$. Since $f = u^{\otimes q}$ we have $\hf = \hu^{\otimes q}$, so sampling $\yv$ from $|\hf|^2$ amounts to drawing $q$ independent coordinates, each nonzero with probability $1 - |\hu(0)|^2 = \tau^\perp$. Define the independent Bernoulli random variables $Z_i := \mathbf{1}[y_i \neq 0]$, so that $\mathbb{E}[Z_i] = \tau^\perp$. Setting $\delta := (\tau^\perp q)^{-1/3} \in (0,1)$ and $\tau' := \tau^\perp(1+\delta)$, the multiplicative Chernoff bound gives
		\[
		\Pr\!\left[\,\sum_{i=1}^q Z_i > \tau' q\,\right]
		\;\le\; \exp\!\left(-\frac{\delta^2 \tau^\perp q}{3}\right)
		\;=\; \exp\!\left(-\frac{(\tau^\perp q)^{1/3}}{3}\right),
		\]
		which is $\mathrm{negl}(q)$ whenever $\tau^\perp q \to \infty$. Hence
		\[
		\sum_{\yv \in T_{\tau'}} |\hf(\yv)|^2 \;\ge\; 1 - \exp\!\left(-\tfrac{1}{3}(\tau^\perp q)^{1/3}\right) \;=\; 1 - \mathrm{negl}(q).
		\]
		Condition~3 of Theorem~\ref{Theorem:Main} holds with $\eta = \mathrm{negl}(q)$, and we conclude that there is an efficient quantum algorithm solving $\IBDD(\RS[q,k]_q, \tau' q)$ with probability $P_{\mathrm{Dec}} - \mathrm{negl}(q)$.
	\end{proof}

	\subsection{Instantiations with concrete decoders}
	
	\subsubsection{Classical decoders}
	
	We start from the following classical decoders.
	
	\begin{proposition}[{\cite{chailloux2025opixsoftdecoders}}]\label{Proposition:Decoders}
		The following classical decoders solve $\DP$ for Reed-Solomon codes, and can therefore be used as $\QDP$ solvers by measuring the input state in the computational basis:
		\begin{itemize}\setlength\itemsep{-0.2em}
			\item The \emph{Berlekamp-Welch} algorithm solves $\QDP(\RS[q,k]_q, \tau)$ (via $\DP$) for any $\tau \le \frac{q-k}{2q}$.
			\item The \emph{Guruswami-Sudan} algorithm solves $\QDP(\RS[q,k]_q, \tau)$ (via $\DP$) for any $\tau \le 1 - \sqrt{\frac{k}{q}}$.
		\end{itemize}
	\end{proposition}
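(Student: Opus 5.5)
The plan is to reduce both bullets to the standard worst-case decoding guarantees of the two algorithms, combined with a concentration bound on the number of errors in a sample from the product distribution. Measuring $\ket{\psi_\cv} = \sum_{\ev} f(\ev)\ket{\cv+\ev}$ with $f=u^{\otimes q}$ in the computational basis gives $\cv+\ev$, where the coordinates of $\ev$ are independent. Each coordinate is nonzero with probability $\tau$ and, conditioned on being nonzero, is uniform on $\F_q^\times$. This is exactly a $\DP(\RS[q,k]_q,|f|^2)$ instance, as already noted after the definition of $\QDP$. So it suffices to show that each decoder recovers $\cv$ from $\cv+\ev$ with probability $1-\mathrm{negl}(q)$ over $\ev$. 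Here ``solves'' is read in that sense, with the thresholds on $\tau$ understood up to a $(1-o(1))$ slack.

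First, I would bound the Hamming weight of $\ev$. Since $\|\ev\|=\sum_i \mathbf 1[e_i\neq0]$ is a sum of $q$ independent Bernoulli$(\tau)$ variables, the multiplicative Chernoff bound (the same one used in the proof of Theorem~\ref{thm:RS-instantiation}) gives, for $\delta=(\tau q)^{-1/3}$,
\[
\Pr\bigl[\|\ev\| > \tau q(1+\delta)\bigr]\le \exp\!\bigl(-(\tau q)^{1/3}/3\bigr),
\]
which is negligible once $\tau q\to\infty$. If $\tau q$ is bounded, the number of errors is $O(1)$ with overwhelming probability, and both decoders succeed trivially since their radii are $\Theta(q-k)$.

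Second, I would invoke the worst-case guarantees.

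\textbf{Berlekamp--Welch.} This algorithm deterministically returns the unique codeword within distance $\lfloor (q-k)/2\rfloor$ of the received word, because $\RS[q,k]_q$ has minimum distance $q-k+1$. Whenever $\|\ev\|\le (q-k)/2$, it therefore outputs $\cv$. Hence it succeeds with probability $1-\mathrm{negl}(q)$ for $\tau(1+\delta)\le \frac{q-k}{2q}$, i.e.\ for $\tau\le\frac{q-k}{2q}(1-o(1))$.

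\textbf{Guruswami--Sudan.} This algorithm outputs, in polynomial time, the list $L$ of all codewords within distance $q-\sqrt{kq}$ (strictly, $q-\sqrt{(k-1)q}$) of the received word, so $\cv\in L$ whenever $\|\ev\|\le q(1-\sqrt{k/q})$. The decoder then outputs the element of $L$ closest to $\cv+\ev$.

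Third, I need to show that $\cv$ is the unique closest element of $L$. This step is the main obstacle, because beyond half the minimum distance uniqueness is not guaranteed in the worst case, and we have to use the randomness of the nonzero error values. The approach is as follows. For a fixed codeword $\cv'\neq\cv$, the difference $\cv'-\cv$ is nonzero on at least $q-k+1$ positions. The received word agrees with $\cv'$ on a position $i$ only if either $e_i=c'_i-c_i\neq 0$, which happens with probability $\tau/(q-1)$, or $e_i=0$ and $c'_i=c_i$, which happens on at most $k-1$ positions.

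This gives an expected agreement of at most $(k-1)+q\tau/(q-1)$, compared with about $(1-\tau)q$ agreements for $\cv$. Applying a Chernoff bound to the number of agreements with $\cv'$, followed by a union bound over the polynomially many list elements (the Johnson bound, $|L|=\poly(q)$), should show that $\cv$ is strictly the closest element with probability $1-\mathrm{negl}(q)$. One subtlety: the list $L$ depends on $\ev$, so I would instead union-bound over all codewords $\cv'$ at distance $\le q(1-\sqrt{k/q})$ from the received word. If that count is too crude, I would fall back on citing~\cite{chailloux2025opixsoftdecoders}, where this analysis is carried out.

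Putting the three steps together gives the stated thresholds.
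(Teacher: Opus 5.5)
The paper gives no proof of this proposition; it is quoted from \cite{chailloux2025opixsoftdecoders}. A self-contained argument like yours is therefore the natural thing to supply, and your Berlekamp--Welch half is fine. The $(1-o(1))$ slack is forced only by your $1-\mathrm{negl}(q)$ reading: at $\tau q\le (q-k)/2$ one still has $\Pr[\|\ev\|\le\lfloor (q-k)/2\rfloor]=\Omega(1)$.

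The gap is the Guruswami--Sudan uniqueness step. Your repair of the union bound ranges over all codewords within distance $q(1-\sqrt{k/q})$ of the received word. That set still depends on $\ev$ (it is essentially $L$ itself), so it does not remove the dependence you identified. A legitimate union bound has to range over all $q^k-1$ codewords $\cv'\neq\cv$, fixed in advance. Your per-codeword estimate is $k-1$ forced agreements plus a $\mathrm{Bin}(q,\tau/(q-1))$ term, so each summand is at most $\binom{q}{m}\bigl(\tfrac{\tau}{q-1}\bigr)^m\le (O(1)/m)^m$, with $m\approx(1-\tau)q-k\approx\sqrt{kq}-k$ at the GS threshold. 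Then $q^k(O(1)/m)^m$ is small only when roughly $m>k$, i.e.\ $k<q/4$. So for any constant rate $k/q>1/4$ the argument as sketched does not close.

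The missing idea is that the worst case of $k-1$ common zeros is rare. A nonzero codeword $\cv'-\cv$ that vanishes on a prescribed $z$-set lies in a subcode of dimension $k-z$, so there are at most $\binom{q}{z}q^{k-z}$ such differences. Let $A\approx(1-\tau)q$ be the agreement of $\cv$. The refined sum $\sum_{z<k}\binom{q}{z}q^{k-z}\binom{q-z}{A-z}\bigl(\tfrac{\tau}{q-1}\bigr)^{A-z}$ is then $e^{O(q)}q^{-(A-k)}$, which is negligible up to the GS radius at every constant rate. This is the usual MDS weight-distribution computation (cf.\ \cite{mcelieceguruswami}).

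Alternatively, you can drop uniqueness altogether. Downstream, the paper only uses these decoders with success probability $1/\poly(q)$ (Propositions~\ref{prop:BW} and~\ref{prop:GS}). Outputting a uniformly random element of the GS list, whose size is $\poly(q)$, already gives success probability $\Omega(1)/\poly(q)$ for every $\tau\le 1-\sqrt{k/q}$.
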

	
	From Proposition~\ref{Proposition:Decoders} and Theorem~\ref{thm:RS-instantiation}, we obtain the following.
	
	\begin{proposition}[Using the Berlekamp-Welch decoder]
		\label{prop:BW}
		There exists an efficient quantum algorithm solving $\IBDD(\RS[q,k]_q, \tau_{BW}(1+o(1))q)$ with probability $\frac{1}{\mathrm{poly}(q)}$, where $\tau_{BW} = \left(\frac{k}{2q}\right)^\perp$.
	\end{proposition}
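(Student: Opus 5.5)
The plan is to combine Proposition~\ref{Proposition:Decoders} with Theorem~\ref{thm:RS-instantiation}. For the dual code $\RS[q,q-k]_q$, the Berlekamp--Welch item of Proposition~\ref{Proposition:Decoders} gives a $\DP$ decoder for noise rate up to $\frac{q-(q-k)}{2q} = \frac{k}{2q}$. We therefore set $\tau := \frac{k}{2q}$.

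The first step is to check what ``solves $\DP$'' means at this exact noise level, since the decoding radius is not a hard bound. Measuring $\ket{\psi_\cv}$ in the computational basis gives $\cv+\ev$, where each coordinate of $\ev$ is nonzero independently with probability $\tau$. The weight $\norm{\ev}$ is then binomial $\mathrm{Bin}(q,\tau)$ with mean $\tau q = k/2$. Berlekamp--Welch succeeds deterministically whenever $\norm{\ev} \le \lfloor (q-(q-k))/2 \rfloor = \lfloor k/2 \rfloor$, i.e.\ half the minimum distance of the dual code minus rounding. The probability that a binomial variable is at most its mean (up to the floor) is bounded below by a constant, or at worst by $\Omega(1/\sqrt{k})$ from the central term. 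So $P_{\mathrm{Dec}} \ge 1/\mathrm{poly}(q)$, which matches the weak success probability claimed in the statement.

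\textbf{Main obstacle.} The main subtlety is the boundary. The statement allows only an inverse-polynomial success probability, so we do not need $\tau$ strictly below $k/(2q)$. We only need a lower bound on $\Pr[\mathrm{Bin}(q,\tau)\le \lfloor k/2\rfloor]$. The cleanest route is to invoke the known fact that the median of a binomial lies within $1$ of its mean. This gives probability at least roughly $1/2$ minus one point mass, and each point mass is at most $O(1/\sqrt{k})$. If instead one wanted success probability close to $1$, one would take $\tau = \frac{k}{2q}(1-o(1))$ and use a Chernoff bound, absorbing the loss into the $(1+o(1))$ factor by continuity of $\tau\mapsto\tau^\perp$.

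The last step is to apply Theorem~\ref{thm:RS-instantiation} with this decoder. Running time is polynomial because Berlekamp--Welch is polynomial and the state preparation is a product state. The theorem gives an efficient quantum algorithm for $\IBDD(\RS[q,k]_q,\tau' q)$ with success probability $P_{\mathrm{Dec}} - \mathrm{negl}(q) \ge 1/\mathrm{poly}(q)$, where $\tau' = \tau^\perp(1+(\tau^\perp q)^{-1/3})$ and $\tau^\perp = \left(\frac{k}{2q}\right)^\perp = \tau_{BW}$. To conclude that $\tau' = \tau_{BW}(1+o(1))$, it remains to note that $\tau_{BW} q \to \infty$. For small $\tau$, $\tau^\perp \approx 1-\tau$ is bounded away from $0$, so $\tau_{BW} q = \Theta(q)$ and the theorem's hypothesis $\tau^\perp q\to\infty$ holds.
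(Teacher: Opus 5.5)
Your proposal is correct and takes the same route as the paper. The paper sets $\tau = \frac{k}{2q}$, invokes the Berlekamp--Welch item of Proposition~\ref{Proposition:Decoders} on the dual $\RS[q,q-k]_q$, and applies Theorem~\ref{thm:RS-instantiation}. Your binomial lower bound on $P_{\mathrm{Dec}}$ at the boundary $\tau = \frac{k}{2q}$ and your check that $\tau_{BW}q \to \infty$ supply details the paper leaves implicit. One small correction: the latter holds for every $k \le q$, not only for small $\tau$, because $\tau = \frac{k}{2q} \le \frac12$ already forces $\tau^\perp \gtrsim \frac12$.
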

	\begin{proof}
		By Proposition~\ref{Proposition:Decoders}, the Berlekamp-Welch decoder solves $\QDP(\RS[q,q-k]_q, \tau)$ for $\tau = \frac{k}{2q}$. We directly apply Theorem~\ref{thm:RS-instantiation} to get an efficient algorithm for $\IBDD(\RS[q,k]_q, \tau_{BW}(1+o(1))q)$.
	\end{proof}
	
	\begin{proposition}[Using the Guruswami-Sudan decoder]
		\label{prop:GS}
		There exists an efficient quantum algorithm solving $\IBDD(\RS[q,k]_q, \tau_{GS}(1+o(1))q)$ with probability $\frac{1}{\mathrm{poly}(q)}$, where $\tau_{GS} = \left(1 - \sqrt{\frac{q-k}{q}}\right)^\perp$.
	\end{proposition}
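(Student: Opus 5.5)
The plan mirrors the proof of Proposition~\ref{prop:BW}, with the Guruswami--Sudan decoder in place of Berlekamp--Welch. We feed the guarantee of Proposition~\ref{Proposition:Decoders} into Theorem~\ref{thm:RS-instantiation}.

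First we identify the QDP instance the decoder must solve. Theorem~\ref{thm:RS-instantiation} asks for a solver of $\QDP(\RS[q,q-k]_q,\tau)$, so the dual code has dimension $q-k$. Applying the Guruswami--Sudan bullet of Proposition~\ref{Proposition:Decoders} with dimension $q-k$, the decoder solves $\QDP(\RS[q,q-k]_q,\tau)$ by measuring the input state and running Guruswami--Sudan on the resulting $\DP$ instance, for every $\tau \le 1-\sqrt{\frac{q-k}{q}}$. We take $\tau = 1-\sqrt{\frac{q-k}{q}}$ exactly. We must check this lies in $(0,1-\frac1q)$, which holds for $1\le k<q$, so that $\tau^\perp$ is defined.

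Next we address the success probability. Measuring $\ket{\psi_\cv}$ yields $\cv+\ev$ where each coordinate of $\ev$ is nonzero independently with probability $\tau$. The number of errors is therefore binomial with mean $\tau q$, and it sits at or below the decoding radius with probability at least a constant, or at least $1/\poly(q)$, by a central-limit or anti-concentration bound on the binomial. This is exactly where the weak $\frac{1}{\poly(q)}$ in the statement comes from. It is also the one point needing care: at the threshold $\tau$ the Guruswami--Sudan radius and the mean number of errors coincide up to rounding, so we cannot claim success with high probability. If needed, one could instead take $\tau$ slightly below the threshold, say by a $1-o(1)$ factor. That loss is absorbed into the $(1+o(1))$ factor of the conclusion, since $\tau\mapsto\tau^\perp$ is continuous. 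Guruswami--Sudan returns a list, but within the radius the list has polynomial size. We either output a uniformly random element, losing only a $1/\poly(q)$ factor, or we keep the whole list.

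Finally we apply Theorem~\ref{thm:RS-instantiation} with $P_{\mathrm{Dec}} \ge 1/\poly(q)$. This yields an efficient algorithm for $\IBDD(\RS[q,k]_q,\tau'q)$ with $\tau' = \tau_{GS}(1+(\tau_{GS}q)^{-1/3})$ and success probability $P_{\mathrm{Dec}}-\mathrm{negl}(q)$, which is still $1/\poly(q)$. It remains to note that $\tau_{GS}q\to\infty$. This follows because $\tau_{GS} = (1-\sqrt{(q-k)/q})^\perp \approx 1-\tau$ is bounded away from $0$ whenever $\tau$ is small, for instance $k=o(q)$. In general it follows from the explicit formula, as long as $\tau$ stays away from $1-\frac1q$. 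Hence the $o(1)$ term is genuine, and the running time stays polynomial by the time bound in Theorem~\ref{Theorem:Main}.
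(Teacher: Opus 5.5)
Your proposal is correct and follows the paper's proof: instantiate the Guruswami--Sudan bullet of Proposition~\ref{Proposition:Decoders} on the dual code $\RS[q,q-k]_q$ at $\tau = 1-\sqrt{(q-k)/q}$, then apply Theorem~\ref{thm:RS-instantiation}. Your extra checks are details the paper leaves implicit: the binomial anti-concentration giving $P_{\mathrm{Dec}} \ge 1/\poly(q)$ at the threshold, the list-size remark, and $\tau_{GS}q \to \infty$, which in fact holds for every $1 \le k \le q-1$ because $\tau \le 1-q^{-1/2}$.
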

	\begin{proof}
		By Proposition~\ref{Proposition:Decoders}, the Guruswami-Sudan decoder solves $\QDP(\RS[q,q-k]_q, \tau)$ for $\tau = 1 - \sqrt{\frac{q-k}{q}}$. Again, Theorem~\ref{thm:RS-instantiation} gives an efficient algorithm for $\IBDD(\RS[q,k]_q, \tau_{GS}(1+o(1))q)$.
	\end{proof}
	
	\subsubsection{Unambiguous state discrimination}
	
	We recall the use of unambiguous state discrimination from~\cite{CT24} (Proposition~16).
	
	\begin{proposition}[Unambiguous state discrimination]\label{prop:USD}
		There exists an efficient quantum algorithm solving $\IBDD(\RS[q,k]_q, \tau_{USD}(1+o(1))q)$ with probability $\frac{1}{\mathrm{poly}(q)}$, where $\tau_{USD} = \frac{(q-1)(q-k)}{q^2}$.
	\end{proposition}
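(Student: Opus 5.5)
The plan is to build a $\QDP(\RS[q,q-k]_q,\tau)$ solver out of single-coordinate unambiguous state discrimination followed by erasure decoding. We then feed it into Theorem~\ref{thm:RS-instantiation}, choosing $\tau$ so that the dual parameter $\tau^\perp$ equals $\tau_{USD}$ up to a $(1+o(1))$ factor.

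\textbf{Per-coordinate USD.} The $\QDP$ state for $f=u^{\otimes q}$ is the tensor product $\bigotimes_{i}\ket{\phi_{c_i}}$, where $\ket{\phi_x}=\sum_{e\in\F_q}u(e)\ket{x+e}$. After a single-qudit QFT,
\[
\ket{\phi_x}\mapsto\sum_{y\in\F_q}\hu(y)\chi_y(x)\ket{y},
\]
so the $q$ states $\{\ket{\phi_x}\}_{x\in\F_q}$ form a geometrically uniform family, diagonal up to phases in the Fourier basis. For such families the optimal USD succeeds with probability $q\min_y|\hu(y)|^2$, independently of $x$. I would realise this as the explicit POVM with elements $E_x\propto\ket{\tilde\phi_x}\bra{\tilde\phi_x}$, where $\ket{\tilde\phi_x}=\frac{1}{q}\sum_y \chi_y(x)\hu(y)^{-1}\min_{y'}|\hu(y')|^2\ket{y}$ is the reciprocal vector, together with a failure element. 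This POVM acts on one $q$-dimensional qudit and is therefore implementable in time $\poly(q)$.

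In the regime $\tau^\perp\le 1-1/q$ the minimum is $\tau^\perp/(q-1)$, so each coordinate independently reveals $c_i$ with probability
\[
p_s=\frac{q\,\tau^\perp}{q-1}
\]
and is erased otherwise. USD never errs, so every revealed symbol is correct.

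\textbf{Erasure decoding and the count.} A codeword of $\RS[q,q-k]_q$ is determined, by interpolation, from any $q-k$ correct unerased positions. The number $X$ of successes is $\mathrm{Bin}(q,p_s)$. Setting $qp_s=q-k$ gives $\tau^\perp=(q-1)(q-k)/q^2=\tau_{USD}$. At exactly this threshold, the fact that the mean of a binomial with integer mean is also its median gives $\Pr[X\ge q-k]\ge 1/2$. Alternatively, taking $\tau^\perp=\tau_{USD}(1+o(1))$ slightly larger makes $\Pr[X\ge q-k]=1-\mathrm{negl}(q)$ by Chernoff. Either way $P_{\mathrm{Dec}}\ge 1/\poly(q)$. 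We pick $\tau$ as the value with this $\tau^\perp$, using $(\tau^\perp)^\perp=\tau$.

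\textbf{Plugging in.} Theorem~\ref{thm:RS-instantiation}, which is Theorem~\ref{Theorem:Main} with $\eta=\mathrm{negl}(q)$, then yields $\IBDD(\RS[q,k]_q,\tau' q)$ with $\tau'=\tau^\perp(1+o(1))=\tau_{USD}(1+o(1))$. The success probability is at least $P_{\mathrm{Dec}}(1-\eta)-2\sqrt{\eta P_{\mathrm{Dec}}}\ge 1/\poly(q)$, and the running time is $O(\poly(q)/P_{\mathrm{Dec}})$.

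\textbf{Main obstacle.} The step needing most care is the USD analysis. One must verify that the optimal USD success probability for this translate family is exactly $q\min_y|\hu(y)|^2$. One must also check that the minimum is attained on the nonzero frequencies, i.e.\ $\tau^\perp/(q-1)\le 1-\tau^\perp$. This holds since $\tau_{USD}<1-1/q$. A secondary bookkeeping point is that Theorem~\ref{Theorem:Main} must accept a measurement-based, irreversible $\QDP$ solver with success probability $P_{\mathrm{Dec}}$, as assumed in~\cite{CT24}. Absorbing the $(1+o(1))$ slack in both the Chernoff step and Theorem~\ref{thm:RS-instantiation} requires $\tau_{USD}q\to\infty$, which holds for $k\le q-\omega(1)$.
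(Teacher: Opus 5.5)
Your proposal is correct and follows the paper's route. The paper cites \cite{CT24} (Proposition~16) for a USD-based solver of $\QDP(\RS[q,q-k]_q,\tau)$ with $\tau^\perp=\tau_{USD}$ and then applies Theorem~\ref{thm:RS-instantiation}; you instead derive that cited proposition yourself (per-coordinate USD on the Fourier-diagonal translate family succeeding with probability $q\tau^\perp/(q-1)$, then MDS erasure decoding from $q-k$ revealed symbols, with the binomial-median bound giving $P_{\mathrm{Dec}}\ge 1/2$), and these computations check out.
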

	\begin{proof}
		By~\cite{CT24} (Proposition~16), the unambiguous state discrimination decoder solves $\QDP(\RS[q,q-k]_q, \tau)$ with $\tau^\perp = \frac{(q-1)(q-k)}{q^2}$. Theorem~\ref{thm:RS-instantiation} gives an efficient algorithm for $\IBDD(\RS[q,k]_q, \tau_{USD}(1+o(1))q)$.
	\end{proof}
	
	\subsubsection{Performance for the Cheng-Wan parameters}
	
	\paragraph{Parameter recap.}
	The Cheng-Wan reduction targets DLOG in $\F_{q^h}^\times$ and produces BDD instances for the Reed-Solomon code $\RS[q,k_{\mathrm{CW}}]_q$ with
	\[
	k_{\mathrm{CW}} \;=\; 3h+4, \qquad \text{decoding radius } q - 4h - 4, \qquad \text{error fraction } \rho_{\mathrm{CW}} \;=\; 1 - \frac{4h+4}{q}.
	\]
	The parameter regime is $1 \le h \le q^{1/4}-2$, so $k_{\mathrm{CW}} \ll q$ and $\rho_{\mathrm{CW}} \approx 1$. By Theorem~\ref{thm:RS-instantiation}, feeding these instances into Regev's reduction requires a $\QDP$ solver for the dual code $\RS[q, q-(3h+4)]_q$ at parameter $\tau$ satisfying $\tau^\perp \ge \rho_{\mathrm{CW}}(1+o(1))$, equivalently $\tau \le 4\widetilde{h}\,(1-o(1))$ where $\widetilde{h} = h/q$.
	
	\paragraph{The approximation $\tau^\perp \approx 1 - \tau$ for small $\tau$.}
	Recall that
	\[
	\tau^\perp \;=\; \frac{1}{q}\left(\sqrt{(q-1)(1-\tau)} - \sqrt{\tau}\right)^{\!2}.
	\]
	Writing $\tau = t/q$ with $t \ll q$ and expanding,
	\[
	\tau^\perp \;=\; 1 - \tau - \frac{2\sqrt{t}}{q} - \frac{1}{q} + O\!\left(\tfrac{t}{q^2}\right),
	\]
	or equivalently $t^\perp := q\tau^\perp \approx q - t - 2\sqrt{t}$. Since $\sqrt{t} \ll q - t$ whenever $1 \ll t \ll q$, we have $\tau^\perp \approx 1 - \tau$ throughout the parameter range we consider. As a check, Berlekamp-Welch decodes at $\tau = k/(2q) = (3h+4)/(2q) \approx 3\widetilde{h}/2$, giving $\tau^\perp \approx 1 - 3\widetilde{h}/2$, consistent with Table~\ref{Table:1}. The Cheng-Wan target $1 - 4\widetilde{h}$ requires a QDP decoder at $\tau \approx 4\widetilde{h}$.
	
	\paragraph{Condition on $h$.}
	The Cheng-Wan reduction is valid under $1 \le h \le q^{1/4} - 2$. This forces the code to be genuinely low-rate, $k_{\mathrm{CW}} = 3h+4 \le 3q^{1/4} \ll q$, and the gap $1 - \rho_{\mathrm{CW}} = (4h+4)/q$ between the IBDD radius and the full length is at most $4q^{-3/4} + 4/q$. At the extreme $h \approx q^{1/4}$, this gap is approximately $4q^{-3/4}$, well below $q^{-1/3}$, the regime where the multiplicative form of Theorem~\ref{thm:RS-instantiation} is essential over an additive one.
	
	\begin{table}[h!]
		\centering
		\renewcommand{\arraystretch}{1.4}
		\begin{tabular}{|l|c|c|l|}
			\hline
			Decoder & $\tau$ of decoder & Solves $\IBDD(\RS[q,k]_q, \tau' q)$ & Remark \\
			\hline
			Berlekamp-Welch    & $\approx \tfrac{3\th}{2}$   & $\tau' \approx 1 - \tfrac{3\th}{2}$ & \\
			Guruswami-Sudan    & $\approx \tfrac{3\th}{2}$   & $\tau' \approx 1 - \tfrac{3\th}{2}$ & \\
			USD                & $\approx 3\th$               & $\tau' \approx 1 - 3\th$            & \\
			\hline
			Required (CW)      & $\approx 4\th$               & $\tau' \approx 1 - 4\th$            & Solves DLOG (Assumption~\ref{ass:heuristic}) \\
			\hline
		\end{tabular}
		\caption{QDP error fraction $\tau$ of each decoder applied to the dual $\RS[q,q-k]_q$, and the resulting IBDD error fraction $\tau'$ on the primal $\RS[q,k]_q$, at the Cheng-Wan parameters $k = 3h+4$, $\widetilde{h} = h/q$. The approximation $\tau' \approx 1 - \tau$ holds throughout since $\tau \ll 1$. The Cheng-Wan target corresponds to decoding radius $q - 4h - 4$, i.e.\ error fraction $1 - 4\widetilde{h}$, requiring a decoder with $\tau \approx 4\widetilde{h}$. All efficient decoders fall short by a constant factor.}
		\label{Table:1}
	\end{table}
	
	\paragraph{From uniform to Cheng-Wan instances.}
	Theorem~\ref{thm:RS-instantiation} produces an $\IBDD$ instance with a uniformly random starting point, while the Cheng-Wan reduction produces structured starting points $\yv_0^{(i)}$. We discuss what this gap means for our setting.
	
	Two cases of Regev-type reductions are known to be sound. The first is when the decoder is exact, in which case the reduction handles arbitrary $\BDD$ instances. The second is when the error function is itself randomized in a way that decouples it from the instance, as in the random-oracle construction of~\cite{YZ24}. Outside these cases, the reduction can fail. Explicit counterexamples appear in~\cite{CT24,BCT25}: one can construct approximate decoders that, applied to fixed received words, are correlated with the instance in such a way that the dual-code information is destroyed after the Fourier transform. Passing to $\IBDD$ avoids this failure mode by randomizing the starting point, so the decoder cannot be tuned to a specific coset.
	
	The Cheng-Wan instances are neither fully random nor worst-case. The starting point $\yv_0^{(i)}$ depends on $i$, which ranges over $\{0,\ldots,q^h-2\}$, and the discrete logarithm itself can be randomized via the self-reduction of Theorem~\ref{thm:random-self-reduction-general}. So there is partial randomization, but not enough to put us directly in the $\IBDD$ regime.
	
	\begin{assumption}\label{ass:heuristic}
		A QDP decoder for $\RS[q,q-(3h+4)]_q$ at parameter $\tau \le 4\widetilde{h}\,(1-o(1))$, when fed into Theorem~\ref{thm:RS-instantiation}, succeeds with non-negligible probability on the Cheng-Wan distribution $\mathcal{D}^{\BDD}_{q,h,4h+4}$.
	\end{assumption}
	
	The counterexamples of~\cite{CT24,BCT25} are constructed adversarially: the decoder is built with the target instance in mind, so that the Bernoulli error structure of the QDP combines with the decoder's behavior to erase the dual-code information. The decoders we consider in this paper, namely Berlekamp-Welch, Guruswami-Sudan, and USD, as well as any decoder one might construct to reach $4\widetilde{h}$, are designed without any reference to the Cheng-Wan reduction. We expect that the pair (decoder, Bernoulli error) is not adversarially correlated with the Cheng-Wan instances, and so does not exhibit the failure mode of the counterexamples.
	
	Combined with the partial randomization of the Cheng-Wan distribution, this is what makes Assumption~\ref{ass:heuristic} reasonable. We do not prove it; a rigorous proof would require either a more detailed analysis of the Cheng-Wan distribution or a different reduction strategy, and we leave this to future work.
	
	\begin{theorem}[DLOG via QDP]\label{thm:DLOG-via-Regev-heuristic}
		Let $q$ be a prime power and $1 \le h \le q^{1/4} - 2$. Under Assumption~\ref{ass:heuristic}, an efficient quantum algorithm solving $\QDP(\RS[q,q-(3h+4)]_q, \tau)$ with probability $P_{\mathrm{Dec}} \ge 1/\poly(q)$ for some $\tau \le 4\widetilde{h}\,(1 - o(1))$ yields an algorithm solving $\mathrm{DLOG}_{\F_{q^h}^\times}$ in time $\tilde{O}(q)$ with high probability.
	\end{theorem}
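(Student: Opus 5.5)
The plan is to chain three earlier results: Theorem~\ref{thm:RS-instantiation} (QDP on the dual gives a decoder for the primal), Assumption~\ref{ass:heuristic} (this decoder also works on Cheng--Wan instances), and Theorem~\ref{thm:abelian-DLOG-BDD} together with Corollary~\ref{cor:cheng-wan-augmented-relations} (enough successful decodings give DLOG).

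\textbf{Step 1: the radius is large enough.} Let $k=3h+4$ and $\tau\le 4\widetilde{h}(1-o(1))$. We need $\tau' q\ge q-4h-4$, where $\tau'=\tau^\perp(1+(\tau^\perp q)^{-1/3})$.

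Write $t=\tau q\le (4h+4)(1-o(1))$. The expansion in the paragraph on $\tau^\perp\approx 1-\tau$ gives
\[
q\tau^\perp = q-t-2\sqrt{t}-1+O(t/q).
\]
Since $\tau^\perp q=\Theta(q)$, the multiplicative factor adds $\delta\,q\tau^\perp\approx q^{2/3}$. This extra term dominates both the $2\sqrt{t}+1=O(q^{1/8})$ loss and the $o(h)$ slack. Hence $\tau' q\ge q-4h-4$ for large $q$.

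One subtlety: if the $o(1)$ in the hypothesis is too weak, the $2\sqrt t$ correction could matter. However, the $q^{2/3}$ gain from Theorem~\ref{thm:RS-instantiation} absorbs it. I would also note that $\tau'$ can be capped at the target radius, since Chernoff only needs $\delta\to0$ with $\delta^2\tau^\perp q\to\infty$.

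\textbf{Step 2: decoding the Cheng--Wan words.} Apply Assumption~\ref{ass:heuristic}. The Regev-reduction algorithm from Theorem~\ref{thm:RS-instantiation}, run on the structured received word $\yv^{(i)}$ with $i$ uniform, returns with non-negligible probability (say $\ge 1/\poly(q)$) a codeword within distance $q-4h-4$.

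Decoding is verifiable: check the distance, then check that the recovered $t(x)$ makes $f_i+t h$ split into distinct linear factors over $S$. So we can repeat $\poly(q)$ times to amplify, with no false outputs. Each success gives a relation $b(\alpha)^i=\prod_{a\in A_i}(\alpha-a)$. The same applies to the augmented relations $b(\alpha)^j/(\alpha-a)$, which are again Cheng--Wan words with uniform $j$.

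\textbf{Step 3: collecting relations and finishing.} By Corollary~\ref{cor:cheng-wan-augmented-relations}, $O(q\log q)$ relations span the relation space with high probability. We then solve the linear system modulo $q^h-1$ for the factor-base logarithms.

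To get the logarithm of an arbitrary target, first randomize it using Theorem~\ref{thm:random-self-reduction-general}. Then decompose it via \cite[Theorem~4]{cheng2007list}, using one more decoding of the associated word, and sum the factor-base logarithms.

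The total running time is $O(q\log q)$ decoding calls, each costing $\poly$-time divided by $P_{\mathrm{Dec}}$, plus the linear algebra. Matching the $\tilde O(q)$ claim means treating the per-call cost as polylogarithmic overhead, in the same sense as Theorem~\ref{thm:abelian-DLOG-BDD}; I would state it in that convention.

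\textbf{Main obstacle.} The real gap is Step 2: moving from uniform $\IBDD$ starting points to the Cheng--Wan distribution. This step is entirely absorbed by Assumption~\ref{ass:heuristic}, so the proof only needs to invoke it cleanly. The delicate part to verify is that the assumption's success probability is per-instance over random $i$, and that it covers the augmented words as well.
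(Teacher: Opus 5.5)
Your route is the paper's: Theorem~\ref{thm:RS-instantiation}, then Assumption~\ref{ass:heuristic}, then Corollary~\ref{cor:cheng-wan-augmented-relations} and linear algebra. But Step~1 has the inequality backwards, and its estimate does not hold.

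\textbf{Why the direction is wrong.} A Cheng--Wan relation needs a codeword agreeing with $\yv^{(i)}$ on at least $4h+4$ points. The reason is that $f_i+(u+x^{3h+4})h$ is monic of degree $4h+4$, so it splits into linear factors over $\F_q$ only if it vanishes on $4h+4$ elements. So the Regev algorithm's radius must satisfy $\tau' q\le q-4h-4$. A larger radius is a weaker guarantee, and a radius $\ge q$ is trivial.

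\textbf{Why the estimate fails.} The term $\delta\tau^\perp q\approx q^{2/3}$ is not a gain. It is what makes Theorem~\ref{thm:RS-instantiation} vacuous here: with $t=\tau q=O(q^{1/4})$, you get $\tau' q\approx q-t-2\sqrt t-1+q^{2/3}>q$. Capping does not rescue this. Certifying radius $q-4h-4$ forces $\delta=O(q^{-3/4})$, so $\delta^2\tau^\perp q\to 0$ and the multiplicative Chernoff bound gives nothing.

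\textbf{What works instead.} Use a lower-tail bound on the number of zero coordinates $Z\sim\mathrm{Bin}(q,1-\tau^\perp)$. Its mean is $1+(q-2)\tau+2\sqrt{\tau(1-\tau)(q-1)}\approx t+2\sqrt t+1$. Theorem~\ref{Theorem:Main} needs $Z\ge 4h+4$ except with probability $o(P_{\mathrm{Dec}})$, and this forces $t$ to be at least about $4h+4$. Since $\tau^\perp$ decreases in $\tau$, the real hypothesis is a \emph{lower} bound $\tau\gtrsim 4\widetilde{h}$. This matches Table~\ref{Table:1}, where Berlekamp--Welch at $\tau\approx\frac32\widetilde{h}$ ``falls short''.

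The paper's own proof makes the identical move (``$\tau^\perp\ge 1-4\widetilde{h}(1+o(1))$, so \ldots{} solving $\IBDD(\RS[q,3h+4]_q,q-4h-4)$''). So you have reproduced its argument, slip included. The implication survives formally only because Assumption~\ref{ass:heuristic} is read as directly granting success on the Cheng--Wan BDD at radius $q-4h-4$, which leaves Step~1 doing no work. With the hypothesis as an upper bound on $\tau$, however, that assumption is false for small $\tau$: as $\tau\to 0$, $|\hf|^2$ has on average about one zero coordinate, so the output almost never agrees with $\yv^{(i)}$ on $4h+4$ points. The meaningful statement therefore needs $\tau\gtrsim 4\widetilde{h}$, with Step~1 redone via the zero count above.

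Your Steps~2--3 are sound and more careful than the paper's one line: verifiability and amplification, the augmented words, and self-reducing the target.
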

	
	\begin{proof}
		By Theorem~\ref{thm:abelian-DLOG-BDD}, it suffices to solve $\BDD(\RS[q,3h+4]_q, q-4h-4, \yv_0^{(i)})$ for the Cheng-Wan starting points $\yv_0^{(i)}$. A $\QDP$ solver at $\tau \le 4\widetilde{h}\,(1-o(1))$ has $\tau^\perp \ge 1 - 4\widetilde{h}(1+o(1))$, so by Theorem~\ref{thm:RS-instantiation} it yields an efficient quantum algorithm solving $\IBDD(\RS[q,3h+4]_q, q-4h-4)$ on uniform starting points with probability $P_{\mathrm{Dec}} - \mathrm{negl}(q)$. Under Assumption~\ref{ass:heuristic}, this algorithm also succeeds with non-negligible probability on the Cheng-Wan distribution. By Corollary~\ref{cor:cheng-wan-augmented-relations}, $O(q\log q)$ successful decodings give enough relations to recover the discrete logarithm via linear algebra in time $\tilde{O}(q)$.
	\end{proof}
	\section{The Pretty Good Measurement}
	\label{sec:PGM}
	
	In this section we show that the Pretty Good Measurement, applied within Regev's reduction, solves every $\BDD$ instance unconditionally. The cost is that the PGM generally requires exponential resources to implement, so the result is an existence statement rather than an efficient algorithm. In particular, the same machinery that, under Assumption~\ref{ass:heuristic}, solves DLOG via Theorem~\ref{thm:DLOG-via-Regev-heuristic} also solves the NP-hard $\BDD$ instances of Theorem~\ref{thm:ggg18}, without any conjecture.
	
	\subsection{The Pretty Good Measurement for the QDP problem}
	
	\begin{definition}
		Let $\{\ket{\psi_\cv}\}_{\cv \in \C}$ be a set of quantum pure states. The Pretty Good Measurement associated to these states is the POVM $\{M_\cv\}_{\cv \in \C}$ where
		$$M_\cv = \rho^{-\frac{1}{2}} \kb{\psi_\cv} \rho^{-\frac{1}{2}} \quad \textrm{with} \quad \rho = \sum_{\cv \in \C} \kb{\psi_{\cv}}.$$
	\end{definition}
	
	\begin{proposition}[\cite{BCT25}]\label{Proposition:PGM}
		Let $\Gm \in \F_q^{k \times n}$ be a generating matrix of a code $\C$ and let $f : \F_q^n \rightarrow \mathbb{C}$ with $\norm{f}_2 = 1$. Let $\{\ket{\psi_{\sv}}\}_{\sv \in \F_q^k}$ be the set of quantum pure states where $\ket{\psi_\sv} = \sum_{\ev \in \F_q^n} f(\ev) \ket{\sv\Gm + \ev}$. For each $\uv \in \F_q^k$, let
		$$\ket{W_{\uv}} = \sum_{\yv \in \C^\bot_{\uv}} \hf(\yv) \ket{\yv} \quad \text{and} \quad \ket{\widetilde{W}_{\uv}} = \frac{\ket{W_{\uv}}}{\norm{\ket{W_{\uv}}}}.$$
		If $\norm{\ket{W_\uv}} \neq 0$ for each $\uv \in \F_q^k$, then the POVM associated to $\{\ket{\psi_{\sv}}\}_{\sv \in \F_q^k}$ is the measurement $\{\kb{\widehat{Y}_\sv}\}_{\sv}$ with
		$\ket{\widehat{Y}_\sv} = \frac{1}{\sqrt{q^k}}\sum_{\uv \in \F_q^k} \chi_{\uv}(\sv) \ket{\widetilde{W}_{\uv}}$.
	\end{proposition}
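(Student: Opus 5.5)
The plan is to diagonalize $\rho$ in the Fourier basis and then read the PGM vectors off directly. Throughout, $\ket{\psi_\sv}$ denotes the state in the statement.

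\textbf{Step 1: Fourier expansion of the states.} Apply the QFT to each $\ket{\psi_\sv}$. Using the identity $\chi_\yv(\sv\Gm)=\chi_{\yv\Gm^\top}(\sv)$ and the homomorphism property from Proposition~\ref{prop:characters}, we expect
\[
\mathrm{QFT}\ket{\psi_\sv}=\sum_{\yv}\chi_\yv(\sv\Gm)\,\hf(\yv)\ket{\yv}=\sum_{\uv\in\F_q^k}\chi_\uv(\sv)\sum_{\yv\in\C^\perp_\uv}\hf(\yv)\ket{\yv}.
\]
Here we group the $\yv$ by their dual syndrome $\uv=\Gm\yv^\top$. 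Since the PGM is unitarily covariant, we may work in the Fourier basis and afterwards identify $\ket{W_\uv}$ with its preimage under the QFT; this is implicitly how the statement is phrased. The expansion becomes
\[
\ket{\psi_\sv}=\sum_\uv \chi_\uv(\sv)\ket{W_\uv}.
\]

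\textbf{Step 2: diagonalize $\rho$.} The vectors $\ket{W_\uv}$ have disjoint supports, because the dual cosets $\C^\perp_\uv$ partition $\F_q^n$. Hence they are pairwise orthogonal. Summing over $\sv$ and applying character orthogonality (Proposition~\ref{prop:characters}) over $\F_q^k$ gives
\[
\rho=\sum_\sv\kb{\psi_\sv}=q^k\sum_\uv\kb{W_\uv}=q^k\sum_\uv\norm{\ket{W_\uv}}^2\kb{\widetilde W_\uv}.
\]
The hypothesis $\norm{\ket{W_\uv}}\neq0$ ensures that this is a spectral decomposition on the span $\mathcal S$ of the $\ket{\widetilde W_\uv}$, with all eigenvalues nonzero. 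We interpret $\rho^{-1/2}$ as the inverse square root on its support $\mathcal S$, which is standard for the PGM. Then
\[
\rho^{-1/2}=q^{-k/2}\sum_\uv\norm{\ket{W_\uv}}^{-1}\kb{\widetilde W_\uv}.
\]

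\textbf{Step 3: compute the PGM vectors.} Applying $\rho^{-1/2}$ to $\ket{\psi_\sv}$, each term $\chi_\uv(\sv)\ket{W_\uv}=\chi_\uv(\sv)\norm{\ket{W_\uv}}\,\ket{\widetilde W_\uv}$ becomes $q^{-k/2}\chi_\uv(\sv)\ket{\widetilde W_\uv}$. Therefore
\[
\rho^{-1/2}\ket{\psi_\sv}=\ket{\widehat Y_\sv},
\]
and so $M_\sv=\kb{\widehat Y_\sv}$. As a sanity check, the $\ket{\widehat Y_\sv}$ form an orthonormal basis of $\mathcal S$: they are a Fourier transform of the orthonormal family $\ket{\widetilde W_\uv}$. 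Hence the POVM sums to the projector onto $\mathcal S$, as expected.

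\textbf{Main obstacle.} The main difficulty is bookkeeping rather than conceptual. We must get the sign and conjugation conventions of the characters right in Step 1, so that the coefficient comes out as exactly $\chi_\uv(\sv)$ and not its conjugate. We must also handle the QFT identification and the restriction of $\rho^{-1/2}$ to its support cleanly. Everything else follows from the disjoint-support orthogonality.
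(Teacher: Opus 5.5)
Your proof is correct. The paper does not prove this proposition; it imports it from \cite{BCT25}, so there is no in-text argument to compare against. Your argument is a complete, self-contained proof: you diagonalize $\rho$ in the Fourier basis, using that the $\ket{W_\uv}$ have disjoint supports and that character orthogonality over $\F_q^k$ removes the cross terms. This is the natural argument for this group-covariant family of states.

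The conjugation worry in your last paragraph is unfounded. With $\mathrm{QFT}\ket{\xv}=q^{-n/2}\sum_{\yv}\chi_{\yv}(\xv)\ket{\yv}$ and $\hf(\yv)=q^{-n/2}\sum_{\ev}\chi_{\yv}(\ev)f(\ev)$, one gets $\mathrm{QFT}\ket{\psi_\sv}=\sum_{\yv}\chi_{\yv\Gm^\top}(\sv)\,\hf(\yv)\ket{\yv}$. So the coefficient is exactly $\chi_\uv(\sv)$, as you claimed.

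The two interpretive points you flag are exactly what the statement leaves implicit, and they are worth stating explicitly. First, the POVM on the original space is $\{\mathrm{QFT}^\dagger\kb{\widehat{Y}_\sv}\mathrm{QFT}\}_\sv$. This is how the paper later uses the result in Lemma~\ref{Lemma:GammaEquality}, via $\braket{Y_\sv|\psi_\sv}=\braket{\widehat{Y}_\sv|\widehat{\psi}_\sv}$. Second, $\rho$ has rank $q^k<q^n$ when $k<n$, so $\rho^{-1/2}$ must be the inverse on the support. The $M_\sv$ then sum to the projector onto $\mathrm{span}\{\ket{\psi_\sv}\}$, and the POVM is completed by the complementary projector, which has zero probability on every $\ket{\psi_\sv}$.
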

	
	\subsection{Solving hard instances with the Pretty Good Measurement}
	
	\begin{theorem}\label{thm:PGM-solves-BDD}
		For any $\BDD(\C,t,\yv_0)$ instance, there exists a quantum algorithm based on Regev's reduction using a (not necessarily efficient) Pretty Good Measurement for the $\QDP$ problem that solves this $\BDD$ instance.
	\end{theorem}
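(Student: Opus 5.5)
The plan is to run Regev's reduction as in Proposition~\ref{prop:perfect-QDP}, but to use the Pretty Good Measurement as a \emph{coherent} basis change rather than as a decoder that outputs $\sv$ and is then uncomputed. Proposition~\ref{Proposition:PGM} says the PGM basis $\{\ket{\widehat{Y}_\sv}\}$ is exactly the Fourier dual of the normalized coset states $\{\ket{\widetilde W_\uv}\}$. Exploiting this, the reduction should output the normalized coset state attached to the target syndrome exactly, with no approximation error.

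\textbf{Setup.} Let $\uv_0=\Hm\yv_0^\top$, and work with the dual code $\C^\perp$, whose generating matrix is $\Hm$ (so $m:=n-k$ plays the role of $k$ in Proposition~\ref{Proposition:PGM}). Choose $f$ through its Fourier transform: let $\hf$ be the normalized indicator of the Hamming ball $B_t=\{\yv:\norm{\yv}\le t\}$, so that $\hf(\yv)=|B_t|^{-1/2}$ on $B_t$ and $0$ elsewhere. Then set $f$ to be the inverse Fourier transform of $\hf$; by Parseval, $\norm{f}_2=1$. With this choice, $\ket{W_\uv}=|B_t|^{-1/2}\sum_{\yv\in B_t,\,\Hm\yv^\top=\uv}\ket{\yv}$, and this is nonzero exactly when the coset of syndrome $\uv$ contains a word of weight at most $t$. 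By the $\BDD$ promise, $\ket{W_{\uv_0}}\neq 0$.

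\textbf{Main obstacle.} Proposition~\ref{Proposition:PGM} assumes $\ket{W_\uv}\neq0$ for every $\uv$, and this fails in general at small radius. I would address this first by rerunning the computation of~\cite{BCT25} restricted to the set $U^\ast=\{\uv:\ket{W_\uv}\neq0\}$. The steps are as follows.
\begin{enumerate}
\item Show that $\rho=\sum_\sv\kb{\psi_\sv}$ is diagonal in the orthogonal family $\{\ket{W_\uv}\}_{\uv\in U^\ast}$ (up to the QFT). This uses $\mathrm{QFT}\ket{\psi_\sv}\propto\sum_{\uv}\chi_\uv(\sv)\ket{W_\uv}$, which follows from Claim~\ref{claim:char-sum}.
\item Conclude that the PGM operators are $M_\sv=\kb{\widehat Y_\sv}$ with $\ket{\widehat Y_\sv}=q^{-m/2}\sum_{\uv\in U^\ast}\chi_\uv(\sv)\ket{\widetilde W_\uv}$. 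These vectors no longer form an orthonormal basis; instead they form a Parseval frame of the span $\mathcal S$ of $\{\ket{\widetilde W_\uv}\}_{\uv\in U^\ast}$.
\item Check that the map $V:\ket{\phi}\mapsto\sum_\sv\braket{\widehat Y_\sv|\phi}\ket{\sv}$ is an isometry from $\mathcal S$ into $\mathbb{C}^{q^m}$. This map is the Naimark dilation of the PGM, i.e.\ its coherent implementation (not necessarily efficient).
\end{enumerate}

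\textbf{The algorithm.} As in Proposition~\ref{prop:perfect-QDP}, prepare the phase state $q^{-m/2}\sum_\sv\overline{\chi_\sv(\uv_0)}\ket{\sv}$. A direct character-sum computation gives $V\ket{\widetilde W_{\uv_0}}=q^{-m/2}\sum_\sv\overline{\chi_\sv(\uv_0)}\ket{\sv}$, using $\sum_\sv\chi_{\uv-\uv_0}(\sv)=q^m\delta_{\uv,\uv_0}$. Hence the phase state lies in the image of $V$, and applying $V^\dagger$ (extended to a unitary, e.g.\ by a standard isometry completion) yields exactly $\ket{\widetilde W_{\uv_0}}$. This step is precisely the ``decode with the PGM, then uncompute'' part of Regev's reduction: the syndrome register carries the phases $\overline{\chi_\sv(\uv_0)}$, and the PGM converts $\sv$-labels into the Fourier-dual $\uv$-labels.

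\textbf{Conclusion.} Measuring $\ket{\widetilde W_{\uv_0}}$ in the computational basis returns, with probability $1$, some $\yv$ with $\Hm\yv^\top=\uv_0$ and $\norm{\yv}\le t$. Then $\cv=\yv_0-\yv\in\C$ satisfies $d_H(\cv,\yv_0)\le t$, which solves the instance. No assumption on $\C$ or $\yv_0$ is used beyond the promise, so the argument covers the Cheng--Wan instances and the $\NP$-hard instances of Theorem~\ref{thm:ggg18}. The only cost is that implementing $V$, and preparing $f$, may require exponential resources.
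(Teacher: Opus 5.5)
Your argument is correct, and it takes a genuinely different route from the paper's. The paper avoids your ``main obstacle'' by construction. It takes $\hf\propto g$ with $g=1$ on every non-target coset and on the weight-$t$ words of the target coset, so every $\ket{W_\uv}$ is nonzero and Proposition~\ref{Proposition:PGM} applies verbatim with an orthonormal PGM basis. It then runs the forward pipeline:
\begin{itemize}
\item prepare $\sum_\sv\chi_\sv(\uv_0)\ket{\psi_\sv}\ket{\sv}$, which is what Lemma~\ref{Lemma:StateConstruction} is for;
\item apply the in-place unitary $\ket{Y_{\sv'}}\ket{\yv}\mapsto\ket{Y_{\sv'}}\ket{\yv+\sv'}$;
\item measure the syndrome register and post-select on $0$, using Lemma~\ref{Lemma:GammaEquality} (all diagonal overlaps equal one real $\Gamma\ge 1-q^{-k}$, where the paper's $k$ is your $m$) to see that the post-selected state is exactly right and that post-selection rarely fails;
\item apply a final QFT to obtain $\ket{\widetilde W_{\uv_0}}$.
\end{itemize}

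You instead keep the natural ball indicator and tolerate vanishing coset states by treating the PGM vectors as a Parseval frame on $\mathcal S$. You then invert the Naimark isometry directly on the phase state, which lands on $\ket{\widetilde W_{\uv_0}}$ (the paper's $\ket{\Omega_4}$) in one step with probability $1$. This needs no preparation of $f$, no post-selection and no $\Gamma$ lemma. It also does not need a solution of weight exactly $t$, which the paper's choice of $g$ tacitly assumes.

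The padding is what makes the paper's forward route work. With your ball indicator, as soon as some $\ket{W_\uv}$ vanishes the $q^m$ frame vectors are linearly dependent, so the in-place PGM unitary is not well defined. Moreover, the probability that the PGM returns $\sv$ on $\ket{\psi_\sv}$ is $\bigl(q^{-m/2}\sum_{\uv\in U^\ast}\norm{\ket{W_\uv}}\bigr)^2\le |U^\ast|/q^{m}$, which is small whenever the radius-$t$ ball meets few syndromes.

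The price of your route is that the QDP states $\ket{\psi_\sv}$ never enter the algorithm. The PGM is used only through its dilation run backwards, so the link to ``decode, then uncompute'' is more formal than in the paper. Relatedly, your closing remark about the cost of preparing $f$ is moot: $f$ is unused, and it is efficiently preparable anyway.
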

	
	\begin{proof}
		We fix a linear code $\C$ specified by a generator matrix $\Gm$, a decoding radius $t$, and a starting point $\yv_0 \in \F_q^n$. Let $\uv_0 = \Gm\yv_0^\top \in \F_q^k$. We construct a suitable error function $f$ and verify the hypotheses of Proposition~\ref{Proposition:PGM}. Define
		$$g(\xv) = \left\{
		\begin{array}{cl}
			1 & \text{if } \Gm {\xv}^\top \neq \uv_0 \text{ or } \norm{\xv} = t, \\
			0 & \text{otherwise,}
		\end{array}\right.
		$$
		and set $\hf = \frac{g}{\norm{g}_2}$.
		
		\begin{lemma}\label{Lemma:StateConstruction}
			One can efficiently construct the state $\sum_{\ev \in \F_q^n} f(\ev)\ket{\ev}$.
		\end{lemma}
		\begin{proof}
			Since $g$ is efficiently computable, the state $\ket{A} = \frac{1}{\sqrt{q^n}}\sum_{\xv \in \F_q^n} \ket{\xv}\ket{g(\xv)}$ is easily prepared. Let $T = \{\xv : g(\xv) = 1\}$. We have $|\{\xv : \Gm\xv^{\top} \neq \uv_0\}| = q^n - q^{n-1}$, so $|T| > q^n - q^{n-1}$ (using the assumption that a BDD solution exists).
			
			We prepare $\ket{A}$ and measure the second register. If we measure $0$, we restart. Otherwise we obtain the state
			$$\frac{1}{\sqrt{|T|}} \sum_{\xv \in T} \ket{\xv} = \sum_{\xv \in \F_q^n} \hf(\xv)\ket{\xv}.$$
			Applying an inverse quantum Fourier transform yields $\sum_{\ev \in \F_q^n} f(\ev)\ket{\ev}$. The success probability at each attempt is $|T|/q^n > 1 - 1/q$, so the procedure is efficient.
		\end{proof}
		
		We now verify the non-vanishing condition on the dual coset states $\ket{W_\uv} = \sum_{\yv \in \C^\bot_\uv} \hf(\yv)\ket{\yv}$ required by Proposition~\ref{Proposition:PGM}:
		\begin{itemize}
			\item If $\uv \neq \uv_0$, then $g(\xv) = 1$ for all $\xv \in \C^\bot_{\uv}$ (since $\Gm\xv^\top \neq \uv_0$), so $\hf(\xv) = \frac{1}{\sqrt{|T|}}$ there, and $\norm{\ket{W_{\uv}}} = \sqrt{\frac{q^{n-k}}{|T|}} > 0$.
			\item If $\uv = \uv_0$, let $\zv$ be a solution of the BDD instance (so $\Gm\zv^\top = \uv_0$ and $\norm{\zv} = t$). Then $g(\zv) = 1$ by definition, so $\norm{\ket{W_{\uv_0}}} > 0$.
		\end{itemize}
		We can therefore apply Proposition~\ref{Proposition:PGM}. Performing the PGM means implementing the unitary
		$$U_{PGM} : \ket{Y_\sv}\ket{\yv} \rightarrow \ket{Y_\sv}\ket{\yv + \sv}.$$
		The algorithm proceeds as follows.
		\begin{enumerate}
			\item \emph{Prepare the phase-encoded superposition.} Construct the state
			$$\ket{\Omega_1} = \frac{1}{\sqrt{q^k}} \sum_{\sv \in \F_q^k} \chi_{\sv}(\uv_0) \ket{\psi_{\sv}}\ket{\sv}, \quad \text{with } \ket{\psi_\sv} = \sum_{\ev \in \F_q^n} f(\ev)\ket{\sv\Gm + \ev}.$$
			
			\item \emph{Expand in the PGM basis.} Write $\ket{\psi_{\sv}} = \sum_{\sv'} \gamma_{\sv,\sv'} \ket{Y_{\sv'}}$, so that
			$$\ket{\Omega_1} = \frac{1}{\sqrt{q^k}} \sum_{\sv \in \F_q^k} \chi_{\sv}(\uv_0) \left(\sum_{\sv' \in \F_q^k} \gamma_{\sv,\sv'}\ket{Y_{\sv'}}\right)\ket{\sv}.$$
			
			\item \emph{Apply $U_{PGM}$ to disentangle the syndrome.} Obtain
			$$\ket{\Omega_2} = \frac{1}{\sqrt{q^k}} \sum_{\sv, \sv' \in \F_q^k} \chi_{\sv+\sv'}(\uv_0)\gamma_{\sv + \sv',\sv'}\ket{Y_{\sv'}}\ket{\sv}.$$
			
			\item \emph{Measure and post-select on $0^k$.} The probability that the outcome is $0^k$ equals $\frac{1}{q^k}\sum_{\sv \in \F_q^k} |\gamma_{\sv,\sv}|^2$. By Lemma~\ref{Lemma:GammaEquality} (proved below), $\gamma_{\sv,\sv} = \Gamma \ge 1 - \frac{1}{q^k}$ for all $\sv$, where $\Gamma$ is real and positive. Hence the measurement succeeds with probability $\Gamma^2 \ge 1 - 2/q^k$, so the algorithm essentially never restarts. The post-measurement state is
			$$\ket{\Omega_3} = \frac{1}{\sqrt{q^k}}\sum_{\sv' \in \F_q^k} \chi_{\sv'}(\uv_0) \ket{Y_{\sv'}}.$$
			If the outcome is not $0^k$, restart from step~1.
			
			\item \emph{Apply the quantum Fourier transform to recover the target coset.} Obtain
			$$\ket{\Omega_4} = \frac{1}{q^k} \sum_{\sv', \uv \in \F_q^k} \chi_{\sv'}(\uv_0 + \uv)\ket{\widetilde{W}_{\uv}} = \ket{\widetilde{W}_{\uv_0}}.$$
			
			\item \emph{Measure to obtain a solution.} Let $S = \{\xv \in \C^\bot_{\uv_0} : \norm{\xv} = t\}$. We have
			$$\ket{\widetilde{W}_{\uv_0}} = \frac{1}{\norm{\ket{W_{\uv_0}}}} \sum_{\xv \in \C^\bot_{\uv_0}} \hf(\xv)\ket{\xv} = \frac{1}{\sqrt{|S|}} \sum_{\xv \in S} \ket{\xv}.$$
			Measuring in the computational basis yields a solution.
		\end{enumerate}
		It remains to prove the lemma used in step~4.
		\begin{lemma}\label{Lemma:GammaEquality}
			There exists a real $\Gamma \ge 1 - \frac{1}{q^k}$, independent of $\sv$, such that $\gamma_{\sv,\sv} = \Gamma$ for all $\sv \in \F_q^k$.
		\end{lemma}
		\begin{proof}
			We compute
			\begin{align*}
				\gamma_{\sv,\sv} = \braket{Y_{\sv}|{\psi_{\sv}}}
				= \braket{\widehat{Y}_{\sv}|{\widehat{\psi}_{\sv}}}
				= \frac{1}{\sqrt{q^k}}\sum_{\uv \in \F_q^k} \braket{W_\uv|\widetilde{W}_{\uv}}
				= \frac{1}{\sqrt{q^k}}\sum_{\uv \in \F_q^k} w_{\uv} =: \Gamma,
			\end{align*}
			which is independent of $\sv$. For $\uv \neq \uv_0$, since $|T| \le q^n$ we have $w_{\uv} = \sqrt{q^{n-k}/|T|} \ge q^{-k/2}$, and therefore
			\begin{equation*}
				\Gamma = \frac{1}{\sqrt{q^k}}\sum_{\uv \in \F_q^k} w_{\uv} \ge \frac{q^k - 1}{q^k} = 1 - \frac{1}{q^k}. \qedhere
			\end{equation*}
		\end{proof}
	\end{proof}
	
	Theorem~\ref{thm:PGM-solves-BDD} applies in particular to the Cheng-Wan instances of Theorem~\ref{thm:abelian-DLOG-BDD} and to the NP-hard instances of Theorem~\ref{thm:ggg18}. In both cases, the PGM solves the BDD instance unconditionally, and the remaining obstacle is implementing it efficiently. Compared to Theorem~\ref{thm:DLOG-via-Regev-heuristic}, which requires both an efficient $4\widetilde{h}$-decoder and Assumption~\ref{ass:heuristic}, the PGM removes the conjecture but pays for it in implementation cost.

	\section*{Acknowledgments.}
	
	MIFG was supported by the National Science Foundation under NSF CAREER Award CCF-2048204. MIFG thanks Urmila Mahadev, John Preskill, Itay Shalit, Stephen Jordan and Elena Grigorescu for useful discussions. AC thanks Daniel Augot and François Morin for helpful discussions, and acknowledges funding from the French PEPR integrated projects EPIQ (ANR-22-PETQ-007), PQTLS (ANR-22-PETQ-008) and HQI (ANR-22-PNCQ-0002) all part of plan France 2030.

	\bibliographystyle{alpha}
	\bibliography{Bib}
	
	\appendix
	
	\section{Cheng--Wan helper lemmas}
	
	\begin{lemma}[Pomerance-type spanning lemma {\cite[Lemma~4.1 and the following remark]{Pomerance1987}}]
		\label{lem:pomerance-relations}
		Let \(V\) be a vector space over a field \(F\) of dimension \(d \ge 1\), and let
		\(S \subseteq V\) be a finite set whose linear span is \(V\).
		Fix a basis \(b_1,\dots,b_d\) of \(V\), and set
		\[
		\Lambda = \lfloor 2 \log_2 d \rfloor + 3.
		\]
		
		Sample \(2d\Lambda\) vectors from \(S\) independently with replacement,
		according to any fixed distribution on \(S\).
		Label the first \(d\Lambda\) samples as \(v_1,\dots,v_{d\Lambda}\), and relabel
		the remaining \(d\Lambda\) samples as
		$
		w_{j,i}, \, j=1,\dots,d,\;\; i=1,\dots,\Lambda.
		$
		Let \(V^*\) be the subspace of \(V\) spanned by
		$
		\{v_1,\dots,v_{d\Lambda}\}
		\;\cup\;
		\{\, b_j + w_{j,i} : j=1,\dots,d,\; i=1,\dots,\Lambda \,\}.
		$
		Then
		\[
		\Pr[V^* = V] \ge 1 - \frac{1}{2d}.
		\]
	\end{lemma}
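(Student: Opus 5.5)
The proof follows Pomerance's argument: first handle the easy case, then show that each missing basis direction is hit by a shifted vector with constant probability.

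\textbf{Case split.} Condition on $W := \operatorname{span}\{v_1,\dots,v_{d\Lambda}\}$. If $W = V$ we are done, since $V^* \supseteq W$. Otherwise $W$ is a proper subspace, and we only need the shifted vectors $b_j + w_{j,i}$ to fill in the missing directions. It is convenient to work in the quotient $V/W$. Since the $b_j$ form a basis of $V$, their images span $V/W$, and $V^* = V$ holds as soon as, for every $j$, some $b_j + w_{j,i}$ lies in $W$ plus the images already obtained. A cleaner sufficient condition is that, for each $j$, there is some $i$ with $w_{j,i} \in W$. Then $b_j \in V^*$ for every $j$, so $V^* = V$.

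\textbf{Key step.} We show that a fresh sample from the fixed distribution lands in $W$ with probability at least $1/2$ once the $v$'s have been drawn. Let $\pi$ be the sampling distribution on $S$. For a subspace $U$, call it \emph{$\pi$-heavy} if $\pi(S\cap U) \ge 1/2$.

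Build the chain of spans $W_0 \subseteq W_1 \subseteq \dots$, where $W_m = \operatorname{span}(v_1,\dots,v_m)$. As long as $W_m$ is not heavy, each new sample escapes $W_m$ with probability greater than $1/2$, and escaping increases the dimension. The dimension can increase at most $d$ times. So, among $d\Lambda$ samples, the probability that $W$ is still not heavy is at most the probability that a Binomial$(d\Lambda, 1/2)$ variable is below $d$. A Chernoff bound gives at most $e^{-\Omega(d\Lambda)} \le 1/(4d)$.

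\textbf{Union bound.} Conditioned on $W$ being heavy, each $w_{j,i}$ lies in $W$ independently with probability at least $1/2$. Hence the probability that some $j$ has no $i$ with $w_{j,i} \in W$ is at most $d\,2^{-\Lambda}$. Since $\Lambda \ge 2\log_2 d + 3$, this is at most $d \cdot d^{-2}/8 = 1/(8d)$. Adding the two failure probabilities gives $\Pr[V^* \ne V] \le 1/(4d) + 1/(8d) < 1/(2d)$.

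\textbf{Main obstacle.} I expect the delicate point to be the constants in the binomial tail for small $d$, since $\Lambda$ is only logarithmic. I would handle it with an explicit estimate: the chance of fewer than $d$ successes in $d\Lambda \ge 3d$ fair-coin trials. If that estimate is not strong enough, I would fall back on Pomerance's original bookkeeping. There, the chain of spans is tracked step by step and the $1/2$-threshold argument is applied to each $W_m$ directly.
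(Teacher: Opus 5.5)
Your argument is correct and is essentially Pomerance's own proof of his Lemma~4.1, which the paper only cites without proof: call step $m$ a success if $v_m \notin W_{m-1}$ or $W_{m-1}$ already carries $\pi$-mass at least $1/2$, dominate the number of successes by a $\mathrm{Binomial}(d\Lambda,1/2)$ variable, and then bound the failure probability for each $j$ by $2^{-\Lambda}$ and take a union bound over $j$. One arithmetic slip: $\lfloor 2\log_2 d\rfloor+3\ge 2\log_2 d+3$ holds only when $d$ is a power of $2$, so in general you only have $\Lambda>2\log_2 d+2$ and hence $d\,2^{-\Lambda}<1/(4d)$ rather than $1/(8d)$; combined with your bound $\Pr[\mathrm{Bin}(d\Lambda,1/2)<d]\le 1/(4d)$ this still gives $\Pr[V^*\neq V]<1/(2d)$ (for $d=1,2$ check the binomial bound directly, where it equals $1/8$ and $11/1024$).
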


	\section{NP-hardness proof of BDD Reed--Solomon codes}\label{appendix:Subset-sum}
	
	\begin{lemma}[1-in-3SAT $\to$ MSS$(d)$ over $\mathbb{Q}$ (Lemma 3.2 in \cite{GandikotaGhaziGrigorescu2018})]
		\label{lem:13sat_to_MSS_Q}
		There is a polynomial-time reduction which, given a $1$-in-$3$-SAT instance
		$\varphi(z_1,\dots,z_n)$, constructs an instance of ${\rm MSS}(d)$ over the rationals
		$(A,k,m_1,\dots,m_d)$ with $A\subseteq\mathbb{Q}$, $|A|=N=n(2^{d+1}-2)$ and $k=N/2$, such that:
		\[
		\varphi \text{ is satisfiable}
		\quad\Longleftrightarrow\quad
		\exists\, S\subseteq A,\ \, |S|=k\ \text{ with }\ \sum_{w\in S} w^t = m_t \ \forall t\in[d].
		\]
		and every variable in the MSS instance has a $poly(n,d!)$ digit representation in base 10 (i.e. every number has magnitude at most $10^{poly(n,d!)}$).
	\end{lemma}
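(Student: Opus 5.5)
The statement is Lemma~3.2 of \cite{GandikotaGhaziGrigorescu2018}. The plan is to reproduce its construction: encode each variable of the 1-in-3SAT formula by a block of rationals whose power sums separate ``true'' from ``false'' choices, with each clause contributing to a prescribed coordinate. The natural scaffold is classical Subset Sum from 1-in-3SAT. For each variable $z_j$, form two numbers $a_j^{T}, a_j^{F}$ whose base-$B$ digit expansions carry a $1$ in the position of $z_j$ and in the positions of the clauses satisfied by that literal. The targets then force exactly one of $a_j^T, a_j^F$ to be chosen and each clause digit to sum to exactly $1$. Taking the base $B$ larger than the maximum possible carry rules out carries, which gives the equivalence for the first moment.

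The difficulty is that $\mathrm{MSS}(d)$ imposes $d$ moment equations simultaneously, and these must not rule out the honest solution nor admit spurious ones. The step I would carry out next is to replace each single number by a symmetric gadget. Each literal choice is represented by a set of $2^{d}-1$ rationals, giving $2(2^d-1)=2^{d+1}-2$ numbers per variable and hence $N=n(2^{d+1}-2)$. The gadget would be built from a Prouhet--Tarry--Escott-type pair: two disjoint multisets $P_j,Q_j$ of equal size whose $t$-th power sums agree for every $2\le t\le d$ and which differ only in the first moment by the Subset Sum number. The standard Prouhet--Thue--Morse splitting of $\{0,\dots,2^{d}-1\}$, shifted by $x_j$ and scaled, is the candidate. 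Choosing exactly one of $P_j,Q_j$ per variable selects half the elements, so $k=N/2$. The higher moments are then automatically equal to a fixed value, independent of the assignment, and that value is $m_t$ for $t\ge 2$. The first moment encodes the Subset Sum condition.

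For soundness I would argue that any size-$k$ subset meeting all $d$ moment constraints must select whole gadget halves. The approach is to place each variable's gadget at a distinct huge scale $x_j$, say $x_j = B^{j\cdot c d}$. Expanding $\sum_{w\in S} w^t$ in these scales, the top-order terms of the $t$-th moments count, per variable, how many elements are chosen and with what lower-order shifts. Matching the targets then forces exactly a full $P_j$ or a full $Q_j$ for each $j$. This gap-separation argument, making sure that mixed selections across the scales cannot cancel in all $d$ moments at once, is the main obstacle. I would handle it by induction on $t$: the first $t$ moments pin down the first $t$ ``digit layers'' of each gadget choice.

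Finally, I would bound the sizes. The $t$-th powers of numbers of magnitude $B^{O(nd)}$ grow like $B^{O(n d^2)}$, and the PTE coefficients contribute factorial-type factors bounded by $d!$. Hence every number has $\poly(n,d!)$ decimal digits, and the whole construction runs in time polynomial in $n$ whenever $d!$ is polynomial in $n$.
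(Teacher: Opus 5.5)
The paper gives no argument of its own here. It imports the lemma from Lemma~3.2 of \cite{GandikotaGhaziGrigorescu2018}, so your citation already matches what the paper does. The construction you sketch to reproduce that lemma, however, fails at both of its central steps.

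\emph{The gadget.} The Prouhet--Thue--Morse split of $\{0,\dots,2^d-1\}$ has the opposite moment profile to the one you need. Its halves have \emph{equal} power sums in every degree $0\le t\le d-1$, in particular equal first moments, and \emph{different} $d$-th power sums. Already for $d=2$, the halves $\{0,3\}$ and $\{1,2\}$ have sums $3,3$ but sums of squares $9,5$. Every map $a\mapsto x_j+\lambda a$ preserves this profile. So the shifted and scaled split can neither carry the Subset Sum number in the first moment nor make $m_d$ independent of the assignment. It also has $2^{d-1}$ elements per side, not $2^d-1$, so it does not produce $N=n(2^{d+1}-2)$.

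\emph{Soundness versus completeness.} Your soundness scheme conflicts with completeness. Take any equal-size pair with $\sum_P a\neq\sum_Q a$ and $\sum_P a^2=\sum_Q a^2$. Then
\[
\sum_{a\in P}(x_j+a)^2-\sum_{a\in Q}(x_j+a)^2=2x_j\Bigl(\sum_{a\in P}a-\sum_{a\in Q}a\Bigr)\neq 0 .
\]
Once the gadget of variable $j$ sits at its own huge offset $x_j$, the per-scale reading of $\sum_{w\in S}w^2$ that you want for soundness reveals which half variable $j$ took. Then $m_2$ would have to encode one particular satisfying assignment, which a polynomial-time reduction cannot find. Suppose instead you build $P_j,Q_j$ directly near $x_j$ so as to absorb this term. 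Then their internal spread must be at least of order $\sqrt{x_j}$. The layer-by-layer matching behind your induction on $t$ then fails even for the honest halves, which differ layer by layer and agree only in total. So the step you flag as the main obstacle is not merely unfinished; the approach to it cannot work.

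\emph{What is missing.} You need a gadget with the correct profile, and a soundness argument that does not use the higher moments at all. One way to get both, consistent with the count $2^{d+1}-2=2+4+\cdots+2^d$, is the following.
\begin{itemize}
\item Level $1$ of variable $i$ is the usual Subset Sum pair $\{a_i,b_i\}$ (variable digit plus clause digits).
\item At level $t=2,\dots,d$, add $2^t$ fresh elements $z_{i,t}+\sum_{l\in L}c_{i,t,l}$ for $L\subseteq[t]$. Send each to the true or false side according to the parity of $|L|$.
\end{itemize}
The operator $\Delta_{c_1}\cdots\Delta_{c_t}$ annihilates polynomials of degree $<t$ and maps $x^t$ to $t!\,c_1\cdots c_t$. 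Hence level $t$ adds nothing to the true-minus-false difference in any degree $<t$, in particular not in degree $1$. It cancels the accumulated degree-$t$ discrepancy once you solve for the single parameter $c_{i,t,t}$. After level $d$, the two sides have $2^d-1$ elements each, agree in degrees $2,\dots,d$, and differ by $a_i-b_i$ in degree $1$. So $m_2,\dots,m_d$ are the same for every assignment.

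For soundness, place the $z_{i,t}$ and the free $c_{i,t,l}$ ($l<t$) at fresh digit positions of a base larger than every possible column count. Take them so large that the solved-for $c_{i,t,t}$ contribute less than $1/2$ in absolute value to any subset sum. Then $\sum_{w\in S}w=m_1$ forces, without carries, exactly one of $a_i,b_i$ for each $i$ and exactly one true literal per clause, just as in the classical Subset Sum reduction. The higher moments serve only completeness.
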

	
	\begin{lemma}[MSS$(d)$ over $\mathbb{Q}$ $\to$ MSS$(d)$ over $\mathbb{Z}$ (Standard clearing denominators technique \cite{GandikotaGhaziGrigorescu2018})]
		\label{lem:Q_to_Z}
		Let $(A,k,m_1,\dots,m_d)$ be an ${\rm MSS}(d)$ instance over $\mathbb{Q}$, as specified in \cref{lem:13sat_to_MSS_Q}.
		Let $L$ be the least common multiple of the denominators of all rationals appearing in the instance.
		Define an integer instance $(A',k,m_1',\dots,m_d')$ by
		\[
		A' := \{Lw:\ w\in A\}\subseteq\mathbb{Z},
		\qquad
		m_t' := L^t m_t\in\mathbb{Z}\ \ \text{for all }t\in[d].
		\]
		Then $(A,k,m_1,\dots,m_d)$ is a YES-instance over $\mathbb{Q}$ iff
		$(A',k,m_1',\dots,m_d')$ is a YES-instance over $\mathbb{Z}$ where $\max\!\Bigl(\max_{u\in A'} |u|,\ \max_{1\le t\le d} |m'_t|\Bigr)
		\le
		\;
		10^{\,\mathrm{poly}(n,d!)\cdot O(dN)}.$
	\end{lemma}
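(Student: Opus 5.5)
We prove Lemma~\ref{lem:Q_to_Z} by a direct scaling argument. The key observation is that the power-sum constraints are homogeneous under a common rescaling of all elements of $A$. Multiplying every element by $L$ multiplies the $t$-th power sum by $L^t$.

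We carry out the following steps in order.

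First, we check that the new instance is well defined. Since $L$ is a common multiple of all denominators appearing in $A$, each $Lw$ with $w\in A$ is an integer. The map $w\mapsto Lw$ is injective because $L\neq 0$, so $|A'|=|A|=N$ and the size parameter $k=N/2$ is unchanged.

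Second, we check that each $m_t'=L^t m_t$ is an integer. Writing $m_t=a_t/b_t$ in lowest terms, we have $b_t\mid L$, so $b_t\mid L^t$ for every $t\ge 1$.

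Third, we prove the equivalence using the bijection $S\mapsto LS=\{Lw:w\in S\}$ between $k$-subsets of $A$ and $k$-subsets of $A'$. For every $t\in[d]$ we have
\[
\sum_{w'\in LS} (w')^t \;=\; L^t\sum_{w\in S} w^t .
\]
Since $L^t\neq 0$, the condition $\sum_{w\in S}w^t=m_t$ holds if and only if $\sum_{w'\in LS}(w')^t=m_t'$. This gives both directions of the YES-equivalence at once.

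Finally, we bound the magnitudes; this is the only step requiring care. By Lemma~\ref{lem:13sat_to_MSS_Q}, every rational in the instance has a $\poly(n,d!)$-digit representation. Hence each denominator is at most $10^{\poly(n,d!)}$. There are at most $N+d$ such denominators, so
\[
L\le \prod(\text{denominators})\le 10^{\poly(n,d!)\cdot (N+d)}.
\]
For the elements of $A'$, we get $|Lw|\le L\cdot 10^{\poly(n,d!)}$. For the targets, we get $|L^t m_t|\le L^d\, 10^{\poly(n,d!)}$. Combining these,
\[
\max\Bigl(\max_{u\in A'}|u|,\ \max_t |m_t'|\Bigr)\le 10^{\poly(n,d!)\cdot O(d(N+d))+\poly(n,d!)}.
\]
The main (mild) obstacle is matching this to the stated form $10^{\poly(n,d!)\cdot O(dN)}$. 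Because $d\le N$, we have $d(N+d)=O(dN)$, and the additive $\poly(n,d!)$ term is absorbed into the same form. Bounding $L$ crudely by the product of the denominators, rather than trying to estimate the lcm more sharply, is enough, since only the exponent's polynomial dependence matters downstream. The construction is clearly polynomial-time: computing $L$ takes $O(N+d)$ lcm operations on $\poly$-size integers, and each resulting number has $\poly(n,d!)\cdot O(dN)$ digits.
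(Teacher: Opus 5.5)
Your proof is correct and is exactly the standard clearing-denominators argument that the paper invokes by citation to \cite{GandikotaGhaziGrigorescu2018}, without writing out a proof of its own: homogeneity of the power sums under $w\mapsto Lw$ gives the YES-equivalence, and your bound $L\le\prod(\text{denominators})$ together with $d\le N$ supplies the magnitude estimate that the paper states without derivation. One caveat is that the estimate requires reading the hypothesis of Lemma~\ref{lem:13sat_to_MSS_Q} as a $\poly(n,d!)$-digit representation that also bounds the denominators, not merely the magnitudes suggested by its parenthetical; this is the reading you adopt, and it is the one the lemma needs.
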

	
	\begin{lemma}[Integer MSS$(d)$ $\to$ MSS$(d)$ over a finite field $F_p$ (Section 7 and Lemma 7.1 in \cite{GandikotaGhaziGrigorescu2018})]
		\label{lem:Z_to_Fq}
		Let $(A',k,m_1',\dots,m_d')$ be an ${\rm MSS}(d)$ over $\mathbb{Z}$ instance where $|A'|=N^4$, $k=N/2$, where $N=n(2^{d+1}-2)$ and $\max\!\Bigl(\max_{u\in A'} |u|,\ \max_{1\le t\le d} |m'_t|\Bigr)
		\;\le\;
		10^{\,\mathrm{poly}(n,d!)\cdot O(dN)}$.
		Let $F_p=\mathbb{F}_{p^\ell}$ with polynomial basis $\{1,\gamma,\gamma^2,\dots,\gamma^{\ell-1}\}$ and define
		$\psi:\mathbb{Z}\to F_p$ as in \cite{GandikotaGhaziGrigorescu2018} by encoding the base-$10$ digits of an integer
		as coefficients in the $\gamma$-basis.
		Construct the field instance $(A_F,k,m_{1,F},\dots,m_{d,F})$ by
		\[
		A_F := \{\psi(w):\ w\in A'\}\subseteq F_p,
		\qquad
		m_{t,F} := \psi(m_t')\in F_p\ \ \text{for all }t\in[d].
		\]
		For choices of $p = 2^{poly(N)}$,
		the integer instance is a YES-instance iff the resulting field instance is a YES-instance.
	\end{lemma}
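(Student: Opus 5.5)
The statement at the very end is Lemma~\ref{lem:Z_to_Fq}: the map $\psi$, which writes base-$10$ digits as $\gamma$-coefficients, turns the integer ${\rm MSS}(d)$ instance into an equivalent instance over $\F_{p^\ell}$ when $p = 2^{\poly(N)}$. The plan is to show that for every candidate $k$-subset, the $d$ power-sum equations hold over $\mathbb Z$ if and only if they hold over the field after applying $\psi$.

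The mechanism is a ``no carries'' argument. Write each $w\in A'$ in base $10$ as $w=\sum_j w_j 10^j$, and view it as the integer polynomial $P_w(X)=\sum_j w_j X^j$. Then $\psi(w)=P_w(\gamma)$, with the digits reduced into the prime subfield $\F_p$; negative integers are handled by a sign, i.e.\ negating all digits. Since $\psi$ is not a ring homomorphism, I will compare the polynomials $\sum_{s\in S}P_s(X)^t$ and $P_{m'_t}(X)$ rather than compare through $\psi$ directly. With this in place, the proof has three steps.

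\textbf{Step 1: bound the polynomials.} Each $P_s^t$ has degree at most $tD$, where $D=\mathrm{poly}(n,d!)\cdot O(dN)$ bounds the digit length. Its integer coefficients are bounded by $(9(D+1))^t$. Summing over $|S|=k\le N$ elements gives coefficients of absolute value at most $N(9(D+1))^d = 2^{\mathrm{poly}(N)}$, since $d\le \log N$ in the relevant regime.

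\textbf{Step 2: choose the field.} Choose $p=2^{\mathrm{poly}(N)}$ larger than twice this coefficient bound. Choose $\ell > dD$ and $\gamma$ with minimal polynomial of degree $\ell$, so that $1,\gamma,\dots,\gamma^{dD}$ are linearly independent over $\F_p$. Then the identity $\sum_{s}P_s(\gamma)^t=P_{m'_t}(\gamma)$ in $\F_{p^\ell}$ holds if and only if the integer polynomials $Q_t:=\sum_s P_s^t - P_{m'_t}$ vanish mod $p$ coefficientwise. Because the coefficients are small relative to $p$, this is the same as $Q_t=0$ in $\mathbb Z[X]$.

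\textbf{Step 3: lift back to $\mathbb Z$.} Evaluating $Q_t=0$ at $X=10$ gives $\sum_s s^t=m'_t$, which is the direction field~YES $\Rightarrow$ integer~YES. The converse is the main obstacle. The integer identity $\sum_s s^t=m'_t$ does not imply equality of digit polynomials, because carries break it. Evaluating at $10$ is injective only on polynomials with digits in $[0,9]$, and $\sum_s P_s^t$ may have large coefficients.

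To handle this, I would first try the GGG trick of spreading digits: replace base $10$ by a sparse base $B=10^{c}$ for which, for the structured instances of Lemma~\ref{lem:13sat_to_MSS_Q}, no carries occur, and set $m'_t$ to be the carry-free digit string. Failing that, I would define $m_{t,F}$ directly as $\sum_{s\in S_0}P_s(\gamma)^t$ for the canonical witness structure. The cleanest route is to observe that in the GGG construction the targets $m_t$ are \emph{defined} via this digit-polynomial expression. Under that reading, integer YES $\Rightarrow$ field YES holds by linearity, and only the injectivity argument of Steps 1--2 is needed.

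In summary, I expect carry control in the direction integer~YES $\Rightarrow$ field~YES to be the crux. Checking that $d!$ and $\ell$ stay polynomial is routine.
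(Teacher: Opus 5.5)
Your Steps 1--2 are sound. With $\ell>dD$ and $p$ above the coefficient bound, the field equations for a $k$-subset $S$ are equivalent to the identities $Q_t=\sum_{s\in S}P_s^t-P_{m'_t}=0$ in $\mathbb Z[X]$, and evaluating at $X=10$ gives field YES $\Rightarrow$ integer YES.

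\textbf{The gap.} You never prove the other direction. For the digit map with $m_{t,F}=\psi(m'_t)$ it cannot be proved, because an integer witness only gives $Q_t(10)=0$. Concretely, take $d=1$, $n=2$ (so $N=4$, $k=2$, $|A'|=256$), $A'=\{5,6\}\cup\{100,\dots,353\}$ and $m'_1=11$.
\begin{itemize}
\item The integer instance is YES via $\{5,6\}$.
\item In the field, $\psi(5)+\psi(6)=11\neq 1+\gamma=\psi(11)$.
\item Every other pair has a nonzero $\gamma^2$-coefficient, so the field instance is NO.
\end{itemize}

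None of your patches repairs this.
\begin{itemize}
\item Defining $m_{t,F}$ from a witness $S_0$ is not a reduction. It requires already solving the instance you are reducing, and it is undefined on NO instances.
\item Spreading digits, or appealing to carry-free structure of the GGG instances, uses hypotheses the lemma does not have; it is stated for arbitrary integer instances.
\item The instances actually produced by the paper's chain have been multiplied by the common denominator $L$ (Lemma~\ref{lem:Q_to_Z}) and padded with a block of consecutive integers (Lemma~\ref{lem:mss_padding_fixed_k}). No usable digit structure survives.
\end{itemize}

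\textbf{The missing idea.} Use the hypothesis $p=2^{\mathrm{poly}(N)}$ for its actual purpose: work in the prime subfield and let $\psi(w)=w \bmod p$. This is a ring homomorphism, so carries never arise.
\begin{itemize}
\item With $M=10^{\mathrm{poly}(n,d!)\cdot O(dN)}$, every quantity $\sum_{s\in S}s^t-m'_t$ has absolute value at most $kM^d+M$.
\item Since $d!\le\mathrm{poly}(n)$, this bound is $2^{\mathrm{poly}(N)}$. By Bertrand's postulate there is a prime $p>2(kM^d+M)$ of that size.
\item Then $\psi$ is injective on $A'$.
\item For every $k$-subset $S$ and $t\le d$, $\sum_{s\in S}\psi(s)^t=\psi(m'_t)$ in $\F_p$ iff $p$ divides $\sum_{s\in S}s^t-m'_t$, iff $\sum_{s\in S}s^t=m'_t$. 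This gives both directions at once.
\end{itemize}
Passing to $\F_{p^\ell}$ is then just the canonical embedding $a\mapsto a\cdot 1$, as in the remark following the lemma. The $\gamma$-basis plays no role in correctness. Note that your own argument only needed $p$ quasi-polynomial, so it never used the exponential size of $p$, which is exactly what makes the reduction faithful.
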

	
	Any instance over $\mathbb{F}_p$ can be lifted to an instance over $\mathbb{F}_{p^\ell}$ by applying the canonical embedding (sending $a\mapsto a\cdot 1$), which preserves the YES/NO answer.

\end{document}